\newcommand{\nc}{\newcommand}
\DeclarePairedDelimiter{\ket}{\lvert}{\rangle}
\DeclarePairedDelimiter{\bra}{\langle}{\rvert}
\DeclarePairedDelimiter{\norm}{\lVert}{\rVert}
\DeclarePairedDelimiter{\abs}{\lvert}{\rvert}
\DeclareMathOperator{\diag}{diag}
\newtheorem{lemma}{Lemma}
\newtheorem{theorem}{Theorem}
\newtheorem{corollary}{Corollary}
\newtheorem{definition}{Definition}
\newtheorem{prob}{Problem}
\newtheorem{prob*}{Problem*}
\newtheorem{remark}{Remark}
\def\be{\begin{eqnarray}}
\def\ee{\end{eqnarray}}
\def\Gsoft{G^{\mathrm{soft}}}
\def\dpoly{\ell_{\mathrm{poly}}}
\def\Ures{U_{\mathrm{res}}}
\def\Uln{U_{\mathrm{LN}}}
\def\relu{\mathrm{ReLU}}
\def\Sinput{(\alpha_s,a_s,\epsilon_s)}
\def\Vinput{(\alpha_v,a_v,\epsilon_v)}
\def\Ginput{(\alpha_g,a_g,\epsilon_g)}
\def\Winput{(\alpha_w,a_w,\epsilon_w)}
\def\QKinput{(\alpha_{0},a_{0},\epsilon_{0})}
\definecolor{Pr}{rgb}{0.4,0.3,0.9}
\definecolor{NG}{rgb}{0.66,0.29,0.48}
\newcommand{\naixu}[1]{{\color{Violet} #1}}
\nc{\tr}{\operatorname{tr}}
\nc{\polylog}{\operatorname{polylog}}
\nc{\ox}{\otimes}
\nc{\dg}{\dagger}
\nc{\dn}{\downarrow}
\nc{\cA}{{\mathcal A}}
\nc{\cB}{{\mathcal B}}
\nc{\cC}{{\mathcal C}}
\nc{\cD}{{\mathcal D}}
\nc{\cE}{{\mathcal E}}
\nc{\cF}{{\mathcal F}}
\nc{\cG}{{\mathcal G}}
\nc{\cH}{{\mathcal H}}
\nc{\cI}{{\mathcal I}}
\nc{\cJ}{{\mathcal J}}
\nc{\cK}{{\mathcal K}}
\nc{\cL}{{\mathcal L}}
\nc{\cM}{{\mathcal M}}
\nc{\cN}{{\mathcal N}}
\nc{\cO}{{\mathcal O}}
\nc{\cP}{{\mathcal P}}
\nc{\cQ}{{\mathcal Q}}
\nc{\cR}{{\mathcal R}}
\nc{\cS}{{\mathcal S}}
\nc{\cT}{{\mathcal T}}
\nc{\cU}{{\mathcal U}}
\nc{\cV}{{\mathcal V}}
\nc{\cX}{{\mathcal X}}
\nc{\cY}{{\mathcal Y}}
\nc{\cZ}{{\mathcal Z}}
\nc{\cW}{{\mathcal W}}
\nc{\RR}{{{\mathbb R}}}
\nc{\CC}{{{\mathbb C}}}
\nc{\FF}{{{\mathbb F}}}
\nc{\NN}{{{\mathbb N}}}
\nc{\ZZ}{{{\mathbb Z}}}
\nc{\PP}{{{\mathbb P}}}
\nc{\QQ}{{{\mathbb Q}}}
\nc{\UU}{{{\mathbb U}}}
\nc{\EE}{{{\mathbb E}}}
\newcommand{\utchem}{Department of Chemistry, University of Toronto, Toronto, Ontario M5G 1Z8, Canada}
\newcommand{\utcomp}{Department of Computer Science, University of Toronto, Toronto, Ontario M5S 2E4, Canada}
\newcommand{\vectorinst}{Vector Institute for Artificial Intelligence, Toronto, Ontario M5S 1M1, Canada}
\newcommand{\cifar}{Lebovic Fellow, Canadian Institute for Advanced Research, Toronto, Ontario M5G 1Z8, Canada}
\newcommand{\material}{Department of Materials Science \& Engineering, University of Toronto, Toronto, Ontario M5S 3E4, Canada}
\newcommand{\chemical}{Department of Chemical Engineering \& Applied Chemistry, University of Toronto, Toronto, Ontario M5S 3E5, Canada}
\newcommand{\aist}{
    Research Center for Emerging Computing Technologies, National Institute of Advanced Industrial Science and Technology (AIST), 1-1-1 Umezono, Tsukuba, Ibaraki 305-8568, Japan}
\newcommand{\keio}{
    Quantum Computing Center, Keio University, 3-14-1 Hiyoshi, Kohoku-ku, Yokohama, Kanagawa, 223-8522, Japan}
\date{\today}
\begin{abstract}

Powerful generative artificial intelligence from large language models (LLMs) harnesses extensive computational resources for inference.  
In this work, we investigate the transformer architecture, a key component of these models, under the lens of fault-tolerant quantum computing.
We develop quantum subroutines to construct the building blocks in the transformer, including the self-attention, residual connection with layer normalization, and feed-forward network.
As an important subroutine, we show how to efficiently implement the Hadamard product and element-wise functions of matrices on quantum computers. 
Our algorithm prepares an amplitude encoding of the transformer output, which can be measured for prediction or use in the next layer.
We find that the matrix norm of the input sequence plays a dominant role in the quantum complexity.
With numerical experiments on open-source LLMs, including for bio-informatics applications, we demonstrate the potential of a quantum speedup for transformer inference in practical regimes.
\end{abstract}
\begin{document}
\title{Quantum Transformer:\protect{\,}Accelerating model inference via quantum linear algebra}

\author{Naixu Guo}
\email{naixug@u.nus.edu}
\affiliation{Centre for Quantum Technologies, National University of Singapore, 117543, Singapore}

\author{Zhan Yu} 
\email{yu.zhan@u.nus.edu}
\affiliation{Centre for Quantum Technologies, National University of Singapore, 117543, Singapore}

\author{Matthew Choi}
\affiliation{\utcomp}
\affiliation{\vectorinst}

\author{Yizhan Han} 
\affiliation{School of Computing, National University of Singapore, 117417, Singapore}

\author{Aman Agrawal}
\affiliation{Department of Mathematics, National University of Singapore, 119076, Singapore}

\author{Kouhei Nakaji}
\affiliation{NVIDIA Corporation, 2788 San Tomas Expressway, Santa Clara, 95051, CA, USA}
\affiliation{\utchem}
\affiliation{\aist}
\affiliation{\keio}

\author{Al\'an Aspuru-Guzik}
\affiliation{NVIDIA Corporation, 2788 San Tomas Expressway, Santa Clara, 95051, CA, USA}
\affiliation{\utcomp}
\affiliation{\vectorinst}
\affiliation{\utchem}
\affiliation{\material}
\affiliation{\chemical}
\affiliation{\cifar}

\author{Patrick Rebentrost}
\email{cqtfpr@nus.edu.sg}
\affiliation{Centre for Quantum Technologies, National University of Singapore, 117543, Singapore}
\affiliation{School of Computing, National University of Singapore, 117417, Singapore}

\maketitle

\section{Introduction}

The transformer has emerged as the dominant architecture for large-scale generative artificial intelligence models
\cite{vaswani2017attention, openai2023gpt4}.
Designed to ``learn what to pay attention to'', the transformer employs self-attention mechanisms that effectively capture correlations between different parts of input sequences through dot-product computations \cite{bahdanau2015neural,vaswani2017attention}.
Transformers have been adopted for numerous downstream tasks, including text generation, question answering, and other domains like genomic data analysis \cite{radford2018improve, devlin2019bert, radford2019language, bubeck2023sparks,ji2021dnabert}.
A key challenge lies in the substantial computational resources required by transformer architectures ~\cite{patterson2021carbonemissionslargeneural}.
While training is resource-intensive, the cumulative cost of inference can significantly exceed it, as models trained once undergo extensive deployment~\cite{DESISLAVOV2023100857, McDonald_2022}.
These inference costs, in terms of both time and energy, are becoming increasingly acute, particularly with the rise of large-scale models performing complex reasoning tasks~\cite{openai2024openaio1card, deepseekai2025deepseekr1incentivizingreasoningcapability}.
Therefore, it is crucial to develop methods to enhance the efficiency of transformer inference.

Quantum computing has been investigated for a variety of linear-algebra tasks.
Seminal works are on the solution of linear systems and other matrix operations~\cite{harrow2009quantum,PhysRevLett.109.050505}, which can be applied to traditional machine learning methods such as the support vector machine and recommendation systems~\cite{PhysRevLett.113.130503,kerenidisQuantumRecommendationSystems2016}.
A quantum algorithm for optimizing neural networks by solving differential equations via linearization was shown recently~\cite{liu2024provably}.
Randomized classical algorithms show that using sampling-based input assumptions, quantum speedups are polynomial for many applications \cite{tang2019quantuminspire, Tang_2021, chia2022sampling}. 
Variational quantum circuits, as a quantum analog of neural networks, have been widely explored~\cite{peruzzo2014variational, PhysRevA.98.032309, PhysRevLett.122.040504, cerezo2021variational}, often without provable advantages~\cite{mcclean2018barren, larocca2024review, cerezo2024doesprovableabsencebarren}.
Significant progress in hardware has improved both the quantity and quality of quantum bits (qubits)~\cite{egan2021faulttolerant, acharya2023suppressing}, with recent experiments encoding tens of logical qubits~\cite{Bluvstein2024logical}.
Leveraging these advancements in quantum computation offers a promising pathway to potentially address the significant computational demands of advanced machine learning models.

In this work, we show progress towards an end-to-end transformer architecture implementable on a quantum computer. 
We work in the fault-tolerant model of quantum computation and use the modular framework of block encodings~\cite{low2017optimal, low2019hamiltonian, gilyen2019quantum}.
We assume a classical transformer architecture that has already been trained and focus on the inference process.
We develop efficient quantum subroutines for all the key building blocks of a transformer and combine them into a complete architecture.
Self-attention, residual connection with layer normalization, and feed-forward network are implemented via
the toolbox of quantum linear algebra, including our new method for implementing
element-wise functions of block-encoded matrices.
We analyze the run-time and input assumptions to verify the potential for a quantum speedup for both single- and multilayer structures, combined with performing various numerical experiments on several open-source large language and DNA models with size from millions to billions of parameters.
Hence, our algorithms promise fast inference with fault-tolerant quantum computers and could lead to cost savings in key applications.

\begin{figure*}[htpb!]
\includegraphics[width=0.9\linewidth]{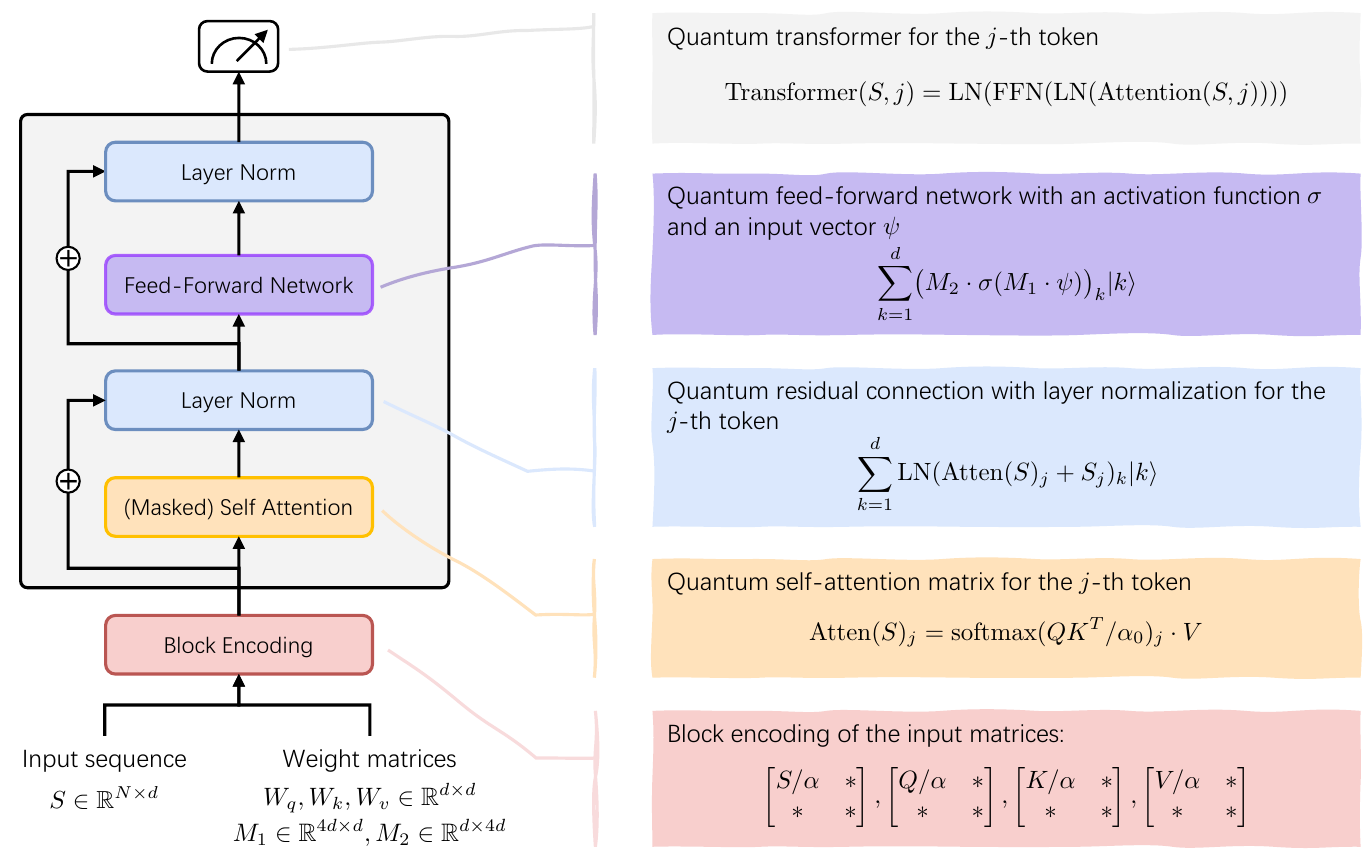}
\caption{\textbf{Overview of the quantum transformer architecture.} 
Same as the original decoder-only transformer architecture, the quantum transformer consists of a self-attention and a feed-forward network sub-layer, incorporating residual connection with layer normalization. The inputs of the quantum transformer are block encodings of the input sequence and pre-trained weight matrices, from which the relevant matrices for the transformer are constructed (query $Q$, key $K$, and value $V$). Given the input block encodings, we construct the corresponding quantum subroutines and combine them to our final result on obtaining the classical output vector corresponding to the $j$-th token. multilayer architecture can be achieved by iterating the procedure for each token $j\in [N]$ and producing a new block encoding of input sequence for the next layer.
}\label{figure_transformer_main}
\end{figure*}

\section{Results}

We first describe the inference of the pre-trained model.
The input to a transformer model typically consists of a sequence of $N$ tokens, each of which is represented by a $d$-dimensional vector via token embeddings~\cite{kudo2018sentencepiece, mielke2021words}, resulting in a matrix of the input sequence $S\in \mathbb{R}^{N\times d}$.
Note that in practice, $N$ is much larger than $d$.
A single-layer transformer consists of a self-attention sub-layer and a feed-forward network (FFN), both of which are followed by a residual connection with layer normalization (LN).
For the multilayer case, the computation is iterated several times to get the final output.
The output of the transformer is a $d$-dimensional vector corresponding to querying the $j$-th input token for $j \in [N]$, which can be further post-processed depending on the task it is applied to.
Formally, one can write the output vector as
\begin{equation*}
        \mathrm{Transformer}(S, j) \coloneqq \mathrm{LN}(\mathrm{FFN}(\mathrm{LN}(\mathrm{Atten}(S))))_j.
\end{equation*}
The subscript $j$ of a matrix denotes the $j$-th row of the matrix, and the subscript $j$ of a vector denotes the $j$-th element in the vector.

We propose the implementation of a single-layer transformer on a quantum computer,
which produces a quantum state corresponding to the output vector of the classical transformer. More details can be seen in \cref{figure_transformer_main}.
\begin{theorem}[Quantum transformer, informal]\label{thmTransformer.informal_main}
For a transformer with embedding dimension $d$ and an input sequence $S$ of length $N$, given access to the sequence matrix and weight matrices via block-encodings, for the index $j\in [N]$, one can construct a quantum circuit that prepares the state
\begin{align}
\sum_{k=1}^d \mathrm{Transformer}(S,j)_k \ket{k}, \label{eq.quantum.transformer}
\end{align}
up to error $\epsilon$ by using ${\mathcal{\widetilde{O}}}(\sqrt{N} d \log^2(1/\epsilon))$ times of the input block encodings. 
\end{theorem}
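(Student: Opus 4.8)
The plan is to mirror the compositional structure of the classical transformer, realizing each sub-layer as a block-encoded linear map and threading the block encodings through the full pipeline, before a single amplitude-amplification step extracts the desired output vector.

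First I would handle self-attention. From the input block encodings of $S$ and the weights $W_Q,W_K,W_V$, block-encoding multiplication yields block encodings of $Q=SW_Q$, $K=SW_K$, and $V=SW_V$, and hence of the score matrix $A = QK^\top/\sqrt{d}$. The crux is the row-wise softmax. I would apply the element-wise function subroutine established earlier to implement $x\mapsto e^{x}$ on the entries of $A$, and then normalize each row by its sum $Z_i=\sum_j e^{A_{ij}}$. The row sum is accessed by contracting the exponentiated matrix against the uniform vector, and its reciprocal is realized by quantum singular value transformation for $1/x$~\cite{gilyen2019quantum}, provided a lower bound on $Z_i$ is available to keep the inversion well-conditioned. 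Multiplying the normalized weights by $V$ produces a block encoding of $\mathrm{Atten}(S)$.

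Second I would add the residual connection as a block-encoded sum $\mathrm{Atten}(S)+S$ and apply layer normalization. Per row, this centers by the mean $\mu_i$ (a linear, hence block-encodable, operation) and divides by the standard deviation $\sigma_i$; the latter again uses singular value transformation, now for the inverse-square-root of the per-row variance. The feed-forward network is then a pair of affine maps with an element-wise nonlinearity between them, the nonlinearity handled by the same element-wise machinery and the affine maps by block-encoding multiplication, followed by a second layer normalization implemented identically. Finally, fixing the row index $j$, I would use fixed-point amplitude amplification to boost the subnormalized block-encoded $j$-th output, yielding the state in \cref{eq.quantum.transformer}.

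For the complexity, each element-wise function and each reciprocal or inverse-square-root transformation is realized by a polynomial of degree $\tOrd{\log(1/\epsilon)}$ through singular value transformation; the two normalization layers and the nonlinearity compose to give the $\log^2(1/\epsilon)$ factor, while the embedding dimension $d$ is tracked through the successive matrix products. The dominant $\sqrt{N}$ arises from the subnormalization carried by the block encodings, which is governed by the matrix norm of the input sequence and scales as $\sqrt{N}$ in the regime $N\gg d$; one round of amplitude amplification to deliver a normalized output state therefore costs $\tOrd{\sqrt{N}}$ queries. I expect the main obstacle to be the faithful, error-controlled implementation of the two data-dependent divisions — the softmax row normalization and the layer-norm scaling. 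Each requires obtaining a block encoding of the relevant per-row scalar together with a guaranteed lower bound so that the inverse function is well-conditioned, and then propagating all subnormalization constants coherently across the composed sub-layers so that a single amplification round suffices and the stated $\sqrt{N}$ scaling is attained. This bookkeeping of subnormalization factors, rather than any single subroutine, is where the difficulty concentrates.
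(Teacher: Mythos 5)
Your modular plan---compose block encodings for attention, residual-plus-LN, and FFN, then amplify once at the end---matches the paper's architecture, and you correctly locate the dominant $\sqrt{N}$ in the subnormalization $\alpha_s=\Theta(\sqrt N)$ of the sequence matrix rather than in any search primitive. The genuine gap is in your treatment of the two data-dependent divisions. For the softmax you keep the classical scaling $QK^{\top}/\sqrt d$ and propose to apply $e^{x}$ element-wise and then invert the row sums $Z_i$ by QSVT for $1/x$, ``provided a lower bound on $Z_i$ is available.'' With block-encoding inputs this fails twice. First, the score matrix is only accessible as $QK^{\top}/\alpha_0$ with encoding factor $\alpha_0=\alpha_s^2\alpha_w^2=\Theta(N)$, so realizing the exponential at scale $\sqrt d$ means applying $x\mapsto e^{\alpha_0 x/\sqrt d}$ to the encoded entries; the required polynomial degree grows linearly in the rescaling $\alpha_0/\sqrt d=\mathrm{poly}(N)$ (and uniform pre-amplification of the encoding is similarly costly), destroying the claimed $\widetilde{\mathcal{O}}(\sqrt N\, d)$ query count. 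Second, at the $\sqrt d$ scale the entries $e^{A_{ij}/\sqrt d}$ range over an exponentially large interval, so the conditioning lower bound on $Z_i$ that your inversion needs does not exist in general. The paper removes both obstructions at once by \emph{redefining} the scaling factor to $\alpha_0=\alpha_s^2\alpha_w^2$ (see \cref{Attention.output} and the discussion in \cref{sec.attention}): the rescaled entries lie in $[-1,1]$, each exponential is $\Theta(1)$, and hence $Z_j=\Omega(N)$ holds automatically rather than by assumption.

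Beyond this, the paper never inverts $Z_j$ or the layer-norm variance at all, which is where your ``bookkeeping'' worry is actually resolved. In \cref{attention.softmax} it prepares the state with amplitudes proportional to $e^{A_{jk}/2\alpha}$ (note the half exponent); the division by $\sqrt{Z_j}$ is supplied for free by quantum state normalization after $\mathcal{O}(\sqrt{N/Z_j})=\mathcal{O}(1)$ rounds of amplitude amplification, and the true softmax amplitudes are then obtained by squaring via the Hadamard-product construction (\cref{Hadamard.blockencoding}) before multiplying by $V$. Likewise, in \cref{theorem.residual} the division by $\varsigma$ is not an inverse-square-root QSVT step: the centered vector has $L^2$ norm exactly $\varsigma$, so the normalized state \emph{is} the LN output, and $\varsigma$ enters only through the state-encoding subnormalization $\mathcal{O}(\sqrt d(\alpha_g+\alpha_s)/\varsigma)$, paid once in amplification under the (numerically supported) regime $\varsigma,\varsigma'=\Omega(1)$. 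Two smaller discrepancies: the $\log^2(1/\epsilon)$ factor arises as the product of the element-wise-exponential degree $\ell=\mathcal{O}(n\log(1/\epsilon))$ and the GELU polynomial degree in the FFN (\cref{theorem.ffn}), not from composing the two normalization layers (those contribute the $\sqrt d\cdot\sqrt d=d$ factor); and the paper implements the FFN nonlinearity by nonlinear amplitude transformation on a state encoding rather than your element-wise matrix machinery, which saves ancilla qubits when only a single vector is transformed.
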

The classical vector $\mathrm{Transformer}(S,j)$ can be obtained by measuring the state in \cref{eq.quantum.transformer} \cite{Kerenidis2020Quantum}.
One can generalize to the multilayer architecture by iterating the subroutine for every token $j\in [N]$ in each layer.
The complexity of implementing the $k$-layer quantum transformer is then $\mathcal{\widetilde{O}}(kN^{\frac{3}{2}}d)$.
For the informal theorem, we assume that norms of the input sequence and weight matrices scale as $\cO(\sqrt{N})$ and $\cO(1)$ respectively, which will be verified by numerical analysis shown later.

\subsection{Quantum linear algebra}

Here we introduce the quantum linear algebra used to achieve \cref{thmTransformer.informal_main}.
Basic quantum computational steps include unitary multiplication, tensor products, partial measurements, and post-selection, with their associated cost regarding qubits and circuit complexity.
Quantum linear algebra aims to perform general computations including non-linear ones in a subspace via basic quantum operations.
The so-called block encoding is a suitable framework for exploring the power of quantum linear algebra \cite{gilyen2019quantum}.

\textit{Block encoding.} We say a unitary $U_A$ is a block encoding of matrix $A$ if 
\begin{align}
    U_A=
    \begin{bmatrix}
        A/\alpha & \cdot \\
        \cdot & \cdot
    \end{bmatrix},
\end{align}
where $\alpha$ is an encoding factor with $\alpha \geq \Vert A\Vert$.
Given such access,
one can multiply the matrix $A/\alpha$ to a quantum state via post-selection.
Note that unitary is a block encoding of itself by definition.
As a special case when $A$ is a ($L^2$-normalized) vector/state $\psi$, we say that $U_{\psi}$ is a block encoding of $\psi$.

\textit{Quantum singular value transformation (QSVT)} \cite{gilyen2019quantum}. Given a block encoding $U_A$ of Hermitian matrix $A$ with encoding factor $\alpha$ and an $\ell$-degree polynomial function $f$, one can construct a block encoding of $f(A/\alpha)$ 
\begin{align}
    U_A=
    \begin{bmatrix}
        A/\alpha & \cdot \\
        \cdot & \cdot
    \end{bmatrix}
    \Longrightarrow
    U_{f(A)}=
     \begin{bmatrix}
        f(A/\alpha) & \cdot \\
        \cdot & \cdot
    \end{bmatrix},
\end{align}
using $\mathcal{O}(\ell)$ times of  $U_A$.
This method can be used for matrix function based applications like Hamiltonian simulation and linear equation solver \cite{low2017optimal, low2017quantum, Childs_2017}.

Many applications require element-wise operations of matrices, including the self-attention mechanism in the transformer architecture, which cannot be directly achieved via QSVT.
Here, we extend the toolbox of quantum linear algebra
to implement element-wise functions of block-encoded matrices (see Supplementary Material (SM) $\text{III}$.A and Methods for details).

\begin{theorem}[Element-wise function of block encodings, informal\label{thm.element-wise}]
Given access to block encoding of matrix $A$ and an $\ell$-degree polynomial function $f_{\ell}$,
one can construct a block encoding of $f_{\ell}\circ(A/\alpha)$ by using $\mathcal{O}(\ell)$ times the input unitary, where $\circ$ denotes that the function is implemented element-wisely.
\end{theorem}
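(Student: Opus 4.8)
The plan is to reduce the element-wise polynomial to a linear combination of \emph{Hadamard powers} and then block-encode each Hadamard power through a copy--apply--uncopy construction. Writing $f_\ell(x)=\sum_{m=0}^\ell c_m x^m$, the element-wise action factorizes as
\begin{align}
f_\ell\circ(A/\alpha)=\sum_{m=0}^\ell c_m\,(A/\alpha)^{\circ m},
\end{align}
where $(A/\alpha)^{\circ m}$ denotes the entrywise $m$-th power, i.e.\ the $m$-fold Hadamard product of $A/\alpha$ with itself. It therefore suffices to (i) block-encode each Hadamard power, and (ii) take their linear combination with total query cost $\cO(\ell)$.

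For step (i), I would exploit the identity relating the Hadamard product to the tensor product through the diagonal (fan-out) isometry $P_m:\ket{i}\mapsto\ket{i}^{\otimes m}$, which in the computational basis is implemented by CNOT copying and satisfies
\begin{align}
P_m^\dagger\,(A/\alpha)^{\otimes m}\,P_m=(A/\alpha)^{\circ m}.
\end{align}
Concretely, given the input block encoding $U_A$ of $A/\alpha$, I would copy the system register into $m$ registers with CNOTs, apply $U_A$ on each copy (they carry no shared data, so the top-left blocks multiply entrywise), uncopy with CNOTs, and post-select the copy registers together with all block-encoding ancillas on $\ket{0}$. A direct computation shows that the surviving amplitude on basis state $\ket{k}$ equals $\sum_i (A/\alpha)_{ki}^{\,m}\psi_i$, so this unitary is a block encoding of $(A/\alpha)^{\circ m}$ with encoding factor $1$, using $m$ calls to $U_A$.

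For step (ii), a naive term-by-term linear combination of unitaries (LCU) would cost $\sum_{m\le \ell} m=\cO(\ell^2)$ queries, so the crux is to share the copies across all powers. I would introduce a counter/coefficient register prepared (via the LCU PREPARE step) in a superposition whose amplitudes encode $\{c_m\}$, and then run a single \emph{staircase} of $\ell$ controlled operations: for $t=1,\dots,\ell$, controlled on the counter being at least $t$, copy the input into register $t$ and apply $U_A$ there. In the branch labelled $m$ only the first $m$ copies are activated, reproducing $(A/\alpha)^{\circ m}$, while the inactive registers remain in $\ket{0}$ and hence do not constrain the Hadamard post-selection. Completing the LCU with the uncopy, the PREPARE$^\dagger$ step, and post-selection yields a block encoding of $\sum_m c_m (A/\alpha)^{\circ m}=f_\ell\circ(A/\alpha)$ with encoding factor $\sum_m\abs{c_m}$ and only $\cO(\ell)$ calls to $U_A$.

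The main obstacle I anticipate is twofold. First, one must verify that the copy--uncopy together with the post-selection projects exactly onto the diagonal and therefore realizes the \emph{element-wise} product rather than ordinary matrix multiplication or the full tensor product; this is where the isometry identity above must be checked carefully, including that inactive copies in the staircase are genuinely returned to $\ket{0}$ so they do not collapse the surviving terms to a spurious delta. Second, achieving the claimed $\cO(\ell)$ (rather than $\cO(\ell^2)$) query complexity hinges on the shared-copy staircase, and one must track the resulting encoding factor $\sum_m\abs{c_m}$ and the associated post-selection success probability to confirm they do not hide extra overhead.
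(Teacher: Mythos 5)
Your construction matches the paper's proof in its two essential moves: the copy--apply--uncopy circuit you describe is exactly conjugation of $U_A^{\otimes m}$ by the CNOT fan-out $P=\sum_{i,j}\ket{i}\bra{i}\otimes\ket{i\oplus j}\bra{j}$ used in the paper's Hadamard-product theorem, and the shared-query LCU is the same idea the paper uses to avoid the naive $\cO(\ell^2)$ cost. The only substantive difference is how the SELECT oracle is compiled: you use a unary (``thermometer'') staircase with $\ell$ controlled applications of $U_A$, whereas the paper follows the trick of Lemma~8 in Ref.~\cite{Childs_2017} and builds only the dyadic Hadamard powers $A^{\circ 2^j}$ for $j\leq \lfloor\log \ell\rfloor$, combining them by binary decomposition at cost $\cO\bigl(\sum_j 2^j\bigr)=\cO(\ell)$. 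The two compilations are equivalent in query complexity, and the concern you raise about inactive copies being returned to $\ket{0}$ is exactly the right check: it is satisfied because registers with $t>m$ are never touched in branch $m$, so post-selecting them on $\ket{0}$ succeeds trivially and does not introduce a spurious delta.

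There is, however, one genuine slip: your decomposition starts at $m=0$, but your $m=0$ branch applies nothing to the system register and therefore contributes $c_0 I$, whereas the entrywise zeroth power of a matrix is the all-ones matrix $J$, so the constant term must contribute $c_0 J$, not $c_0 I$. The paper avoids this by defining $f_\ell(x)=\sum_{j=1}^{\ell}c_j x^j$ without a constant term and treating $c_0\neq 0$ separately via an explicit $(N,1,0)$-block-encoding of $J$ (written as a linear combination of the identity and a reflection); this is precisely why the encoding factor acquires the dimension-dependent contribution $Nc_0$ and why the theorem's remark states that the query complexity is dimension-independent only when the polynomial has no constant term. Restricting your sum to $m\geq 1$, or grafting in the paper's $J$ gadget with its factor $N$, repairs the proposal; for a complete argument you would also need the routine error propagation (each $(A/\alpha)^{\circ m}$ inherits error $m\epsilon/\alpha$ from an $(\alpha,a,\epsilon)$-encoding, summing to $\frac{\epsilon}{\alpha}\sum_m \abs{c_m}m$), which the informal statement elides but the paper's formal version tracks.
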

Note that this query complexity is independent of the dimension of the matrix if the polynomial has no constant term.

\subsection{Quantum transformer architecture}

Here, we describe how to implement blocks of the transformer via quantum circuits for linear algebra.
We assume that the inputs of the quantum transformer are the block encodings of input sequence matrix $S$, weight matrices $W_q,W_k,W_v$ and $M_1, M_2$.
We denote the encoding factor of the input sequence matrix and weight matrices by $\alpha_s, \alpha_w$ and $\alpha_m$, respectively. Given the sentence $S$, 
the convention is to call $Q\coloneqq S W_q$, $K\coloneqq SW_k$, and $V\coloneqq SW_v$ the query, key, and value matrices respectively. 
The target is to prepare a quantum state as \cref{eq.quantum.transformer}.

\textit{Quantum self-attention.} 
The scaled dot-product self-attention \cite{vaswani2017attention} is arguably the transformer's most important block, where correlations among the sequence are estimated.
The self-attention matrix is defined as
\begin{align}
    \mathrm{Atten}(S) \coloneqq \mathrm{softmax}(QK^T/\alpha_0)\cdot V.
\end{align}
The softmax function is implemented row-wise, which converts a real vector into a Gibbs distribution.
Here $\alpha_0$ is a scaling factor, where many works have shown different choices for $\alpha_0$ leading to performance improvements \cite{NEURIPS2021_8df7c2e3, ma2024era}.

Given the block encodings of the input sequence matrix and the weight matrices, for the index $j \in [N]$, we show that one can construct a block encoding of the $j$-th row vector of $ \mathrm{Atten}(S)$, denoted as $\mathrm{Atten}(S)_j$. 
The main challenge is implementing the softmax function.
We achieve this implementation by reducing it to a variant of Gibbs state preparation with element-wise function method described as \cref{thm.element-wise}.
Overall, we achieve the query complexity of this subroutine as $T_\mathrm{atten} = \mathcal{\widetilde{O}}(\alpha_s\alpha_w\log(1/\epsilon))$.
The details of the quantum self-attention and other variants like the masked self-attention can be seen in Methods and SM III.C.

\textit{Quantum residual connection with layer normalization.}
Given an index $j\in [N]$ and block encodings of row vectors $\mathrm{Atten}(S)_j$ and $S_j$, one can construct a block encoding of the (unnormalized) state
\begin{align}
    \sum_{k=1}^d \mathrm{LN}(\mathrm{Atten}(S)_j, S_j)_k \ket{k},
\end{align}
where $\mathrm{LN}(\cdot,\cdot)$ takes vectors as input, and standardizes the summed vector with zero mean and unit variance. Details are provided in SM III.D.
This subroutine uses $T_\mathrm{LN} = \mathcal{\widetilde{O}}(\sqrt{d})$ times of the input unitaries.

\begin{figure*}[t]
\includegraphics[width=0.9\linewidth]{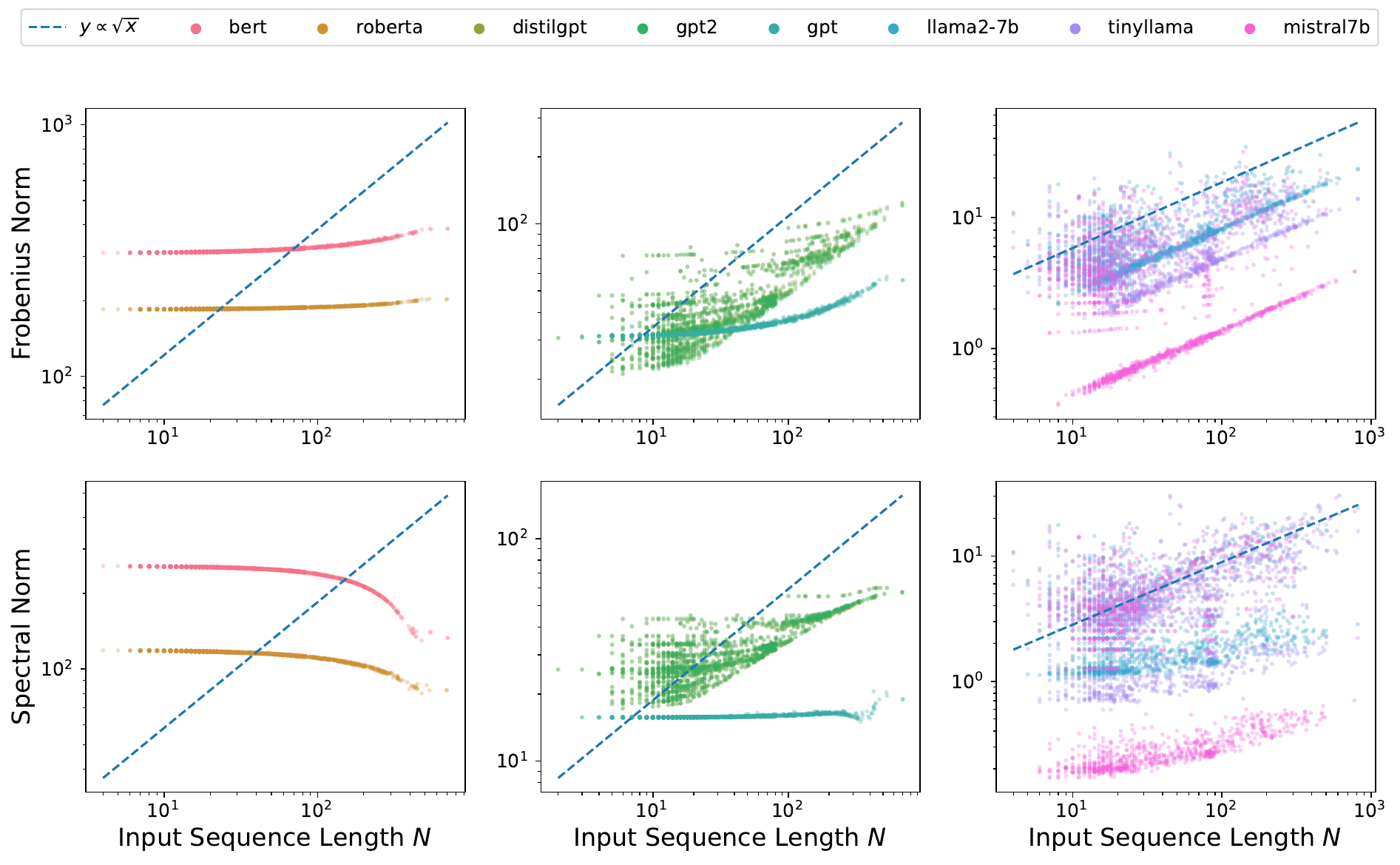}
  \caption{Scaling of the spectral norm $\|S\|$ and the Frobenius norm $\|S\|_{F}$ with $N$ for each model, displayed on logarithmic scales for both axes. For reference, the line $y \propto \sqrt{x}$ is also shown. We use tokens in MMLU dataset and convert them to $S$.
  The embedding dimension $d$ is $768$ for BERT \cite{devlin2019bert}, RoBERTa \cite{liu2019roberta}, GPT \cite{radford2018improve}, DistilGPT \cite{sanh2019distilbert} and GPT2 \cite{radford2019language}; $2048$ for TinyLlama \cite{zhang2024tinyllamaopensourcesmalllanguage}; and $4096$ for both Llama2-7B \cite{touvron2023llama2openfoundation} and Mistral-7B \cite{jiang2023mistral7b}.
  }
  \label{fig:MMLU_dataset_fro_spe_main}
\end{figure*}

\textit{Quantum feed-forward network.}
Given a state encoding $U_{\psi}$ of an $n$-qubit state $\ket{\psi}$ whose amplitudes are proportional to a vector $\psi$, an activation function $\sigma$, and real matrices $M_1$ and $M_2$, one can prepare a state encoding of the quantum state 
\begin{align}
\ket \phi =& \frac{1}{C}\sum_{k=1}^{d} \mathrm{FFN}(M_1,M_2,\psi)_k\ket{k},
\end{align}
where $\mathrm{FFN}(M_1,M_2,\psi) \coloneqq M_2\cdot\sigma(M_1\cdot\psi)$ and $C$ is the normalization factor.
This operation can be achieved by using the nonlinear amplitude transformation \cite{guo2021nonlinear, rattew2023nonlinear}.
Here, we explicitly consider the GELU (Gaussian Error Linear Units) function as the activation function $\sigma$, which has been widely used in practice \cite{hendrycks2017bridging}.
We achieve this step with query complexity independent of the embedding dimension $d$.
As the weight matrix normalization method has been well explored with many benefits \cite{miyato2018spectral, zhang2019selfattentiongenerativeadversarialnetworks}, we consider $\alpha_m=\mathcal{O}(1)$.
The subroutine therefore uses $T_\mathrm{FFN} = \mathcal{\widetilde{O}}(\log(1/\epsilon))$ times of the input block encodings. A proof sketch is provided in Methods and details are presented in SM $\text{III}$.E.

Combining all these blocks together, we can obtain the final target state
\begin{equation*}
    \sum_{k=1}^d \mathrm{Transformer}(S,j)_k \ket{k}
\end{equation*}
up to error $\epsilon$ using the input block encodings for $$T_\mathrm{atten} \cdot T_\mathrm{LN} \cdot T_\mathrm{FFN} \cdot T_\mathrm{LN} =  {\mathcal{\widetilde{O}}}(  \alpha_s\alpha_w d\log^2(1/\epsilon))$$ times in total.
The computation procedures can be mainly divided into two types: row-wise arithmetic, and matrix arithmetic.
The insight of our result is that we may achieve row-wise arithmetic with no dependency on the dimension, including the softmax and nonlinear activation function. 
However, matrix arithmetic like multiplication depends on the matrix norm (encoding factor), which limits the runtime of quantum transformer implementation.

\subsection{Numerical analysis}

In order to provide evidence of our assumptions, we perform numerical experiments on training and benchmarking several transformer-based large machine learning models with size from millions to billions of parameters.
The omitted details are provided in SM IV.

The complexity of our quantum transformer mainly depends on the encoding factors $\alpha_s$ and $\alpha_w$. The result in \cref{thmTransformer.informal_main} holds directly under the assumption of $\alpha_s = \cO(\sqrt{N})$ and $\alpha_w = \cO(1)$. For an arbitrary matrix $A$, one can construct its block encoding with an encoding factor $\alpha=O(\|A\|_F)$ given access to quantum Random Access Memory (QRAM) and a quantum data structure \cite{lloyd2014quantum, kerenidisQuantumRecommendationSystems2016}. Also, recall that $\alpha\geq \|A\|$ by definition. 

\begin{table}[t]
    \centering
    \begin{tblr}{lccc}
    \toprule
    \textbf{Model} &
    \textbf{Dimension} $d$ & \textbf{Mean}  & \textbf{Var.}    \\
    \midrule
    GPT2 & $768$ & $3.6973$ & $1.5615$\\
    GPT2-medium & $1024$ & $3.4570$ & $0.8457$\\
    GPT2-large & $1280$ & $1.7617$ & $0.1929$\\
    GPT2-xl & $1600$ & $1.6289$ & $0.1406$\\
    \midrule
    TinyLlama & $2048$ & $0.6973$ & $0.1692$\\
    Llama2-7b  & $4096$ & $1.3486$ & $0.0901$\\
    Mistral-7b & $4096$ & $0.1576$ & $0.0047$ \\
    \bottomrule
    \end{tblr}
    \caption{The $L^2$-norm of column vectors in weight matrices from different large language models.}
    \label{table.norm}
\end{table}
We first investigate the input sequence matrix $S \in \RR^{N \times d}$, which introduces the dependency on $N$.
We consider input data in real-world applications sampled from the widely-used Massive Multitask Language Understanding (MMLU) dataset~\cite{hendry2021measuring}.
The scaling of the spectral norm and the Frobenius norm of $S$ in the MMLU dataset is demonstrated in \cref{fig:MMLU_dataset_fro_spe_main}. We can find that the matrix norms of the input matrix of all LLMs scale at most as $\mathcal{O}(\sqrt{N})$. 
From \cref{fig:MMLU_dataset_fro_spe_main}, one can observe that the matrix norms do not show a dependence on $d$, since the Llama2-7b and Mistral-7b with larger embedding dimension have smaller matrix norms than other models like BERT and GPT.

Next we consider the norms of weight matrices $W_q,W_k,W_v \in \RR^{d \times d}$ in the large language models. We study the $L^2$-norm of the column vectors in the weight matrices with various embedding dimensions.
For each model, the mean and variance are calculated among the weight matrices and across all layers.
As shown in \cref{table.norm}, there is a trend that as the embedding dimension $d$ increases, both the mean and variance of the $L^2$-norm of column vectors in weight matrices decrease, especially for the GPT2 family \cite{radford2019language}.
Thus one can reasonably assume that the $L^2$-norm of column vectors is upper bounded by a constant that is independent of $d$. By direct calculation, the Frobenius norm of the weight matrices scales as $\cO(\sqrt{d})$, and so does the encoding factor $\alpha_w$.

Furthermore, inspired by the idea of normalizing weight matrices in generative adversarial networks~\cite{miyato2018spectral}, we 
train transformers with sub-normalized weight matrices for both self-attention and feed-forward network.
We perform the promoter detection task on the Genomic Benchmarks dataset~\cite{gresova2023genomic}.
The results can be seen in \cref{table.nontata}.
We observe that the matrix normalization does not affect much the performance in the single-layer case. We further train the multilayer normalized transformer and find its performance is comparable to other advanced multilayer models.
Therefore, it is reasonable to normalize the weight matrices so that $\alpha_w = \alpha_m= \cO(1)$, which enables the quantum transformer to run faster without loss of performance.

\begin{table}[htbp!]
    \centering
    \begin{tblr}{lc}
    \toprule
    \textbf{Model} &
    \textbf{Nontata Accuracy}     \\
    \midrule
      Single-layer transformer & 89.1  \\
    Single-layer SN transformer & 88.4   \\ 
    Single-layer FN  transformer& 87.7   \\
    \midrule
    CNN & 85.1 \cite{nguyen2023hyenadna}\\
    HyenaDNA & 96.6 \cite{nguyen2023hyenadna} \\
    DNABERT & 92.6 \cite{gresova2023genomic}\\
    Multilayer FN transformer & 92.1 \\
    \bottomrule
    \end{tblr}
    \caption{Benchmarks of different large machine learning models on the Genomic Benchmarks (GB) dataset. ``SN'' and ``FN'' stand for spectral-normalized and Frobenius-normalized respectively.
    The multilayer FN transformer has the same size of parameters with DNABERT.
    }
    \label{table.nontata}
\end{table}

\subsection{Runtime and speedup}

Combining the theoretical analysis on the runtime of quantum transformers and numerical observations of $\alpha_s = \cO(\sqrt{N})$ and $\alpha_w = \cO(1)$, we obtain the query complexity of the quantum transformer being $\widetilde{\mathcal{O}}(\sqrt{N} d)$ as in \cref{thmTransformer.informal_main}, where $d$ is the embedding dimension and $N$ is the input sequence length. We continue with a discussion on the time complexity so that a fair comparison with classical models can be made. With the QRAM assumption, the input block encodings can be implemented in $\cO(\polylog N)$ time.  Even without a QRAM assumption there can be cases when the input sequence is generated  efficiently, for example when the sequence is generated from a differential equation, see additional discussions in SM $\text{IV}.$C.
In these cases that the input block encodings are efficiently prepared, the time complexity of quantum transformers is in the nearly same order as the query complexity.

We first consider the simplest task of using a single-layer transformer to produce an output vector $\mathrm{Transformer}(S,j)$ by querying the $j$-th input token. By analyzing the naive matrix multiplication $V=SW_v$, the runtime of classical single-layer transformer inference is $\cO(Nd^2)$.
Note that for the single-layer structure, one only needs to compute a single row vector of $\mathrm{softmax}(QK^{T}/\alpha_0)$. 
One can find that the quantum transformer provides a nearly quadratic speedup over the classical counterpart. 

In more general cases of multilayer architecture, for the input sequence of length $N$, the transformer must then output $N$ vectors for the input of the next layer. The time complexity of quantum transformers is $\mathcal{\widetilde{O}}(N^\frac{3}{2} d)$ using the $L^{\infty}$ tomography method \cite{Kerenidis2020Quantum}, whereas the classical transformer runs in $\cO(N^2 d + N d^2)$ time.
Since $N$ is much larger than $d$ in practical scenarios, one can see that quantum transformer still provides a speedup over the classical counterpart but less than quadratic. 
See Methods for more detailed discussions.

Note that several quantum machine learning algorithms that showed promises of exponential advantages have been dequantized \cite{tang2019quantuminspire, Tang_2021, chia2022sampling}. For the transformer, we rigorously analyze classical randomized algorithms. The analysis indicates that there exists a polynomial separation on the query complexity of quantum and classical algorithms in terms of the dependency on matrix norms, hence our algorithm is robust to dequantization. A sketch of analysis can be found in Methods and detailed proofs are in SM IV.D.

\section{Discussion}

In this work, we show progress towards accelerating the inference of transformer architectures on fault-tolerant quantum computers.
We show how to formulate and achieve each computation block of the transformer as quantum subroutines, which can be further combined in a modular fashion.
The ability to obtain a quantum advantage hinges on how the input is given and the particular machine learning problem.
We have discussed the relevant input quantities for all quantum subroutines and their behavior in real-world large-language models. 
Based on comprehensive theoretical and numerical studies, we demonstrate the potential for a polynomial quantum speedup in the input sequence length.
The main subroutines are efficient such that
in principle these subroutines allow for other, broader regimes of quantum speedups.
We believe that our work shows novel directions on how quantum computing may enhance state-of-the-art machine learning models.

\section{Methods}
\subsection{Proof sketch of \cref{thm.element-wise}}
The intuition is that the element-wise function can be decomposed into the linear combination of matrices: $f_{\ell}(A/\alpha)=\sum_{j=1}^{\ell} c_j (A/\alpha)^{\circ j}$.
Therefore, if we can achieve the Hadamard product of block encodings, combining with linear combination of unitaries (LCU)  \cite{childs2012hamiltonian}, we can achieve the element-wise function.

For the Hadamard product $A_1 \circ A_2$, note that all elements are contained in the tensor product $A_1\otimes A_2$.
The next step is to find unitaries that arrange the elements into a particular block of the matrix, i.e., to find $P$ such that 
\begin{align}
    P (U_{A_1}\otimes U_{A_2})P^\dagger=
    \begin{bmatrix}
        \frac{A_1\circ A_2}{\alpha_1\alpha
        _2} & \cdot \\
        \cdot & \cdot
    \end{bmatrix}.
\end{align}
Inspired by Ref.~\cite{zhao2021compiling}, we find that $P$ can be easily constructed using $n$ CNOT gates.

To combine with the LCU, we note that a similar trick mentioned as Lemma~$8$ in Ref.~\cite{Childs_2017} can also be applied here, which enables us to achieve the linear dependency on degree.
The coefficients $\{c_j\}_{j=1}^{\ell}$ are encoded using the quantum state preparation technique \cite{sun2023asymptotically, zhang2022quantum}, which is efficient in our case as the dimension is the polynomial degree.
Based on these, we can achieve the element-wise function as follows
\begin{align}
    c_1\begin{bmatrix}
        A/\alpha & \cdot \\
        \cdot & \cdot
    \end{bmatrix}+
    c_2\begin{bmatrix}
        \frac{A\circ A}{\alpha^2} & \cdot \\
        \cdot & \cdot
    \end{bmatrix}+\cdots\equiv
    \begin{bmatrix}
        f\circ(A/\alpha) & \cdot \\
        \cdot & \cdot
    \end{bmatrix}.
\end{align}

\subsection{Quantum self attention}
To achieve the quantum self-attention, we divide it into three steps.
First, we implement the element-wise function $e^x$ on $QK^T/\alpha_0$ via \cref{thm.element-wise} and polynomial approximation.
Then for index $j\in [N]$, we construct the state encoding of the quantum state
\begin{align}
    \ket{\mathrm{Atten}^{\mathrm{soft}}(S)_j}\coloneqq \frac{1}{\sqrt{Z_j}}\sum_{k=1}^N \sqrt{\mathrm{softmax}\left(\frac{QK^T}{\alpha_0}\right)_j}\ket{k},
\end{align}
where $Z_j$ is the partition function for the $j$-th row of $\mathrm{softmax}(QK^T/\alpha_0)$.
Like other Gibbs state preparation algorithms \cite{gilyen2019quantum}, the query complexity is $\mathcal{O}(\sqrt{N/Z_j})$.
However, as $\alpha_0$ rescales the size of each element, the elements are lower bounded by a constant and therefore $\mathcal{O}(\sqrt{N/Z_j})=\mathcal{O}(1)$. 
Finally, we take the square of the state encoding by using the Hadamard product and multiply with matrix $V$.
The masked self-attention can be achieved by constructing and multiplying a block encoding of projectors, where details are provided in SM III.C.

\subsection{Quantum feed-forward network with GELU function}

The explicit representation of the GELU function is $\mathrm{GELU}(x)\coloneqq x\cdot \frac{1}{2}(1+\mathrm{erf}(\frac{x}{\sqrt{2}}))$.
We show that the GELU function can be well-approximated by a polynomial without the constant term.
For well-approximate we mean the degree of polynomial scales logarithmically to the precision.
In this case, the importance weighted method can be used to implement the GELU function on quantum states with no dependency on the dimension \cite{rattew2023nonlinear}.
It is also suitable to use the element-wise function as \cref{thm.element-wise}, yet the number of ancilla qubits is worse than the nonlinear amplitude transformation for a single state.

\subsection{Numerical details}
All data for the open-source models are obtained from Hugging Face \cite{devlin2019bert, radford2018improve, radford2019language, liu2019roberta, sanh2019distilbert, jiang2023mistral7b, touvron2023llama2openfoundation, zhang2024tinyllamaopensourcesmalllanguage}.
As another way to verify the matrix norm scaling of the sequence matrix $S$, we also compute the $L^2$-norm of token vectors in different models, which are found to be upper bounded by a constant.
The result can be seen SM IV.A.
Furthermore, for applications like retrieval-augmented generation (RAG) and other similarity estimation based tasks, token embeddings are typically 
$L^2$-normalized to unit length \cite{10.5555/3495724.3496517, karpukhin-etal-2020-dense}.
Since the input sequence contains $N$ tokens, we have $\alpha_s = \mathcal{O}(\sqrt{N})$.

For the training of DNA models, all experiments run on a single NVIDIA A100 SXM4 GPU.
We use the same tokenization as Ref.~\cite{zhou2024dnabert}.
We train the embedding and all following layers based on the training dataset provided in the Genomic Benchmarks (GB) \cite{gresova2023genomic}.
For the benchmarking, we consider the promoter detection task, which can be framed as a binary classification problem to determine whether a given DNA sequence region functions as a promoter.
Note that the GB dataset contains $36131$ sequences for promoter detection, with  $27097$ for training and others for validation and testing.
Explicitly, we sub-normalize the weight matrices in both self-attention and feed-forward network layers.
For the final output, we use a linear mapping and thresholding to achieve the classification.

\subsection{Quantum multilayer transformer}

Our method for the single-layer structure can be directly generalized to a multilayer structure, where the transformer outputs $N$ vectors as the input sequence of the next layer.
For all $j\in [N]$, prepare the corresponding quantum state and measure to obtain the output vector. 
Using the $L^{\infty}$ tomography method \cite{Kerenidis2020Quantum} enables the readout the $d$-dimensional vector with $\mathcal{O}(\log d)$ copies, with a precision bound for the $L^{\infty}$-norm (maximum absolute entry of the vector).
Note that for the classical transformer, the quantization method has been widely used \cite{thiruvathukal_low-power_2022, Ofir2019Q8BERT}, which trains with $32$-bit precision and performs inference with $4$- or $8$-bit precision.
Therefore, one can consider that the classical quantization method uses a constant precision in $L^{\infty}$-norm, which can be used for the quantum case as well.
Repeating the algorithm for $N$ times leads to the  complexity in $\mathcal{\widetilde{O}}(N^{\frac{3}{2}}d)$.
After obtaining $d$-dimensional vectors for all $N$ tokens, we construct the new block encoding for the next layer.
Since there are at most $Nd$ elements, this construction takes complexity $\mathcal{\widetilde{O}}(Nd)$.
For the $k$-layer architecture, the complexity is $\mathcal{\widetilde{O}}(k N^{\frac{3}{2}}d)$ in total.
If one considers implementing the multilayer structure fully coherently, the complexity will scale exponentially with $k$, which is much worse than the incoherent method presented in this work.

Analogous to the quantum linear equation solver \cite{harrow2009quantum} and quantum data fitting \cite{Wiebe_2012}, there could be an ideal regime $\|S\|_F=\mathcal{O}(\mathrm{polylog}(N))$ where we can achieve exponential speedup for single-layer and quadratic speedup for multilayer architecture compared to the classical standard algorithm. We leave further exploration of this regime for future work.

\subsection{Robustness to dequantization}
Here we briefly describe how we show a separation between the quantum and the classical randomized algorithm.
Similar to the QRAM assumption, the classical algorithm assumes the so-called sample and query (SQ) access \cite{tang2019quantuminspire}.
This input assumption assumes one can efficiently query each element of given vector and matrix, and can sample based on the $L^2$-norm and the Frobenius norm of the vector and the matrix, respectively.
To show the separation, we focus on the self-attention computation for comparison.
For the single-layer, the computation of self-attention can be decomposed as a matrix-vector multiplication $\mathrm{softmax}(QK/\alpha_0)_{j}\cdot V$, since we only focus on the $j$-th token.
Even if we assume the classical randomized algorithm can easily construct the SQ access of $\mathrm{softmax}(QK/\alpha_0)_{j\cdot}$,
by Ref.~\cite{Tang2023thesis}, it takes query complexity $\Theta(\|S\|_F^2\|W_v\|_F^2/\epsilon^2)$ to achieve the matrix-vector multiplication for the classical randomized algorithm.
The dependency on $\|S\|_F$ and $\|W_v\|_F$ is because $V$ is computed from $S$ and $W_v$.
Note that the complexity of our quantum single-layer transformer is $\mathcal{\widetilde{O}}(\alpha_s)=\mathcal{\widetilde{O}}(\|S\|_F)$, where we neglect the dependency on $d$ for simplicity.
Therefore, there is at least a quadratic separation on the matrix norm $\|S\|_F$, and our quantum algorithm cannot be effectively dequantized.

\section*{Data availability}
The full data of this work is available at Ref.~\cite{Guo2025}.

\section*{Code availability}
The full code for this work is available at Ref.~\cite{Guo2025}.

\bibliography{ref}

\section*{Acknowledgement}
This research is supported by the National Research Foundation, Singapore, and A*STAR under its CQT Bridging Grant and its Quantum Engineering Programme under grant NRF2021-QEP2-02-P05.
KN acknowledges the support of Grant-in-Aid for JSPS Research Fellow 22J01501.

\section*{Author contributions}
This project was conceived by N.G., Z.Y., and P.R.
Theoretical results were proved by N.G., Z.Y., and P.R.
The numerical experiments were conducted by M.C., Y.H., A.A., and K.N.
All authors contributed to the technical discussions and writing of this manuscript.

\appendix
\newpage
\onecolumngrid

\begin{center}
\textbf{
{\Large{Supplementary Material}}}
\end{center}

\tableofcontents

\section{Preliminary\label{sec.preliminary}}
\subsection{Notation}
We use the Dirac notation $\ket{\psi}$ to represent a vector with $\norm{\psi}_2=1$ (pure quantum state).
Denote by $\mathbb N$ the natural numbers $\{1,2, \cdots \}$.
For $N \in \mathbb N$, we use the notation $[N]$ to represent the set $\{1,\dots,N\}$.
For an $n$-qubit state $\ket{0}^{\otimes n}$, we write $\ket{0^n}$ for simplicity.
When there is no ambiguity, we may further ignore the superscript $n$ of $\ket{0^n}$.
For a matrix or an operator $A$, we use $A_{jk}\coloneqq \langle j|A|k\rangle$ to represent its $(j,k)$-th element, where $\{\ket{k}\}$ are the standard basis.
We use $A_{j\star}$ to represent its $j$-th row and $A_{\star k}$ to represent its $k$-th column.
The spectral norm, i.e., the largest singular value, is denoted by $\norm{A}$.
We write $\|A\|_F$ to represent the Frobenius norm.
For a normal matrix $A\coloneqq \sum_k \lambda_k(A)|\psi_k\rangle\langle\psi_k|$, with eigensystem $\{\lambda_k(A),|\psi_k\rangle\}$, and a function $f$, we write $f(A)\coloneqq \sum_k f(\lambda_k(A))|\psi_k\rangle\langle\psi_k|$ to represent the eigenvalue transformation of $A$ with $f$.
For a matrix $A$ and a function $f$, we use $f\circ(A)$ to represent the element-wise application of the function to the matrix, i.e., $\left(f\circ(A)\right)_{jk} = f(A_{jk})$. 

\subsection{Brief description about transformer\label{intro.transformer}}

The transformer is a key component of pretrained foundation models.
It has many applications and one of the main ones is the next token prediction, which has achieved great success in natural language processing.
Given a part of a sequence, the transformer aims to predict the next object of the sequence.  
The transformer is constructed by three main building blocks: self-attention, residual connection with layer normalization, and feed-forward networks (FFN). These building blocks will be described in this section.
The original paper \cite{vaswani2017attention} contains both the encoder and decoder parts.
Later many practically significant models only use one part, especially the decoder-only structure, which is shown in Fig.~\ref{figure_transformer}.

A key aspect of large-language models is \textit{tokenization}. The token is the basic unit of the transformer process.
Concepts like words, codes, and images can be converted to tokens with the so-called tokenization method \cite{sennrich2016neural, kudo2018sentencepiece, mielke2021words}. 
For the transformer, tokens are further mapped to real vectors via \textit{embedding} \cite{vaswani2017attention}. Let $d_{\mathrm{token}}$ be the number of tokens in the dictionary of the machine learning model and $d_{\mathrm{model}}$ be the dimension of the vectors of the embedding.
Let $\mathcal{W}\coloneqq \{\omega_j \in \mathbb{R}^{d_{\mathrm{model}}} : \omega_j \text{ is the embedding of token } j\in [d_{\mathrm{token}}] \}$ be the set of the embedding vectors of all tokens.
For simplicity, when we mention tokens in this paper, we directly mean their vector representations.
An $N$-length sentence is a sequence of vectors $\{S_j\}_{j=1}^N$, where $S_j\in \mathcal W$.
Due to the vector embeddings of the tokens, a sentence can also be understood as a real matrix $S\in \mathbb{R}^{N\times d_{\mathrm{model}}}$.

\textit{Self-attention\ ---} The correlations of the original concepts, such as words in natural languages, imply correlations of the corresponding tokens in the set of tokens.
Self-attention is the building block to encode such correlation information among tokens (vectors) into a new vector, which is the input vector for the next block.
The correlation is computed via estimating inner products. The block is also called the ``scaled dot-product attention".

There are three real parameterized (weight) matrices $W_q, W_k\in \mathbb{R}^{d_{\mathrm{model}}\times d_k}$ and $W_v\in \mathbb{R}^{d_{\mathrm{model}}\times d_v}$ arising in the self-attention block.
In practical cases, $d_{\mathrm{model}}=d_k=d_v$ is widely used, e.g., in the original paper~\cite{vaswani2017attention}.
In our discussion, we will keep this condition and write $d:=d_{\mathrm{model}}$ for simplicity.
Given the sentence $S$, 
the convention is to call $Q\coloneqq S W_q$, $K\coloneqq SW_k$, and $V\coloneqq SW_v$ the query, key, and value matrices respectively.
The attention block computes the matrix
$\Gsoft \in \mathbb R^{N \times d}$ such that 
\begin{align}
    \mathrm{Attention}(Q,K,V)=\mathrm{Attention}(S) =\mathrm{softmax}(Q K^{\top}/\alpha_0)V\eqqcolon \Gsoft,\label{matrix.self-attention}
\end{align}
where $\alpha_0>0$ is a scaling factor, and $\mathrm{softmax}(z)_j\coloneqq e^{z_j}/(\sum_{k\in[N]} e^{z_k})$ for $z\in\mathbbm R^N$ and $j\in [N]$.

\begin{figure}[htpb!]
\centering
\includegraphics[width=1\linewidth]{Figures/QTransformer.pdf}
\caption{\textbf{Overview of the quantum transformer architecture.} 
}\label{figure_transformer}
\end{figure}

In the attention block, the $\mathrm{softmax}$ is implemented for each row of the matrix $Q K^{\top}/\alpha_0$.
Formally, for a matrix $M\in\mathbbm R^{N\times N}$, it is defined as a row-wise application of the softmax function, i.e., $\mathrm{softmax}(M)_{ij}\coloneqq e^{M_{ij}}/(\sum_{k\in[N]} e^{M_{ik}})$ for $i,j\in [N]$.
The factor $\alpha_0$ controls that the exponentiated values are not too large.
The value $\alpha_0 = \sqrt{d}$ has been discovered to be a good choice in practice. To see this, assume that
each row of $Q$ and $K$ has zero mean and unit standard deviation.
Then for each element of $(QK^{\top})_{jk}=\sum_{m=1}^{d} Q_{jm}K_{km}$, the standard deviation will be bounded by $\sqrt{d}$. The coefficient rescales the standard deviation to $1$. Depending on the architecture and embeddings other scaling factors may also be employed \cite{yang2022tensor, ma2024era}. Inspired from the block-encoding discussion in this work, there is a natural choice for this scaling as we discuss in Section \ref{sec.attention}.

For $j \in [N]$, if the current query token is the $j$-th token $S_j$, the corresponding output vector is the $j$-th row of the self-attention matrix in \cref{matrix.self-attention}, denoted by $\Gsoft_j$.
More explicitly, the output vector of the self-attention layer for the $j$-th token is
\begin{align}
    \Gsoft_j= \sum_{k=1}^{d} \Gsoft_{jk}\hat e_k \equiv (\Gsoft)^{\top} \hat e_j 
    ,\label{vector.self-attention}
\end{align}
where $\{\hat e_j\}_{j=1}^N$ is the standard basis.
For the decoder-only structure which achieves the best practical performance, the so-called \textit{masked} self-attention is used,  which has the effect to mask or hide the tokens after the current query token.
This is achieved by adding a masked matrix $QK^{\top}\rightarrow QK^{\top} +M$, where 
\begin{align}
M_{jk}=\begin{cases}
0 & \quad k\leq j, \\
-\infty & \quad k>j.
\end{cases}\label{eq.mask}
\end{align}
Since $\exp(-\infty)=0$, tokens with index larger than $j$ receive no attention.
A further generalization called the \textit{multi-head} self-attention is based on computing several smaller attention matrices and concatenating them together.
The $h$-head self attention can be achieved with linear transformations $W_i^Q, W_i^K, W_i^V\in \mathbb{R}^{d\times\lceil\frac{d}{h}\rceil}$, and $W^{O}\in \mathbb{R}^{d\times d}$ for $i\in [h]$:
\begin{align}
  \mathrm{Multihead}(Q,K,V)=[\mathrm{head}_1,\dots, \mathrm{head}_h]W^{O} \in \mathbb R ^{N\times d},\notag
\end{align} 
where $\mathrm{head}_i=\mathrm{Attention}(QW_i^Q,KW_i^K,VW_i^V) \in \mathbb R^{N \times \lceil\frac{d}{h}\rceil}$.

\textit{Residual connection\ ---}
For a computation block like the self-attention, a residual connection with subsequent layer normalization is employed. This layer provides the ability to skip the computation block.
We take the self-attention as an example.
Note that if we focus on the $j$-th token, $S_j$ can be understood as the input and $\Gsoft_j\equiv \mathrm{Attention}(S,j)$ is the output vector of the self-attention block. The residual connection gives the output vector
$\Gsoft_j + S_j$\footnote{We note that this output vector can also be written as 
$\Gsoft_j(S) + \mathrm{Attention}(0, 0, S)^{\top}  \hat e_j$.}.
The next step is the layer normalization, which standardizes the vector.
Let $\Bar{s}_j :=\frac{1}{d}\sum_{k=1}^d(\Gsoft_{jk}+S_{jk})\cdot \vec 1$, where $\vec 1 
=(1,\dots,1)^T \in \mathbb R^d$ and $\varsigma := \sqrt{\frac{1}{d}\sum_{k=1}^d ((\Gsoft_{j}+S_{j}-\Bar{s}_j\cdot \vec 1)_k)^2}$.
The complete residual connection with the normalization layer can be expressed as
\begin{align}
    \mathrm{LN}_{\gamma,\beta}(\Gsoft_j,S_j)=\gamma\frac{\Gsoft_j+S_j-\Bar{s}_j\cdot \vec 1}{\varsigma}+\beta, \label{eq.Resnet}
\end{align}
where $\gamma$ is the scaling factor and $\beta \in \mathbb R^d$ is the bias vector.
For simplicity, we may not write these factors explicitly when there is no confusion.
We write $\mathrm{LN}_{\gamma,\beta}(\Gsoft_j,S_j)_k$ to represent the $k$-th element, i.e., $(\mathrm{LN}_{\gamma,\beta}(\Gsoft_j,S_j))_k$.
The role of layer normalization is to improve the trainability, which has been found essential for training deep neural networks in practice \cite{he2015deep, ba2016layer}.

\textit{Feed-forward network\ ---}
Finally, a two-layer fully-connected feed-forward network is implemented, i.e., 
\begin{align}
    \mathrm{FFN}(\mathrm{LN}(z_j,S_j))=\sigma(\mathrm{LN}(\Gsoft_j,S_j)M_1+b_1)M_2+b_2, \label{eq.neuralnetwork}
\end{align}
where $\sigma$ is an activation function, such as $\tanh(x)$ and $\relu(x)=\max(0,x)$.
Another activation function that may not be widely known, yet has been widely used in LLMs, is the Gaussian Error Linear Units function \cite{hendrycks2017bridging}.
Formally, we have $\mathrm{GELU}(x)\coloneqq x\cdot \frac{1}{2}(1+\mathrm{erf}(\frac{x}{\sqrt{2}}))$, where $\mathrm{erf}(x)\coloneqq \frac{2}{\sqrt{\pi}}\int_{0}^{x} e^{-t^2} \,dt$ is the error function. 
The function can be understood as a smoother $\relu$ activation function and will be our focus in the paper.
In addition, $M_1\in \mathbb{R}^{d\times d_{\mathrm{ff}}}, M_2\in \mathbb{R}^{d_{\mathrm{ff}}\times d}$ are linear transformation matrices, and $b_1,b_2$ are vectors.
In most practical cases, $d_{\mathrm{ff}}=4d$.

Combining these blocks together, we define the function
\begin{align}
    \mathrm{Transformer}(S, j) := \mathrm{LN}(\mathrm{FNN}(\mathrm{LN}(\mathrm{Attention}(S,j)))).
\end{align}
Note that inputs for each function can be recovered from matrix $S$, index $j$, and outputs from the previous layer functions.
In currently employed transformer architectures, several of these building blocks are iterated for a constant number of times.
The output, i.e., the next predicted token, is sampled from the distribution by further linear mapping the output vector to dimension $d_{\mathrm{model}}$ and implementing the softmax function.
Considering the run time,
recall that the length of the input sentence is $N$ and the dimension of the embedded vectors is $d$.
We summarize the time complexity as \cref{table1}.

\begin{table}[ht]
    \centering
    \begin{tblr}{lc}
    \toprule
    \textbf{Block}    &  \textbf{Time complexity} \\
    \midrule
    Preparation of $Q,K,V$ & $\mathcal{O}(N d^2)$\\
    Preparation of $QK^{\top}$ & $\mathcal{O}(N^2 d)$\\
    Preparation of $\mathrm{softmax}(Q K^{\top}/\sqrt{d})V\eqqcolon\Gsoft$ &$\mathcal{O}(N^2 + N^2d)$\\
    Residual connection $\mathrm{LN}(\Gsoft_j,S_j)$ & $\mathcal{O}(d)$\\
    Feed-forward NN ${\rm FFN}(\mathrm{LN}(\Gsoft_j,S_j))$ & $\mathcal{O}(Nd^2)$\\
    \bottomrule
    \end{tblr}
    \caption{Time complexity of transformer steps.}
    \label{table1}
\end{table}

The time complexity of a constant number of iterations of the three main blocks is $\mathcal{O}(N^2d+Nd^2)$, which mainly comes from the self-attention matrix computation.
If we only consider the $1$-layer transformer, the time complexity is $\mathcal{O}(Nd^2)$, as we do not need to compute all $N$ vectors that are needed for the second layer self-attention block.
This complexity comes from the matrix multiplication $V=SW_v$, as shown in \cref{table1}.

\subsection{Quantum procedures}

To encode the classical information into the quantum device, we use a standard input assumption in quantum algorithms literature, called the \textit{block encoding}. 
Note that the encoding can be generalized to non-square matrix cases of arbitrary size by padding the matrix with zeros.
Further, when we say we can construct or are given a block encoding unitary, it means we have access to the corresponding quantum circuit, i.e., we can also implement the controlled, self-adjoint, and controlled self-adjoint of the circuit.

\begin{definition}[Block encoding~\cite{chakraborty2019power, gilyen2019quantum}\label{def.blockencoding}]
We say a unitary $U_A$ is an $(\alpha,a,\epsilon)$-encoding of matrix $A\in \mathbb{C}^{2^n\times 2^n}$ if
\begin{align}
    \|A-\alpha (\bra{0^a}\otimes I_n)U_A(\ket{0^a}\otimes I_n)\|\leq \epsilon.
\end{align}
\end{definition}

By definition, one can see that $\alpha\geq \|A\|$, i.e., $\alpha$ is at least the spectral norm of the block-encoded matrix.
In \cref{sec.block-encoding}, we describe some methods to construct the block encoding for certain kinds of matrices, e.g., sparse.
Assuming the quantum random access memory \cite{giovannetti2009quantum} and quantum data structure \cite{kerenidisQuantumRecommendationSystems2016}, one can construct the block-encoding unitary for arbitrary matrix, paying the price of $\alpha=\|A\|_F$, i.e., $\alpha$ will be the Frobenius norm instead.
Note that the Frobenius norm is strictly larger than the spectral norm.

Since the outputs from each block of the transformer are vectors, we construct quantum circuits that generate quantum states corresponding to these vectors.
We use the natural format of state preparation encoding also defined in Ref.~\cite{rattew2023nonlinear}, and change the definition  from $L_2$ norm to $L_{\infty}$ norm.
\begin{definition}[State preparation encoding\label{def.stateencoding}]
We say a unitary $U_{\psi}$ is an $(\alpha,a,\epsilon)$-state-encoding of an $n$-qubit quantum state $\ket{\psi}$ if
\begin{align}
    \norm{\ket{\psi}-\alpha (\bra{0^{a}}\otimes I)U_{\psi}\ket{0^{a+n}}}_{\infty} \leq \epsilon.
\end{align} 
\end{definition}

More straightforwardly, the $(\alpha,a,\epsilon)$-state-encoding $U_\psi$ prepares the state 
\begin{align}
    U_{\psi}\ket{0}\ket{0}= \frac{1}{\alpha}\ket{0}\ket{\psi'}+\sqrt{1-\alpha^2}\ket{1}\ket{\mathrm{bad}},\notag
\end{align}
where $\norm{\ket{\psi'}-\ket{\psi}}_{\infty}\leq \epsilon$ and $\ket{\mathrm{bad}}$ is an arbitrary quantum state.
One can further prepare the state $\ket{\psi'}$ by using $\mathcal{O}(\alpha)$ times of amplitude amplification \cite{brassard2002quantum}.
The state preparation encoding may also be understood as a block encoding of a $\mathbb{C}^{2^n \times 1}$ matrix.

To encode the classical coefficients into quantum states which will be used multiple times, we follow the results in Ref.~\cite{sun2023asymptotically, zhang2022quantum}.
\begin{theorem}[Quantum state preparation \cite{sun2023asymptotically}]\label{lemma.statepreparation}
    For a given vector $v\in \mathbb{C}^N$ with $\Vert v \Vert_2=1$, one can prepare a $(1,0,0)$-state-encoding $U_v$ of state $\ket{v}=\sum_{i=1}^{N}v_i\ket{i}$ with quantum circuit depth $\mathcal{O}(N/\log N)$ without using ancilla qubits.
    One can also achieve this with depth $\mathcal{O}(\log N)$ with $\mathcal{O}(N)$ ancilla qubits.
\end{theorem}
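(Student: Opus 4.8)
The plan is to reduce arbitrary state preparation to the synthesis of diagonal unitaries and then exhibit two different circuit schedules that meet the two stated resource trade-offs. Set $n=\lceil\log_2 N\rceil$ and pad $v$ with zeros so that $\ket v$ lives on $n$ qubits. First I would apply the standard multiplexor decomposition (Grover--Rudolph / Shende--Bullock--Markov): the map $\ket{0^n}\mapsto\ket v$ factors into $n$ layers, where layer $k$ is a uniformly controlled single-qubit rotation acting on qubit $k$ and controlled on qubits $1,\dots,k-1$, with a distinct angle (and relative phase, to accommodate complex $v$) for each of the $2^{k-1}$ control patterns. Conjugating the target wire by fixed single-qubit Clifford gates turns each multiplexed $R_y$/$R_z$ into a \emph{diagonal} unitary on $k$ qubits, so the whole problem reduces to one primitive: implementing an arbitrary $m$-qubit diagonal unitary $D=\sum_x e^{if(x)}\ket x\bra x$. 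For that I would expand the phase in the Walsh--Hadamard basis, $f(x)=\sum_{S\subseteq[m]}\alpha_S(-1)^{\sum_{j\in S}x_j}$, which writes $D=\prod_S\exp(i\alpha_S Z_S)$ with $Z_S=\bigotimes_{j\in S}Z_j$ as a product of $2^m$ mutually commuting parity-phase rotations, each realizable by a CNOT ladder accumulating the parity $\sum_{j\in S}x_j$ onto one wire, a single-qubit $R_z(2\alpha_S)$, and the inverse ladder.

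For the ancilla-free bound I would schedule these $2^m$ parity rotations along a Gray-code ordering so that consecutive ladders differ by a single CNOT and the CNOTs and phases can be packed to keep a constant fraction of the $m$ wires busy at every time step; a careful accounting then gives depth $\Ord{2^m/m}$ for the diagonal on $m$ qubits. Summing over the layers, $\sum_{k=1}^n\Ord{2^k/k}$ is dominated by its last term and equals $\Ord{2^n/n}=\Ord{N/\log N}$, matching the parameter-counting lower bound: a depth-$d$ ancilla-free circuit on $n$ qubits contains $\Ord{nd}$ elementary gates carrying $\Ord{nd}$ continuous parameters, so realizing the $\Theta(N)$ free real parameters of $\ket v$ forces $d=\Omega(N/\log N)$.

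For the $\Ord{\log N}$-depth bound I would instead use the ancilla to prepare the amplitudes in a one-hot (unary) register before compressing to binary. Allocate $N$ ancilla and grow a balanced binary tree of controlled rotations: starting from a single active wire, level $\ell$ splits each active amplitude into its two children via rotations that act on pairwise disjoint wires and hence run in parallel, so after the $\log N$ levels the register holds $\sum_x v_x\ket{\mathrm{unary}(x)}$ in depth $\Ord{\log N}$. Converting this one-hot encoding to the binary label $\ket x$ only requires computing, for each of the $\log N$ output bits, a large fan-in parity of the active position; each is a depth-$\Ord{\log N}$ CNOT tree and all $\log N$ of them run together, after which uncomputing the unary register leaves $\ket v$ on $n$ qubits with total depth $\Ord{\log N}$ and $\Ord{N}$ ancilla, matching the trivial light-cone lower bound.

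The hard part is the ancilla-free case: the Walsh--Hadamard/Gray-code scheduling that squeezes the diagonal-unitary depth down to the optimal $\Ord{2^m/m}$ (rather than the naive $\Ord{2^m}$) is the delicate combinatorial step, and it is precisely this $1/m$ parallel saving, accumulated over the layers, that produces the $\Ord{N/\log N}$ bound; verifying that the per-step packing is simultaneously legal (no two gates sharing a wire) and asymptotically tight is what requires the most care. The ancilla-assisted construction is comparatively routine once the unary-tree parallelism and the fan-in compression are set up.
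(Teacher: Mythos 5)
First, a framing point: the paper does not prove this statement at all --- it is imported verbatim from Ref.~\cite{sun2023asymptotically} --- so your proposal has to be judged against the constructions in that cited work. For the ancilla-free bound, your high-level route is indeed the same as theirs: reduce state preparation to uniformly controlled rotations, conjugate each multiplexed $R_y$/$R_z$ into a diagonal unitary (your reduction is correct, since $R_y(\theta)=SHR_z(\theta)H S^\dagger$ and a multiplexed $R_z$ is diagonal), expand the phase in the Walsh basis as a product of parity rotations $e^{i\alpha_S Z_S}$, and telescope $\sum_{k=1}^{n}\mathcal{O}(2^k/k)=\mathcal{O}(2^n/n)$. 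However, the step you yourself flag as delicate is genuinely missing, and your description of the mechanism is not quite right: a Gray-code ordering in which ``consecutive ladders differ by a single CNOT'' is precisely the \emph{sequential} textbook construction and yields depth $\Theta(2^m)$, not $\mathcal{O}(2^m/m)$. To gain the factor of $m$ one must execute $m$ parity rotations per layer, which forces each layer's set of parities to be linearly independent over $\mathbb{F}_2$ (the $m$ wire contents are linear functions of the inputs and must extend to an invertible linear map), and one then needs a partition of $\mathbb{F}_2^m\setminus\{0\}$ into $\mathcal{O}(2^m/m)$ such batches together with CNOT transitions between consecutive batches of constant amortized depth. Constructing this partition and scheduling is the actual content of the cited theorem; asserting that ``a careful accounting'' makes the packing legal and tight does not discharge it.

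For the ancilla-assisted bound you take a genuinely different route from the cited work (a unary amplitude tree plus one-hot-to-binary compression, rather than ancilla-assisted diagonal-unitary synthesis), and the unary tree itself is sound: the controlled rotations at each level act on disjoint wire pairs, so the one-hot state is reached in depth $\mathcal{O}(\log N)$. The concrete gap is in the compression step. The $\log N$ output bits are parities over heavily overlapping subsets of the $N$ leaf wires, and the corresponding CNOT trees \emph{share source wires}; two gates touching the same wire cannot occupy the same time step, so the claim that ``all $\log N$ of them run together'' fails as stated. Running them truly in parallel requires fanning each leaf indicator out into up to $\log N$ copies, costing $\Theta(N\log N)$ ancillas and breaking the $\mathcal{O}(N)$ budget; computing the parities in place destroys the leaf values and leaves junk entangled with the index that must itself be uncomputed; and a staggered schedule with compute/uncompute per bit gives depth $\mathcal{O}(\log^2 N)$. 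So your argument, as written, establishes either $(\mathcal{O}(\log^2 N)$ depth, $\mathcal{O}(N)$ ancillas$)$ or $(\mathcal{O}(\log N)$ depth, $\mathcal{O}(N\log N)$ ancillas$)$, but not the claimed pair. The cited paper instead proves the trade-off $\Theta\bigl(n+2^n/(n+m)\bigr)$ for $m$ ancillas directly through ancilla-assisted diagonal-unitary synthesis (prefix-copy/Gray-cycle/suffix-copy stages), which with $m=\Theta(N)$ gives depth $\Theta(\log N)$; repairing your unary route to match this would require a substantially more careful (pipelined or recursive) conversion than the one you sketch.
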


In the following, we introduce some results on ``linear algebra" of block-encoded matrices such as addition and multiplication.
The first result is to achieve a linear combination of block-encoded matrices, which requires the so-called state preparation pair.

\begin{definition}[State preparation pair \cite{chakraborty2019power, gilyen2019quantum}]\label{LCUcoefficient.blockencoding}
Let $y\in \mathbb{C}^m$ and $\|y\|=1\leq \beta$, the pair of unitaries $(P_L,P_R)$ is called a $(\beta,b,\epsilon)$-state-preparation-pair if $P_L\ket{0^{b}}=\sum_{k=1}^{2^b}c_k\ket{k}$ and $P_R\ket{0^{b}}=\sum_{k=1}^{2^b}d_k\ket{k}$ such that $\sum_{k=1}^{m}|\beta(c^{*}_kd_k)-y_k|\leq \epsilon$ and for all $k\in m+1,\dots,2^{b}$ we have $c^*_k d_k=0$.
\end{definition}
This pair of circuits allows one to create a linear combination of matrices with given coefficients as the next lemma shows.
We notice a typo in the original Lemma~52 in Ref.~\cite{gilyen2019quantum}, and fix it as follows.
\begin{lemma}[Linear combination of block-encoded matrices \cite{chakraborty2019power, gilyen2019quantum}]\label{LCU.blockencoding}
    Let $A=\sum_{k=1}^m y_kA_k$ be an $s$-qubit operator and $\epsilon>0$. Suppose that $(P_L,P_R)$ is a $(\beta,b,\epsilon_1)$-state-preparation-pair for y, and that $W=\sum_{k=1}^{m}|k\rangle\langle k|\otimes U_k+((I-\sum_{k=1}^{m}|k\rangle\langle k|)\otimes I_a\otimes I_s)$ is an $s+a+b$ qubit unitary such that for all $k\in [m]$, the unitary $U_k$ is an $(\alpha,a,\epsilon_2)$-encoding of $A_k$. Then we can implement an $(\alpha\beta,a+b,\alpha\epsilon_1+\beta \epsilon_2)$-encoding of $A$, with a single use of $W,P_R$ and $P_L^\dagger$.
\end{lemma}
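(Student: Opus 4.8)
The plan is to write down the circuit that realizes the encoding and then verify the encoding factor, the ancilla count, and the error bound by directly evaluating its top-left block. The natural candidate is
\[
    \widetilde U \coloneqq (P_L^\dagger \otimes I_a \otimes I_s)\, W\, (P_R \otimes I_a \otimes I_s),
\]
acting on the $b$-qubit state-preparation register, the $a$-qubit block-encoding register, and the $s$-qubit system register. By construction it invokes $W$, $P_R$, and $P_L^\dagger$ exactly once each and carries $a+b$ ancilla qubits, matching the claimed encoding. First I would compute the matrix that $\widetilde U$ \emph{exactly} block-encodes, namely $\widetilde A \coloneqq \alpha\beta\,(\bra{0^{a+b}}\otimes I_s)\,\widetilde U\,(\ket{0^{a+b}}\otimes I_s)$.

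Using $P_R\ket{0^b}=\sum_k d_k\ket{k}$ and $\bra{0^b}P_L^\dagger=\sum_k c_k^*\bra{k}$, together with the controlled structure $W=\sum_{k=1}^m \ket{k}\bra{k}\otimes U_k + (\cdots)\otimes I_{a+s}$, the state-preparation register selects $U_k$ on the $k$-th branch for $k\in[m]$, while the branches $k>m$ act as the identity on the $a+s$ qubits. The latter drop out because the state-preparation-pair definition guarantees $c_k^* d_k=0$ there. Writing $B_k\coloneqq \alpha(\bra{0^a}\otimes I_s)U_k(\ket{0^a}\otimes I_s)$, a short calculation then gives the clean form
\[
    \widetilde A = \beta\sum_{k=1}^m c_k^*\, d_k\, B_k ,
\]
where $\norm{B_k}\le\alpha$ (each $B_k/\alpha$ is a sub-block of a unitary) and $\norm{A_k-B_k}\le\epsilon_2$ by the hypothesis that $U_k$ is an $(\alpha,a,\epsilon_2)$-encoding of $A_k$.

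The remaining and most delicate step is the error estimate $\norm{A-\widetilde A}\le\alpha\epsilon_1+\beta\epsilon_2$, and this is exactly where the bookkeeping must be arranged so that no spurious cross term $\epsilon_1\epsilon_2$ survives --- the point at which the original Lemma~52 has to be corrected. I would split each summand as
\[
    y_k A_k - \beta c_k^* d_k B_k = y_k\,(A_k - B_k) + (y_k - \beta c_k^* d_k)\,B_k ,
\]
deliberately retaining $B_k$ rather than $A_k$ in the second group so that I may invoke $\norm{B_k}\le\alpha$ instead of the weaker $\norm{A_k}\le\alpha+\epsilon_2$. The triangle inequality then yields $\norm{A-\widetilde A}\le\sum_k\abs{y_k}\,\norm{A_k-B_k}+\sum_k\abs{y_k-\beta c_k^* d_k}\,\norm{B_k}$. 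The first sum is bounded by $\epsilon_2\sum_k\abs{y_k}\le\beta\epsilon_2$ using the normalization $\sum_k\abs{y_k}\le\beta$, and the second by $\alpha\sum_k\abs{y_k-\beta c_k^* d_k}\le\alpha\epsilon_1$ using the state-preparation-pair approximation $\sum_k\abs{\beta c_k^* d_k-y_k}\le\epsilon_1$. Adding these and combining with the encoding factor $\alpha\beta$ and the $a+b$ ancillas established above proves that $\widetilde U$ is an $(\alpha\beta,a+b,\alpha\epsilon_1+\beta\epsilon_2)$-encoding of $A$. I expect the main obstacle to be precisely this choice of intermediate matrix: a careless split forces $\norm{A_k}\le\alpha+\epsilon_2$ and produces the extra $\epsilon_1\epsilon_2$ term that the corrected statement is designed to avoid.
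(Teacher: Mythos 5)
Your proof is correct, and note that the paper itself states this lemma \emph{without} proof, importing it (with the typo fixed) from Lemma~52 of Ref.~\cite{gilyen2019quantum}; your circuit $(P_L^\dagger\otimes I_a\otimes I_s)\,W\,(P_R\otimes I_a\otimes I_s)$, the vanishing of the $k>m$ branches via $c_k^*d_k=0$, and the exact block $\beta\sum_{k=1}^m c_k^*d_k B_k$ reproduce precisely the standard construction. Your split $y_kA_k-\beta c_k^*d_kB_k = y_k(A_k-B_k)+(y_k-\beta c_k^*d_k)B_k$, which lets you invoke $\|B_k\|\le\alpha$ exactly instead of $\|A_k\|\le\alpha+\epsilon_2$, is exactly what yields the corrected bound $\alpha\epsilon_1+\beta\epsilon_2$ free of the spurious $\epsilon_1\epsilon_2$ cross term, so your argument also substantiates the paper's stated fix of the original lemma.
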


The second result is to achieve a multiplication of block-encoded matrices.

\begin{lemma}[Product of block-encoded matrices \cite{chakraborty2019power, gilyen2019quantum}]\label{product.blockencoding}
    If $U$ is an $(\alpha,a,\delta)$-encoding of an $s$-qubit operator $A$, and $V$ is a $(\beta,b,\epsilon)$-encoding of an $s$-qubit operator $B$, then $(I_b\otimes U)(I_a\otimes V)$ is an $(\alpha\beta,a+b,\alpha\epsilon+\beta\delta)$-encoding of $AB$.
\end{lemma}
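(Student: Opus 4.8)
The plan is to first identify exactly what operator sits in the top-left block of the composed unitary $W := (I_b\otimes U)(I_a\otimes V)$, and then to control the error of this block relative to the ideal product $AB$ by a single triangle inequality. Throughout I would write $\tilde A := \alpha(\bra{0^a}\otimes I_s)U(\ket{0^a}\otimes I_s)$ and $\tilde B := \beta(\bra{0^b}\otimes I_s)V(\ket{0^b}\otimes I_s)$ for the exact rescaled blocks, so that the hypotheses read $\|A-\tilde A\|\le\delta$ and $\|B-\tilde B\|\le\epsilon$.

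For the block-identification step I would set up the register bookkeeping carefully: treat the $V$-ancilla as a $b$-qubit register, the $U$-ancilla as an $a$-qubit register, and let the $s$-qubit system register be \emph{shared}. The key structural fact is that $U$ acts only on the $a$-ancilla and the system, while $V$ acts only on the $b$-ancilla and the system, so the two unitaries touch \emph{disjoint} ancilla registers. I would then compute $(\bra{0^{a+b}}\otimes I_s)W(\ket{0^{a+b}}\otimes I_s)$ directly: applying $I_a\otimes V$ to $\ket{0^{a+b}}$ leaves the $a$-register in $\ket{0^a}$ and puts the $b$-register into $\ket{0^b}(\tilde B/\beta)+\sum_{y\neq 0}\ket{y}(\cdots)$; then $I_b\otimes U$ leaves the $b$-register untouched and applies $U$ to the $a$-register and system. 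Because $U$ is the identity on the $b$-register, the final projection $\bra{0^b}$ forces the intermediate $b$-register to be $\ket{0^b}$, annihilating every off-diagonal component, and the residual $U$-block on $\ket{0^a}$ contributes $\tilde A/\alpha$. This yields the clean identity $(\bra{0^{a+b}}\otimes I_s)W(\ket{0^{a+b}}\otimes I_s)=\tilde A\tilde B/(\alpha\beta)$, i.e. $\alpha\beta$ times this block equals $\tilde A\tilde B$.

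For the error step I would insert an intermediate term and use submultiplicativity of the spectral norm:
\begin{align*}
\|AB-\tilde A\tilde B\| &\le \|A(B-\tilde B)\| + \|(A-\tilde A)\tilde B\| \\
&\le \|A\|\,\|B-\tilde B\| + \|A-\tilde A\|\,\|\tilde B\|.
\end{align*}
Here I would use $\|A\|\le\alpha$ (the convention $\alpha\ge\|A\|$ recorded just after the block-encoding definition), the automatic bound $\|\tilde B\|\le\beta$ (since $\tilde B$ is $\beta$ times a sub-block of a unitary, hence of norm at most $\beta$), and the two hypotheses $\|B-\tilde B\|\le\epsilon$ and $\|A-\tilde A\|\le\delta$. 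This gives $\|AB-\tilde A\tilde B\|\le\alpha\epsilon+\beta\delta$, which is exactly the assertion that $W$ is an $(\alpha\beta,a+b,\alpha\epsilon+\beta\delta)$-encoding of $AB$.

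The main obstacle is the block-identification step: getting the register ordering right and verifying that the intermediate ``bad'' ancilla components of $V$ genuinely cannot leak into the top-left block of the product. The reason they cannot is structural—$U$ acts as the identity on the $b$-ancilla—but this must be made explicit, since a naive tensor manipulation that ignores which unitary touches which register can produce spurious cross-terms. Once the factorization of the block is established, the remainder is a routine two-term triangle inequality; note that the particular way of splitting it (peeling $A$ off on the left and $\tilde B$ off on the right) is precisely what produces the asymmetric bound $\alpha\epsilon+\beta\delta$ rather than a weaker symmetric estimate.
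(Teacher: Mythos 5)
Your proof is correct: the paper itself states this lemma without proof, importing it from the cited references, and your argument (sliding the $\bra{0^b}$ projection through $U$ to factor the top-left block as $\tilde A\tilde B/(\alpha\beta)$, then a two-term triangle inequality with $\|A\|\le\alpha$ and $\|\tilde B\|\le\beta$) is exactly the standard proof given in those references. Your care with the shared system register and the disjoint ancilla registers correctly resolves the one genuine subtlety in the abused tensor notation $(I_b\otimes U)(I_a\otimes V)$.
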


Given the block-encoding, one can implement polynomial functions on singular values of block-encoded matrices (or eigenvalues for blocked Hermitian matrices) using the quantum singular value transformation (QSVT) method.

\begin{theorem}[Polynomial eigenvalue transformation \cite{gilyen2019quantum}\label{theorem.qsvt}]
    Let $\delta>0$.
    Given $U$ that is an $(\alpha,a,\epsilon)$-encoding of a Hermitian matrix $A$, and a real $\ell$-degree function $f(x)$ with $|f(x)|\leq \frac{1}{2}$ for $x\in [-1,1]$, one can prepare a $(1,a+n+4,4\ell\sqrt{\epsilon/\alpha}+\delta)$-encoding of $f(A/\alpha)$ by using $\mathcal{O}(\ell)$ queries to $U$ and $\mathcal{O}(\ell(a+1))$ one- and two-qubit quantum gates.
    The description of the quantum circuit can be computed classically in time $\mathcal{O}(\mathrm{poly}(\ell,\log(1/\delta)))$.
\end{theorem}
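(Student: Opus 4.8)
The plan is to reduce this matrix-level statement to a single-qubit \emph{quantum signal processing} (QSP) problem via \emph{qubitization}, and then lift the resulting scalar polynomial identity back to the full Hilbert space. First I would exploit the Hermitian spectral decomposition $A = \sum_k \lambda_k \ket{\psi_k}\bra{\psi_k}$ together with the block-encoding structure of $U$: letting $\Pi := \ket{0^a}\bra{0^a}\otimes I$, for each eigenvector the action of $U$ is confined to a two-dimensional invariant subspace spanned by $\ket{0^a}\ket{\psi_k}$ and its image under $U$, on which $U$ acts as an $SU(2)$ rotation whose angle is determined by $\lambda_k/\alpha$. This converts $U$ into a direct sum of single-qubit signal operators indexed by the eigenvalues.

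Next I would invoke the QSP characterization \cite{gilyen2019quantum}: interleaving the signal operator with the projector-controlled phase rotations $e^{i\phi_j(2\Pi - I)}$ realizes, inside each two-dimensional block, a product of $Z$-rotations and a fixed reflection, which implements a degree-$\ell$ polynomial $P(\lambda_k/\alpha)$ in the top-left entry. Since the number of phase factors is $\ell+1$, this gives the claimed $\Ord{\ell}$ query count, and each controlled reflection $2\Pi-I$ is a multi-controlled phase costing $\Ord{a}$ gates, yielding the $\Ord{\ell(a+1)}$ gate bound. Because $\Pi$ acts identically on every invariant block, a single phased circuit applies the same QSP in parallel across all eigenspaces, producing a block encoding of $P(A/\alpha)$.

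The existence of phase factors realizing a \emph{given} real target $f$ with $\abs{f(x)}\le 1/2$ is the analytic heart of the argument. Here I would construct a complementary polynomial $Q$ so that the pair $(P,Q)$ satisfies the unitarity constraint $\abs{P}^2+(1-x^2)\abs{Q}^2=1$ on $[-1,1]$ (a Fej\'er--Riesz type argument), and then extract the angles $\{\phi_j\}$ by a layer-stripping recursion. To obtain a \emph{real} matrix function rather than $P+iQ$, I would use one additional ancilla and form an LCU of the phased circuit with its conjugate $\Phi\mapsto-\Phi$; this real-part trick is exactly what forces the bound $\abs{f}\le 1/2$ and produces the encoding factor $1$ together with the $+4$ ancilla qubits.

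Finally I would treat robustness and efficiency. For an $(\alpha,a,\epsilon)$-encoding the qubitized signal operator deviates from the ideal rotation by $\Ord{\sqrt{\epsilon/\alpha}}$ in operator norm; propagating this perturbation through the $\ell$ interleaved layers and using the Lipschitz dependence of the transformation on the signal gives the stated error $4\ell\sqrt{\epsilon/\alpha}+\delta$, with $\delta$ absorbing the imperfection in the classically computed phases, whose determination to accuracy $\delta$ in time $\Ord{\mathrm{poly}(\ell,\log(1/\delta))}$ I would simply cite. I expect the main obstacle to be the third step: establishing existence of the phase factors for an arbitrary bounded $f$, since the complementary-polynomial construction and the numerical stability of the extraction recursion are delicate, whereas the remaining steps follow fairly directly from the invariant-subspace decomposition.
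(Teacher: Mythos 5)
The paper never proves this theorem: it is imported with citation from Ref.~\cite{gilyen2019quantum}, so the only proof to compare against is the one in that reference. Your sketch reconstructs exactly that argument --- qubitization of the block encoding into $SU(2)$-invariant two-dimensional blocks, QSP phase factors whose existence rests on a Fej\'er--Riesz complementary-polynomial construction with layer-stripping extraction, the real-part LCU trick (which is precisely what forces $\abs{f}\le 1/2$ and contributes the extra ancillas, together with the even/odd parity decomposition for mixed-parity $f$), and the robustness bound $4\ell\sqrt{\epsilon/\alpha}$ from perturbing the actually-encoded matrix $\alpha(\bra{0^a}\otimes I)U(\ket{0^a}\otimes I)$ --- so your proposal is correct and follows essentially the same route as the cited proof.
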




An additional point to note is that for the classical case, they consider the row vector as described previously.
However, for the quantum case, we consider the column vector, i.e., the quantum state.
This small difference can be handled by implementing the self-adjoint of the unitary.

\section{Problem formulations \label{sec.problem}}

Here, we describe our assumptions and the problem statements that are considered for the implementation of the transformer on quantum computers.
Recall that in this paper, we focus on the inference and assume the training process has already been achieved.
The classical problems assume memory access to the inputs such as the sentence and the query, key, and value matrices. The quantum algorithms change this input assumption to a block encoding input assumption. 
The dimensions of $N$ and $d$ can be achieved by padding with zeros. 

\begin{definition}[Input assumption\label{input.assumption}]
We assume $N=2^n$ and $d=2^{\log d}$ for $n,\log d\in \mathbb{N^+}$.
For the input sequence $S\in \mathbb{R}^{N\times d}$, we assume given access to a quantum circuit $U_S$ which is an $\Sinput$-encoding of $S$.
For matrices $W_q, W_k,W_v \in \mathbb{R}^{d\times d}$, assume given access to quantum circuits $U_{W_q}$, $U_{W_k}$, and $U_{W_v}$ that are $\Winput$-encodings of $W_q,W_k$ and $W_v$ respectively.
For the feed-forward neural network, we assume $(\alpha_{m},a_{m},\epsilon_{m})$-encodings $U_{M_1}$ and $U_{M_2}$ of two weight matrices $M_1\in \mathbbm R^{N_{1} \times N}$ and $M_2 \in \mathbbm R^{N_2 \times N_{1}}$.
\end{definition}

We reformulate the classical problems to the quantum version based on this input assumption.

\begin{prob}[Quantum self-attention\label{Attention.output}]
Assume the input assumption as in \cref{input.assumption}.
Define $Q\coloneqq SW_q$, $K\coloneqq SW_k$, and $V\coloneqq SW_v$.
Let the current focused token be $j\in[N]$,
the task is to construct a block-encoding of the matrix $G$ such that
\begin{align}
G_{j\star}=G^{\mathrm{soft}}_{j} \coloneqq \ab\big(\mathrm{softmax}(QK^{\top}/\alpha_{0})V)_{j\star},\label{eq.attention.quantum}
\end{align}
where $\alpha_{0}=\alpha_s^2\alpha_w^2$.
For the masked self-attention, change $G^{\mathrm{soft}}$ to $\mathrm{softmax}(QK^{\top}/\alpha_{0}+M)V$, where $M$ is the masked matrix as \cref{eq.mask}.
\end{prob}

Note that we change the scaling coefficient $\alpha_0$ for the quantum case.
Details of the explanation can be found in \cref{sec.attention}.

\begin{prob}[Quantum residual connection with layer normalization\label{prob.residualwithlayer}]
Assume the input assumption as in \cref{input.assumption}.
Assume given access to an $\Ginput$-encoding of the self-attention $G^{\mathrm{soft}}$ as Eq.~(\ref{eq.attention.quantum}).
Let the current query token be the $j$-th token. 
Construct a state preparation encoding of the state
\begin{align}
\sum_{k=1}^d \mathrm{LN}_{\gamma,\beta}(G^{\mathrm{soft}}_{j},S_{j})_k\ket{k}, \label{eq.layernormalization}
\end{align}
where $\mathrm{LN}_{\gamma, \beta}$ is as \cref{eq.Resnet}.
Here, $\gamma=1/\sqrt{d}$ and $\beta=\vec 0$.
\end{prob}

Note that standardization rescales the $L^2$-norm of the vector to be $\sqrt{d}$.
By taking $\gamma=1/\sqrt{d}$ and $\beta=\vec 0$, the $L^2$-norm will be $1$.
We consider this case to simplify our discussion, yet we also provide a general discussion in \cref{appen.residual}.

\begin{prob}[Quantum two-layer feedforward network\label{prob.ffn}]
Assume the input assumption as in \cref{input.assumption}. Given an $(\alpha,a,\epsilon)$-state-encoding $U_{\psi}$ of an $n$-qubit state $\ket{\psi}=\sum_{k=1}^N \psi_k \ket{k}$, where $\{\psi_k\}$ are real and $\norm{\psi}_2=1$,
and an activation function $\sigma$, prepare a state encoding of the quantum state $\ket \phi$ 
\begin{align}
\ket \phi = \frac{1}{C} \sum_{k=1}^{N_{2}} \bigl(M_2\cdot\sigma(M_1\cdot\psi)\bigr)_k\ket{k},
\end{align}
where $C$ is the normalization factor.
\end{prob}

\section{Main results\label{sec.main}}

In this section, we present our main technical contributions.
The first contribution is to show how to implement element-wise functions  applied to a block-encoded matrix, which plays an essential role in the quantum self-attention block.
To achieve this, we also show how to perform the Hadamard product of block-encoded matrices.
The second contribution is to clearly state the conversion between state preparation encoding and matrix block encoding, based on previous works about nonlinear amplitude transformation \cite{guo2021nonlinear,rattew2023nonlinear}.
This ensures we can implement the complex transformer architecture coherently on the quantum computer.
Based on these methods and further tricks, we describe a complete implementation of the quantum self-attention, residual connection and layer normalization, and the FNN blocks on a quantum computer.

\subsection{Element-wise function of block-encoded matrices}

In this section, we show an essential building block for our algorithm.
For a function $f:\mathbb R \to \mathbb R$ and a matrix $A \in \mathbb C^{2^n\times 2^n}$, the task is to apply the element-wise operation $f\circ(A)$. In a classical or quantum query model for the matrix elements, the solution is to apply the function after each particular element is queried. However, here we do not work in such a query model. The matrix $A$ is accessed via querying the circuit that constructs a block encoding.
This access model includes the element query model, but also includes the use of other input models such as input from a preceding subroutine. 

The key idea of our subroutines is a rather surprising concatenation of simple tricks as follows, see below for the formal results. Assume that $f$ in some range admits a polynomial approximation $g$ with some degree $\dpoly$ and some point-wise error, i.e, $f(x)\approx g(x)=\sum_{k=0}^{\dpoly} c_k x^k$. 
For each entry of the matrix inside the range, it holds that $f(A_{ij})\approx g(A_{ij})$ and thus
$[f\circ(A)]_{ij} \approx  [g\circ(A)]_{ij}$.
By definition of $g$, the entry can be expressed as
$[g\circ(A)]_{ij} = \sum_{k=0}^{\dpoly}c_k A_{ij}^k $. The next step is that, for $k > 0$, the $k$-degree monomial can be expressed using the $k$-th Hadamard product of the matrix $A^{\circ k}$, i.e., $A_{ij}^k = (A^{\circ k})_{ij}$.
Furthermore, we can relate the Hadamard product to the tensor product as follows. There exists a matrix $P$ such that
$A^{\circ k} = [P A^{\otimes_k} P^{\top}]_{\rm block}$,
where the subscript ``block" indicates that we choose the correct block of the matrix $P A^{\otimes_k} P^{\top}$. Hence, we can implement the element-wise polynomial by applying linear combination of unitaries on monomial blocks and a constant matrix such that 
\begin{equation}
[f\circ(A)]_{ij} \approx \sum_{k=0}^{\dpoly}c_k \ab[[P A^{\otimes_k} P^{\top}]_{\rm block}]_{ij}.
\end{equation}
In summary, the quantum algorithm uses a tensor-product of matrices, permutation matrices, linear combination of matrices,  and polynomial approximation to construct an elementwise application of a function to the matrix entries. 
We start with a lemma about the max-norm of a block-encoding. 

\begin{lemma}\label{lem:element-wise error bound}
    If $U$ is an $(\alpha,a,\epsilon)$-encoding of matrix $A \in \mathbb C^{2^n\times 2^n}$, we have 
    \begin{align}
        \max_{i,j \in [2^n]} \left |\alpha (\bra{0^{a}}\otimes\bra{i})U(\ket{0^{a}}\otimes \ket{j})- A_{ij}\right|\leq \epsilon.
    \end{align}
\end{lemma}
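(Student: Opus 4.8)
The plan is to recognize that the scalar $\alpha\,(\bra{0^{a}}\otimes\bra{i})U(\ket{0^{a}}\otimes\ket{j})$ is nothing but the $(i,j)$ matrix element of the matrix that $U$ actually block-encodes, and then to invoke the elementary fact that the largest-magnitude entry of any matrix is bounded above by its spectral norm. So the lemma is really just the statement that the max-norm is dominated by the spectral norm, combined with the block-encoding definition.

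First I would introduce the shorthand $\widetilde A \coloneqq \alpha\,(\bra{0^{a}}\otimes I_n)\,U\,(\ket{0^{a}}\otimes I_n)$ for the (scaled) top-left block of $U$, which is exactly the matrix appearing in \cref{def.blockencoding}. Extracting the $(i,j)$ entry with the system-register basis states gives
\begin{align}
\bra{i}\widetilde A\ket{j} = \alpha\,(\bra{0^{a}}\otimes\bra{i})\,U\,(\ket{0^{a}}\otimes\ket{j}),
\end{align}
so the quantity inside the absolute value in the statement is precisely $\widetilde A_{ij}-A_{ij} = \bra{i}(\widetilde A-A)\ket{j}$. The task thus reduces to bounding each entry of the error matrix $\widetilde A - A$.

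Next I would apply Cauchy--Schwarz: for any matrix $M$ and computational basis states $\ket{i},\ket{j}$,
\begin{align}
\abs{\bra{i}M\ket{j}} \leq \norm{\ket{i}}_2\,\norm{M\ket{j}}_2 \leq \norm{M}\,\norm{\ket{i}}_2\,\norm{\ket{j}}_2 = \norm{M},
\end{align}
using that $\ket{i},\ket{j}$ are unit vectors and $\norm{M\ket{j}}_2\leq\norm{M}$ by definition of the spectral norm. Taking $M=\widetilde A - A$ and combining with the block-encoding hypothesis of \cref{def.blockencoding}, namely $\norm{A-\widetilde A}\leq\epsilon$, yields $\abs{\widetilde A_{ij}-A_{ij}}\leq\epsilon$ for every $i,j\in[2^n]$. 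Maximizing over $i,j$ completes the argument.

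There is no real obstacle here; the only point requiring a little care is the entry-extraction identity, i.e., correctly identifying that sandwiching $U$ between $\ket{0^a}\bra{0^a}$ on the ancilla and the basis states $\ket{i},\ket{j}$ on the system register reproduces the $(i,j)$ entry of the block-encoded matrix. The rest is the standard max-norm $\leq$ spectral-norm inequality.
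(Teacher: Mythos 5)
Your proposal is correct and follows essentially the same route as the paper's proof: set $B = A - \alpha(\bra{0^a}\otimes I)U(\ket{0^a}\otimes I)$, note $\norm{B}\leq\epsilon$ directly from \cref{def.blockencoding}, and conclude by bounding the maximum entry of $B$ by its spectral norm. The only (cosmetic) difference is that you prove the inequality $\max_{i,j}\abs{B_{ij}}\leq\norm{B}$ inline via Cauchy--Schwarz, $\abs{\bra{i}B\ket{j}}\leq\norm{B\ket{j}}_2\leq\norm{B}$, whereas the paper delegates it to \cref{lem:bound.maxnorm}, which it proves via the singular value decomposition; your one-line argument is, if anything, the more elementary justification of the same standard fact.
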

\begin{proof}
    Let $B=A- \alpha(\bra{0^a}\otimes I)U(\ket{0^a}\otimes I)$, which is a complex matrix.
    By definition, 
    \begin{align}
        \|B\|=\|A- \alpha (\bra{0^a}\otimes I)U(\ket{0^a}\otimes I)\|\leq \epsilon.\notag
    \end{align}
    By the standard \cref{lem:bound.maxnorm}, we have $\max_{i,j}|B_{ij}|\leq \Vert B\Vert \leq \epsilon$.
\end{proof}
As seen from the qualitative discussion above, we have to be able to construct the Hadamard product between matrices. Here, we consider the general case of two different matrices. 
\begin{theorem}[Hadamard product of block-encoded matrices\label{Hadamard.blockencoding}]
With $n \in \mathbbm N$ and $N=2^n$, consider matrices $A_1,A_2\in \mathbb{C}^{N\times N}$, and assume that we have an $(\alpha,a,\delta)$-encoding of matrix $A_1$ and $(\beta,b,\epsilon)$-encoding of matrix $A_2$.
We can construct an $(\alpha\beta,a+b+n,\alpha\epsilon+\beta\delta)$-encoding of matrix $A_1\circ A_2$.    
\end{theorem}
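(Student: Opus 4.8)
The plan is to realize the Hadamard product as a \emph{compression} of the tensor product by a copy isometry, and then implement that isometry with CNOT gates so that the whole construction fits the block-encoding definition (\cref{def.blockencoding}). Concretely, let $V:\mathbb{C}^N \to \mathbb{C}^N \ox \mathbb{C}^N$ be the isometry defined on the computational basis by $V\ket{i}=\ket{i}\ket{i}$. A one-line entrywise check gives $\bra{i}V^\dg (A_1 \ox A_2) V\ket{j} = (A_1)_{ij}(A_2)_{ij}$, i.e.
\begin{align}
A_1 \circ A_2 = V^\dg (A_1 \ox A_2) V.\notag
\end{align}
Since $V^\dg V = I$, the isometry has unit norm, which will also be the source of a clean error bound. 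The key point is that $V$ is implementable unitarily: appending a fresh $n$-qubit register in state $\ket{0^n}$ and applying a layer of $n$ CNOT gates (controlled on the system register, targeting the fresh register) sends $\ket{i}\ket{0^n}\mapsto\ket{i}\ket{i}$, following the construction of Ref.~\cite{zhao2021compiling}. Call this CNOT layer $P$; note that $P=P^\dg$.

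Next I would assemble the block encoding. The exact block of $U_{A_1}\ox U_{A_2}$ is $(\tilde A_1 \ox \tilde A_2)/(\alpha\beta)$, where $\tilde A_1 = \alpha(\bra{0^a}\ox I)U_{A_1}(\ket{0^a}\ox I)$ and $\tilde A_2 = \beta(\bra{0^b}\ox I)U_{A_2}(\ket{0^b}\ox I)$ are the blocks guaranteed by \cref{def.blockencoding}. I would then set $\tilde U = P\,(U_{A_1}\ox U_{A_2})\,P$, with $P$ acting on the two system registers $S_1,S_2$ ($S_2$ being the freshly appended $n$-qubit register) and trivially on the $a+b$ block-encoding ancillas. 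Propagating computational-basis amplitudes through $P$, then $U_{A_1}\ox U_{A_2}$, then $P$, and finally projecting the $a+b$ ancillas together with $S_2$ onto $\ket{0}$ collapses the double sum over output indices $(i_1,i_2)$ to the diagonal $i_1=i_2$, yielding
\begin{align}
(\bra{0^{a+b+n}}\ox I_{S_1})\,\tilde U\,(\ket{0^{a+b+n}}\ox I_{S_1}) = \frac{\tilde A_1 \circ \tilde A_2}{\alpha\beta}.\notag
\end{align}
Hence $S_1$ is the system register, the remaining $a+b+n$ qubits form the ancilla, and each of $U_{A_1},U_{A_2}$ is used once.

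The remaining task is the error bound, and this is where the isometry picture pays off. The extracted block is exactly $(\tilde A_1 \circ \tilde A_2)/(\alpha\beta)$, so it suffices to bound $\norm{A_1\circ A_2 - \tilde A_1 \circ \tilde A_2}$. Splitting $A_1\circ A_2 - \tilde A_1\circ\tilde A_2 = A_1\circ(A_2-\tilde A_2)+(A_1-\tilde A_1)\circ\tilde A_2$ and using the submultiplicative bound $\norm{X\circ Y}=\norm{V^\dg(X\ox Y)V}\le\norm{X}\norm{Y}$ (an immediate consequence of the isometry identity above together with $\norm{X\ox Y}=\norm{X}\norm{Y}$ and $\norm{V}=1$), combined with $\norm{A_1}\le\alpha$, $\norm{\tilde A_2}\le\beta$, $\norm{A_2-\tilde A_2}\le\epsilon$, and $\norm{A_1-\tilde A_1}\le\delta$, gives the total error $\alpha\epsilon+\beta\delta$, exactly as claimed; this telescoping mirrors the one behind \cref{product.blockencoding}. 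The only genuinely delicate step is the middle one: verifying that conjugation by $P$ together with the post-selection $S_2=\ket{0^n}$ correctly enforces the Kronecker delta $i_1=i_2$ that collapses the tensor product onto the Hadamard product, and that the uncompute branch leaks nothing into the selected block. I expect the careful bookkeeping of the CNOT action on both the input copy and the output uncomputation to be the main place where attention is needed, although it is conceptually straightforward.
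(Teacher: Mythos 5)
Your proposal is correct and is essentially the paper's own proof: your copy isometry $V$ satisfies $V=P(I_n\otimes\ket{0^n})$ with $P$ exactly the paper's layer of $n$ CNOT gates from Ref.~\cite{zhao2021compiling}, so the identity $A_1\circ A_2=V^\dg(A_1\ox A_2)V$ and the projection onto $\ket{0^{a+b+n}}$ reproduce the paper's construction verbatim, with the same single use of each input unitary and the same ancilla count $a+b+n$. Your error analysis is also the same telescoping split $A_1\circ(A_2-\tilde A_2)+(A_1-\tilde A_1)\circ\tilde A_2$ yielding $\alpha\epsilon+\beta\delta$; packaging the needed estimate as the norm inequality $\norm{X\circ Y}\le\norm{X}\norm{Y}$ is a slightly cleaner statement of the same bound.
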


\begin{proof}
For simplicity, we first consider the perfect case without input block-encoding errors.
Let $U_{A_1}$ and $U_{A_2}$ be the $(\alpha,a,0)$- or $(\beta,b,0)$-encoding unitary of $A_1$ and $A_2$, respectively.
Note that 
\begin{align}
    (\bra{0^{a+b}}\otimes I_{2n})(I_b\otimes U_{A_1}\otimes I_n)(I_a\otimes U_{A_2}\otimes I_n)(\ket{0^{a+b}}\otimes I_{2n})=\frac{1}{\alpha \beta}A_1\otimes A_2.
\end{align}
Let $P'=\sum_{i=0}^{N-1} |i\rangle\langle i|\otimes |0\rangle\langle i|$.
As shown in Ref.~\cite{zhao2021compiling}, $P'(A_1\otimes A_2)P'^\dagger= (A_1 \circ A_2)\otimes |0\rangle \langle 0|$.
However, note that $P'$ is not a unitary.
Instead, we consider $P=\sum_{i,j=0}^{N-1} |i\rangle\langle i|\otimes |i\oplus j\rangle\langle j| $, which can be easily constructed by using $n$ CNOT gates, i.e., one CNOT gate between each pair of qubits consisting of one qubit from the first register and the corresponding qubit from the second register.
By direct computation, we have 
\begin{align}
    (I_n\otimes \bra{0^n}) P(A_1\otimes A_2)P^\dagger (I_n\otimes \ket{0^n} )=A_1\circ A_2.
\end{align}
Therefore, 
\begin{align}
(I_n \otimes \bra{0^{n+a+b}})\left((
P\otimes I_{a+b}) (I_b\otimes U_{A_1}\otimes I_n)(I_a\otimes U_{A_2}\otimes I_n)(P^\dagger\otimes I_{a+b})\right)(I_n\otimes\ket{0^{n+a+b}})=\frac{1}{\alpha\beta}A_1\circ A_2.
\end{align}
Now we consider the error from the input block encodings.
Write $\Bar{A}_1\coloneqq \alpha\bra{0^a}U_{A_1}\ket{0^a}$ and $\Bar{A}_2\coloneqq \beta\bra{0^b}U_{A_2}\ket{0^b}$.
Let $B_1= A_1-\Bar{A}_1$ and $B_2=A_2-\Bar{A}_2$.
By definition, $\|B_1\|\leq \delta$, $\|B_2\|\leq \epsilon$.
The error can be bounded by
\begin{align}
    &\norm*{A_1\circ A_2 - \alpha\beta (\bra{0^{n+a+b}})\left((P\otimes I_{a+b}) (I_b\otimes U_{A_1}\otimes I_n)(I_a\otimes U_{A_2}\otimes I_n)(P^\dagger\otimes I_{a+b})\right)(\ket{0^{n+a+b}})}\notag\\
    \leq{}& \norm*{A_1\circ A_2- \alpha\beta \bra{0^n}\left(P(\bra{0^{a+b}}(I_b\otimes U_{A_1}\otimes I_n)(I_a\otimes U_{A_2}\otimes I_n)\ket{0^{a+b}})P^\dagger\right)\ket{0^n}}\notag\\
    \leq{}& \norm*{A_1\circ A_2- \bra{0^n}\left(P \Bar{A}_1 \otimes \Bar{A}_2 P^\dagger\right)\ket{0^n}}\notag\\
    \leq{}& \norm*{A_1\circ A_2 +\bra{0^n}\left(P A_1 \otimes \Bar{A}_2 P^\dagger\right)\ket{0^n} - \bra{0^n}\left(P A_1 \otimes \Bar{A}_2 P^\dagger\right)\ket{0^n}-\bra{0^n}\left(P\Bar{A}_1 \otimes \Bar{A}_2 P^\dagger\right)\ket{0^n}}\notag\\
    \leq{}& \norm*{A_1\circ A_2- \bra{0^n}\left(P A_1 \otimes \Bar{A}_2 P^\dagger\right)\ket{0^n}}+ \norm*{\bra{0^n}\left(P A_1 \otimes \Bar{A}_2 P^\dagger\right)\ket{0^n}-\bra{0^n}\left(P\Bar{A}_1 \otimes \Bar{A}_2 P^\dagger\right)\ket{0^n}}\notag\\
    \leq{}& \norm*{\bra{0^n}\left(P A_1 \otimes B_2 P^\dagger\right)\ket{0^n}}+ \norm*{\bra{0^n}\left(P B_1 \otimes \Bar{A}_2 P^\dagger\right)\ket{0^n}}\notag\\
    \leq{}& \alpha\epsilon+\beta\delta.
\end{align}
\end{proof}

The previous lemma can be implemented iteratively.
Given an $(\alpha,a,\epsilon)$-encoding of matrix $A$, for $j \in \mathbb{N}>0$,
one can construct an $(1,j a+(j-1)n,j\epsilon/\alpha)$-encoding of matrix $(A/\alpha)^{\circ j}\coloneqq (A/\alpha)\circ (A/\alpha)\circ \cdots \circ (A/\alpha)$ containing $j-1$ Hadamard products among $j$ copies of matrix $A/\alpha$.
In the following, we describe how to implement the polynomials element-wisely onto the block encoded matrix by combining the Hadamard product with linear combination of unitaries \cite{childs2012hamiltonian}.

\begin{theorem}[Element-wise polynomial function of block-encoded matrix\label{elementfunction.blockencoding}]
Let $n,k \in \mathbbm N$.
Given access to an $(\alpha, a,\epsilon)$-encoding $U_A$ of a matrix $A\in \mathbb{C}^{2^n\times 2^n}$ and an $\ell$-degree polynomial function $f_{\ell}(x)=\sum_{j=1}^{\ell} c_j x^j$, $c_j \in \mathbb C$ for $j \in [l]$, one can construct a $(C,b,\gamma)$-encoding of $f_{\ell}\circ(A/\alpha)$ by using \naixu{$\mathcal{O}(\ell)$} times the input unitary, where $C\coloneqq \sum_{j=1}^{\ell} |c_j| $, $b\coloneqq \ell a+(\ell-1)n+2\log \ell$, and $\gamma\coloneqq \frac{\epsilon}{\alpha}\cdot (\sum_{j=1}^\ell |c_j|j)$.
For polynomial function $g_\ell(x)=\sum_{j=0}^{\ell} c_j x^j$ with constant term $c_0$, one can construct a $(C',b,\gamma)$-encoding of $g_{\ell}\circ (A/\alpha)$, where $C'=N c_0+C$.
\end{theorem}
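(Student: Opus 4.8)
The plan is to realize the element-wise polynomial as a linear combination of Hadamard powers and then feed it through the linear-combination-of-unitaries (LCU) machinery, taking care to arrange the queries so the cost stays linear in the degree. The starting identity is that applying $f_\ell$ element-wise distributes over the monomials, so that entrywise
\begin{equation}
f_\ell\circ(A/\alpha)=\sum_{j=1}^{\ell} c_j\,(A/\alpha)^{\circ j},
\end{equation}
where $(A/\alpha)^{\circ j}$ denotes the $j$-fold Hadamard product. By the iterated form of \cref{Hadamard.blockencoding} noted just after its proof, each term $(A/\alpha)^{\circ j}$ is available as a $(1,\,ja+(j-1)n,\,j\epsilon/\alpha)$-encoding assembled from $j$ copies of $U_A$ interleaved with the CNOT-based permutation $P$ and its adjoint. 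This pins down the per-term encoding factor as $1$, the per-term ancilla count, and the per-term error $j\epsilon/\alpha$, which are precisely the quantities that will propagate into $C$, $b$, and $\gamma$.

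First I would assemble the LCU. I would pad every $(A/\alpha)^{\circ j}$ up to the largest term's register of size $\ell a+(\ell-1)n$ (acting trivially on the surplus copies and projecting the extra ancillas onto $\ket{0}$), so that all terms share encoding factor $1$ and the uniform block structure demanded by \cref{LCU.blockencoding}. The coefficient vector $(c_1,\dots,c_\ell)$ is then loaded into a state-preparation pair $(P_L,P_R)$ with weight $C=\sum_{j=1}^\ell|c_j|$; this is cheap because the vector has dimension only $\ell$, so \cref{lemma.statepreparation} prepares it with depth and ancilla overhead logarithmic in $\ell$, accounting for the $2\log\ell$ term in $b$. Applying \cref{LCU.blockencoding} yields encoding factor $1\cdot C=C$ and, after weighting each per-term error $j\epsilon/\alpha$ by its coefficient magnitude $|c_j|$, total error $\gamma=\tfrac{\epsilon}{\alpha}\sum_{j=1}^\ell|c_j|\,j$. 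Here I would record the mild extension of \cref{LCU.blockencoding} to nonuniform per-term errors, which follows from the same triangle-inequality bookkeeping as its original proof.

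The main obstacle is the query count. A naive LCU that constructs each $(A/\alpha)^{\circ j}$ separately would spend $\sum_{j=1}^\ell j=\Theta(\ell^2)$ calls to $U_A$. To reach the claimed $\mathcal{O}(\ell)$, I would adopt the shared-query structure of Lemma~8 in Ref.~\cite{Childs_2017}: the $\ell$ copies of $U_A$ are applied once each inside a single select routine, and the LCU index register only controls how many of the Hadamard-combining permutations are engaged and which ancillas are projected, rather than re-invoking $U_A$ for each monomial. Checking that this reuse reproduces exactly the padded $(A/\alpha)^{\circ j}$ in the $j$-th branch, and that the unused copies are projected out cleanly without leaving spurious factors of $A/\alpha$, is the delicate point and the step I expect to demand the most care.

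Finally, for the constant-term variant $g_\ell$, I would handle $c_0$ separately, because applying a constant element-wise means adding $c_0$ to every entry, i.e.\ adding $c_0\mathbf 1$ with $\mathbf 1$ the all-ones $N\times N$ matrix. Writing $\mathbf 1=N\,\ket{+^n}\!\bra{+^n}$ with $\ket{+^n}=H^{\otimes n}\ket{0^n}$ gives a trivial $(N,\cdot,0)$-encoding of $\mathbf 1$, so folding this extra term into the same LCU adds $N|c_0|$ to the encoding factor while leaving the ancilla count and error untouched, producing the claimed $(C',b,\gamma)$-encoding with $C'=Nc_0+C$.
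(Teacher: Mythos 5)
Your proposal is correct and follows essentially the same route as the paper's proof: decomposition into Hadamard powers with the iterated $(1,\,ja+(j-1)n,\,j\epsilon/\alpha)$-encodings, an LCU over a $(C,2\log\ell,0)$ state-preparation pair with the weighted error bookkeeping giving $\gamma=\frac{\epsilon}{\alpha}\sum_j|c_j|j$, the Lemma~8 query-sharing trick of Ref.~\cite{Childs_2017} to reduce the naive $\Theta(\ell^2)$ queries to $\mathcal{O}(\ell)$, and the all-ones matrix with encoding factor $N$ for the constant term (your $N\ket{+^n}\bra{+^n}$ is the same object the paper realizes explicitly via the reflection $\frac{N}{2}\bigl(I_n-H^{\otimes n}(I_n-2\ket{0^n}\bra{0^n})H^{\otimes n}\bigr)$). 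The one point you flag as delicate — verifying the select routine cleanly produces the padded $j$-th Hadamard power in each branch — is exactly where the paper's construction of $W$ from encodings of $A^{\circ 2^j}$, $j\le\lfloor\log\ell\rfloor$, does the work, so your plan matches the published argument in all essentials.
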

    
\begin{proof}
We first consider the perfect case, i.e., $\epsilon=0$.
To achieve this implementation, we construct two state-preparation unitaries, which act on $\lceil \log(\ell+1)\rceil$ qubits such that
\begin{align}
P_L:\ \ket{0^{\lceil \log(\ell+1)\rceil}}&\to\frac{1}{\sqrt{C}}\sum_{j=1}^{\ell} \sqrt{|c_j|} \ket{j},\\
P_R:\ \ket{0^{\lceil \log(\ell+1)\rceil}}& \to\frac{1}{\sqrt{C}}\sum_{j=1}^{\ell} \sqrt{|c_j|}e^{i\theta_j} \ket{j},
\end{align}
where $C=\sum_{j=1}^{\ell} |c_j|$ and $|c_j| e^{i\theta_j}=c_j$.
By \cref{lemma.statepreparation}, $P_L$ and $P_R$ can be prepared with depth $\mathcal{O}(\ell)$ using only elementary quantum gates.
Therefore, by \cref{LCUcoefficient.blockencoding}, $(P_L,P_R)$ is a $(C,2\log \ell,0)$ state-preparation pair of $(c_1,\dots, c_\ell)$.

Now, we describe how to construct the unitary $W=\sum_{j=1}^{\ell}|j\rangle \langle j|\otimes U_{A^j} +(I_{2\log \ell}-\sum_{j=1}^{\ell}|j\rangle \langle j|)\otimes I_{\ell a+\ell n}$, where $U_{A^j}$ is a block encoding of $A^{\circ j}$.
Similar to Lemma~$8$ in \cite{Childs_2017}, instead of preparing block encodings of $A^{\circ j}$ for all $j\in [\ell]$, it suffices to prepare block encodings of $A^{\circ 2^j}$ for $j\in \lfloor \log N \rfloor$.
For $j>0$, we can construct a $(1,ja+(j-1)n,0)$-encoding $U_{A^j}$ of $(A/\alpha)^{\circ j}$ by iteratively applying Theorem~\ref{Hadamard.blockencoding}.
Combining these together, we need to use $\mathcal{O}( \sum_{j=1}^{\lfloor \log \ell \rfloor} 2^j)=\mathcal{O}(\ell)$ times of $U_A$ to construct $(\ell a+(\ell-1)n+2\log \ell)$-qubit unitary $W$.
By \cref{LCU.blockencoding}, we can implement a $(C,\ell a+(\ell-1)n+2\log \ell, 0)$-encoding of $f_{\ell}\circ(A/\alpha)$.

To implement element-wise functions including constant term, we also need access to the block encoding of a matrix whose elements are all $1$.
Notice that this matrix can be written as the linear combination of the identity matrix and the reflection operator, i.e.,
  \begin{align}
    \sum_{k,k'}|k\rangle\langle k|= \frac{N}{2}\left(I_n-(I_n- \frac{2}{N} \sum_{k k'}\ket {k}\bra{k'})\right)=\frac{N}{2}(I_n-H^{\otimes n}\left(I_n-2 \ket { 0^n}\bra{ 0^n} \right)H^{\otimes n}).\label{eq.sum_all_one}
  \end{align}
Define $U_{\mathrm{ref}}= |0\rangle\langle 0|\otimes I_n+|1\rangle\langle 1| \otimes (H^{\otimes n} (I_n-2|0\rangle\langle 0|)H^{\otimes n})$.
By direct computation, one can show that $U_0=(XH\otimes I_n)U_{\mathrm{ref}}(H\otimes I_n)$ is an $(N,1,0)$-encoding of $\sum_{k,k'}|k\rangle\langle k|$.
One can achieve the element-wise function by following the same steps as above.
One point to notice is that we can only construct $(N,1,0)$-encoding of the matrix whose elements are all $1$ since the spectral norm of this matrix is $N$. We encode $N c_0$ into the state instead of $c_0$. 

Now we perform the error analysis.
As mentioned, for each $(A/\alpha)^{\circ j}$, the error is bounded by $j\epsilon/\alpha$.
Summing up these errors, the error of $f_\ell\circ(A/\alpha)$ can be bounded by $\frac{\epsilon}{\alpha}\cdot (\sum_{j=0}^\ell |c_j|j)\eqqcolon\gamma$.
\end{proof}

How to use polynomial functions to approximate many useful functions has been well studied in the field of approximation theory.
Those results have also been utilized in the quantum computing field for QSVT-based quantum algorithms via \textit{quantum signal processing} \cite{low2017optimal}.
Note that here, we only consider functions with no constant term.


\subsection{Conversion between state preparation encoding and matrix block encoding}

Typically for each block in the transformer, the input is a vector $\psi$ and the output is another vector $f(\psi)$ in the same dimension with some nonlinear transformations.
As the quantum analog, the question becomes given a state-encoding unitary of some input state $\ket{\psi}$, output a state-encoding unitary of the state $\ket{f(\psi)}$.

To achieve this, we use the diagonal block encoding developed in the context of the nonlinear amplitude transformation method, which has been introduced in Ref.~\cite{guo2021nonlinear, rattew2023nonlinear}.
The key insight 
of the nonlinear amplitude transformation 
is that it can convert a state preparation encoding as in \cref{def.stateencoding} to a matrix block encoding as \cref{def.blockencoding}.
Then, by \cref{theorem.qsvt} one can implement polynomial functions onto these amplitudes.
For our discussion, we directly describe the robust version, which is a straightforward generalization of previous works.
The proof is provided in \cref{appen.nonlinearamplitude}.

\begin{theorem}[Robust amplitude encoding \cite{guo2021nonlinear, rattew2023nonlinear}\label{block encoding.amplitudes}]
    Given an $(\alpha,a,\epsilon)$-state-encoding $U_{\psi}$ of an $n$-qubit state $\ket{\psi}=\sum_{j=1}^{N} \psi_j \ket{j}$, where $\{\psi_j\}$ are real and $\norm{\psi}_2=1$,
    one can construct an $(\alpha,2a+n+2,\epsilon)$-encoding of the diagonal matrix $A=\diag(\psi_1, \dots, \psi_{N})$ with $\mathcal{O}(n)$ circuit depth and $\mathcal{O}(1)$ queries to controlled-$U$ and controlled-$U^\dagger$.
    One can also construct an $(\alpha^2, 3a+2n+2, 3\epsilon)$-encoding of diagonal matrix $A_{abs}=\diag(\psi_1^2,\dots, \psi_{N}^2)$.
\end{theorem}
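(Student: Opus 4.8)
The plan is to first recall the ideal (error-free) gadget that turns the amplitudes $\psi_k$ into the diagonal entries of a block-encoded matrix, and then upgrade it to the robust statement by tracking the $L^\infty$ error and the real-part extraction through the circuit. Interpret $U_\psi$ as producing, on an $n$-qubit register $B$ together with its $a$-qubit flag register $A$, the amplitudes $\psi'_k$ in the good subspace, i.e. $(\bra{0^a}\otimes I_n)U_\psi\ket{0^{a+n}} = \frac{1}{\alpha}\sum_k \psi'_k\ket{k}$ with $\norm{\psi'-\psi}_\infty\le\epsilon$. I would introduce the $n$-qubit \emph{system} register $C$ and reuse exactly the CNOT ``copy'' permutation $P$ that already appears in \cref{Hadamard.blockencoding}: applying $U_\psi$ on $(A,B)$ and then a layer of $n$ CNOTs controlled by $C$ onto $B$ sends the good component $\frac{1}{\alpha}\psi'_k\ket{0^a}_A\ket{k}_B\ket{j}_C$ to $\frac{1}{\alpha}\psi'_k\ket{0^a}_A\ket{k\oplus j}_B\ket{j}_C$. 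Projecting $B$ onto $\ket{0^n}$ forces $k=j$, so $\bra{0^a}_A\bra{0^n}_B\bra{i}_C\,W\,\ket{0^a}_A\ket{0^n}_B\ket{j}_C = \frac{1}{\alpha}\psi'_j\delta_{ij}$, exhibiting $W$ as an $(\alpha,\cdot,\cdot)$-encoding of $\diag(\psi')$ with system register $C$.

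Because the state-encoding only guarantees closeness in $L^\infty$, and because the statement explicitly requires controlled-$U_\psi$ and controlled-$U_\psi^\dagger$, I would wrap the gadget in a Hadamard-test layer: an extra control qubit in $\ket{+}$ selects between a $U_\psi$ branch and a $U_\psi^\dagger$ branch, so the extracted diagonal entry becomes $\tfrac{1}{2}(\psi'_j+\overline{\psi'_j})=\mathrm{Re}(\psi'_j)$. This symmetrization is what costs the second copy of the flag register and the two extra qubits, accounting for the $2a+n+2$ ancillas in the statement; it uses $\mathcal{O}(1)$ controlled queries and the CNOT layer gives $\mathcal{O}(n)$ depth. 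Taking the real part is harmless since $\psi$ is real, so that $|\mathrm{Re}(\psi'_j)-\psi_j|=|\mathrm{Re}(\psi'_j-\psi_j)|\le|\psi'_j-\psi_j|$.

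The crucial and clean point — the reason \cref{def.stateencoding} is stated in $L^\infty$ — is that for diagonal matrices the spectral norm equals the max-norm of the vector, so $\norm{\diag(\psi)-\diag(\mathrm{Re}\,\psi')}=\max_j|\psi_j-\mathrm{Re}(\psi'_j)|\le\epsilon$; the state-encoding error transfers verbatim to block-encoding error $\epsilon$. For $A_{abs}=\diag(\psi_1^2,\dots,\psi_N^2)$ I would run the same amplitude-extraction idea but with one additional copy of $U_\psi$ feeding a second flag register ($a$ qubits) and a second index register ($n$ qubits), so that the surviving diagonal entry is the product $\psi'_j\cdot\psi'_j$; this squares the encoding factor to $\alpha^2$ and raises the ancilla count to $3a+2n+2$, matching the statement. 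Rather than invoke the generic $2\alpha\epsilon$ bound of \cref{Hadamard.blockencoding}, I would bound the error directly by $\max_j|\psi_j^2-\psi'^2_j|=\max_j|\psi_j-\psi'_j|\,|\psi_j+\psi'_j|\le\epsilon(|\psi_j|+|\psi'_j|)\le 3\epsilon$, using $|\psi_j|\le1$ and $|\psi'_j|\le1+\epsilon$. The main obstacle is the bookkeeping of the symmetrization gadget — keeping the amplitude-to-diagonal map reversible, getting the real part with the correct sign, and matching the precise ancilla budget — whereas the error propagation is the easy part thanks to the $L^\infty$ convention.
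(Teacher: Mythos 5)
Your proposal is correct and takes essentially the same route as the paper: the paper's proof simply invokes the diagonal amplitude block-encoding construction of Refs.~\cite{guo2021nonlinear, rattew2023nonlinear} (and \cite{mitarai2019quantum} for the squared case) and concentrates on the error analysis, which you reproduce exactly — for diagonal matrices the spectral norm equals the max-norm of the entries, so the $L^\infty$ state-encoding error transfers verbatim, and $\max_j\abs{\psi_j^2-\psi_j'^2}\le 2\epsilon+\epsilon^2\le 3\epsilon$ (for $\epsilon\le 1$). Your explicit CNOT-copy gadget with the Hadamard-test symmetrization extracting $\mathrm{Re}(\psi'_j)$ is a faithful rendering of the construction the paper only cites, so the two arguments coincide in substance.
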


The reason why we slightly changed the definition of state preparation encoding compared to Ref.~\cite{rattew2023nonlinear}, i.e., from $L_2$ norm to $L_{\infty}$ norm, is that after robust amplitude encoding, the $L_{\infty}$ distance between the target state $\ket{\psi}$ and exact preparable state $\ket{\psi'}$ is directly the upper bound of $\norm{\diag(\psi_1,\dots,\psi_N)-\diag(\psi_1',\dots,\psi_N')}$.

After implementing functions with QSVT, one needs to convert the block-encoding back to the state-encoding. 
This can be achieved by either the uniform-weighted \cite{guo2021nonlinear} or the importance-weighted \cite{rattew2023nonlinear} method.
The first one is more general, yet the latter one can achieve a much better, i.e., up to exponentially better, dependency on the state dimension.
A point to note is about the error analysis.
We have the error bound in matrix norm for block-encoding, which is also an upper bound for each matrix element difference, as \cref{lem:element-wise error bound}.
However, in general, the column/row of the block-encoded matrix is not normalized in the $L_2$ norm, so we also need to consider the influence of the normalization factor.
We prove the following lemma, where the proof is provided in \cref{appen.ineqnormalizeerror}.

\begin{lemma}\label{lem:inf_norm_bound_for_normalized_vectors}
    For two $d$-dimensional vectors $\psi=(\psi_1,\dots,\psi_d)$ and $\psi'=(\psi'_1,\dots,\psi'_d)$, if $|\psi_j-\psi'_j|\leq \epsilon$ for each $j\in [d]$, we have 
    \begin{align}
        \norm[\Big]{\frac{1}{C}\psi - \frac{1}{C'}\psi'}_\infty \leq \frac{(\sqrt{d}+1)\epsilon}{C} +\sqrt{\frac{2\epsilon\sqrt{d}}{C}}=\mathcal{O}\left(\sqrt{\frac{\epsilon\sqrt{d}}{C}}\right),
    \end{align}
    where $C=\norm{\psi}_2$ and $C'=\norm{\psi'}_2$.
\end{lemma}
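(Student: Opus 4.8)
The plan is to reduce the claim to two elementary estimates — an entrywise perturbation bound on the numerators and a perturbation bound on the two normalization constants — and to combine them with the triangle inequality in the $L^\infty$ norm.

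First I would split the difference so as to separate the two sources of error,
\begin{equation}
\frac{1}{C}\psi - \frac{1}{C'}\psi' = \frac{1}{C}(\psi - \psi') + \left(\frac{1}{C} - \frac{1}{C'}\right)\psi',
\end{equation}
and take $\norm{\cdot}_\infty$ of both sides. The first term is immediate: by hypothesis $\norm{\psi-\psi'}_\infty \le \epsilon$, so it contributes at most $\epsilon/C$.

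For the normalization term I would bound $\norm{\psi'}_\infty \le \norm{\psi'}_2 = C'$ together with $\abs{1/C - 1/C'} = \abs{C-C'}/(CC')$, so that the factor $C'$ cancels and this term is at most $\abs{C-C'}/C$. The key remaining input is a bound on $\abs{C-C'}$: by the reverse triangle inequality $\abs{C-C'} = \bigl|\norm{\psi}_2 - \norm{\psi'}_2\bigr| \le \norm{\psi - \psi'}_2$, and since each of the $d$ entries differs by at most $\epsilon$ we get $\norm{\psi-\psi'}_2 \le \sqrt{d}\,\epsilon$. Hence the normalization term is at most $\sqrt{d}\,\epsilon/C$. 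Combining the two pieces yields $\norm{\tfrac{1}{C}\psi - \tfrac{1}{C'}\psi'}_\infty \le (1+\sqrt{d})\,\epsilon/C$, which is exactly the first (and dominant) summand of the claimed bound; as the leftover term $\sqrt{2\epsilon\sqrt{d}/C}$ is nonnegative, the stated inequality follows a fortiori.

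The only delicate point is controlling $1/C'$, since a priori the perturbed norm $C'$ could be much smaller than $C$, and a naive estimate would blow up; the main obstacle is therefore ensuring the normalization mismatch stays bounded. This is what the cancellation above resolves cleanly — bounding $\norm{\psi'}_\infty$ by $C'$ removes $C'$ from the denominator entirely, so no lower bound on $C'$ is needed. I expect the extra square-root term in the stated bound to come from a looser route (for instance expanding $C'^2 - C^2$ and estimating $\abs{C-C'}$ less sharply), whereas the reverse-triangle-inequality argument gives the tighter constant directly and still implies the claim.
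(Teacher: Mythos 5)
Your proof is correct, and it is both simpler and strictly sharper than the paper's. Both arguments share the same skeleton: split the difference into an entrywise-error piece and a normalization-mismatch piece, and cancel $C'$ from the denominator via $\norm{\psi'}_\infty \leq \norm{\psi'}_2 = C'$ (the paper does this entrywise, writing $\abs{C'\psi_j - C\psi'_j} \leq \abs{(C'-C)\psi'_j} + C'\epsilon$, which is the same algebra as your vector-level triangle-inequality split). The genuine divergence is in how $\abs{C-C'}$ is controlled. The paper takes the looser route you anticipated: it uses $\abs{C-C'} \leq \sqrt{\abs{C^2 - C'^2}}$, expands $\abs{C^2-C'^2} \leq \sum_j \abs{\psi_j - \psi'_j}\abs{\psi_j+\psi'_j} \leq d\epsilon^2 + 2\epsilon\norm{\psi}_1$, and then applies the $L^1$--$L^2$ norm inequality $\norm{\psi}_1 \leq \sqrt{d}\,C$, which is precisely where the term $\sqrt{2\epsilon\sqrt{d}/C}$ originates. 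Your reverse triangle inequality $\abs{C-C'} \leq \norm{\psi-\psi'}_2 \leq \sqrt{d}\,\epsilon$ short-circuits all of this and yields the purely linear-in-$\epsilon$ bound $(1+\sqrt{d})\epsilon/C$, from which the stated inequality follows a fortiori since the square-root term is nonnegative. Your bound would in fact improve the downstream error analyses that invoke this lemma (e.g., in the quantum softmax and feed-forward network theorems), where the $\mathcal{O}\bigl(\sqrt{\epsilon\sqrt{d}/C}\bigr)$ form forces the $\sqrt{\epsilon}$ dependence and hence the stringent $\epsilon^8$-type precision requirements in the final single-layer theorem; the only thing the paper's weaker route "buys" is nothing mathematical — it appears to be simply a less sharp estimate. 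One cosmetic remark: your argument establishes the displayed inequality but does not address the trailing equality $= \mathcal{O}\bigl(\sqrt{\epsilon\sqrt{d}/C}\bigr)$, which implicitly assumes the regime $\sqrt{d}\,\epsilon/C \lesssim 1$; in that regime your tighter bound is also dominated by $\sqrt{\epsilon\sqrt{d}/C}$, so nothing is lost.
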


As an example, one can easily see the following stands using lemma \cref{lem:inf_norm_bound_for_normalized_vectors}.

\begin{remark}
Given an $(\alpha,a,\epsilon)$-encoding $U_A$ of a matrix $A\in \mathbb{C}^{d \times d}$, for $U_i:\ket{0}\rightarrow\ket{i}$ where $i\in [d]$, $U_A(U_i\otimes I_a)$ is a $(\mathcal{O}(\alpha/C),a,\mathcal{O}((\epsilon \sqrt{d}/C)^{\frac{1}{2}}))$-state-encoding of $\frac{1}{C}\sum_{j=1}^d A_{ji}\ket{j}$, where $C=\|A_{\star i}\|_2$.
\end{remark}

\subsection{Quantum self-attention\label{sec.attention}}
In this section, we describe how to achieve the quantum self-attention block.
Given the block encoding of matrices as input and let $j$-th token be the current query vector, the output is a block encoding unitary of a matrix whose $j$-th row is the same as the output of the classical transformer.
We divide the task into two parts: the first part is to achieve the softmax function;
the second part is to achieve the remaining procedures. 

We provide two methods to implement the softmax function: one is based on the element-wise function as \cref{elementfunction.blockencoding}, and the other one is based on the nonlinear amplitude transformation as \cref{block encoding.amplitudes}.
In the main part, we follow the results based on the element-wise function.
The key insight for achieving the softmax function via this method is that it can also be understood that we first implement $\exp\circ(Q K^{\top}/\alpha_{0})$, then multiply with different coefficients (normalization) for each row.
Detailed analysis for the nonlinear amplitude transformation based method and comparisons are provided in \cref{attention.nonlinear}.

For quantum self-attention, we set the scaling factor $\alpha_0=\alpha_s^2\alpha_w^2$ for the following reasons.
The first is that the $1/\sqrt{d}$ is chosen somehow in a heuristic sense, and there are already some classical works considering different scaling coefficients which may even achieve better performance \cite{yang2022tensor, ma2024era}.
The second, which is more important, is that the quantum input assumption using the block encoding format naturally contains the normalization factor $\alpha$ which plays a similar role to the scaling factor.
Therefore, for the quantum case in the context of our work, it suffices to use $\alpha$ directly.

\begin{theorem}[Quantum softmax for self-attention]\label{attention.softmax}
    Given an $(\alpha,a,\epsilon)$-encoding $U_A$ of a matrix $A\in \mathbb{R}^{N\times N}$, a positive integer $d\in \NN^+$, and an index $j\in [N]$,
    one can prepare a $\ab\big(1, \mathcal{O}(\ell(a+n)),\cO\ab\big(\sqrt[^4]{\frac{N}{Z_j}}\sqrt{\epsilon}))$-state-encoding of the state
    \[
    \ket{A_j}\coloneqq \sum_{k=1}^N \sqrt{\mathrm{softmax}\left(A/\alpha\right)_{jk}} \ket{k}=
    \frac{1}{\sqrt{Z_j}}\sum_{k=1}^N \exp\circ\ab\Big(\frac{A}{2\alpha})_{jk}\ket{k},\] 
    by using $U_A$ for $\cO\ab\big(\sqrt{\frac{N}{Z_j}}\ell)$ times, where $Z_j=\sum_{k=1}^N \exp\circ(A/\alpha)_{jk}$, and $\ell=\mathcal{O}\ab\big(n\log(\frac{1}{\epsilon}))$.
\end{theorem}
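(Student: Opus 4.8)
The plan is to read the target as a Gibbs-type state and prepare it by the standard recipe: apply an element-wise exponentiated version of $A/\alpha$ to the uniform superposition and then amplitude-amplify. Writing $\ket{A_j}=\frac{1}{\sqrt{Z_j}}\sum_{k=1}^N e^{(A/2\alpha)_{jk}}\ket{k}$, the amplitudes are the square roots of the unnormalized softmax weights of row $j$, and the normalizer $\sqrt{Z_j}$ is exactly the $L^2$-norm of that row. Since $\alpha\ge\|A\|$, every entry of $A/\alpha$ lies in $[-1,1]$, so I would first fix a polynomial $g_\ell(x)=\sum_{m=0}^{\ell}c_m x^m$ of degree $\ell=\cO(n\log(1/\epsilon))$ approximating $e^{x/2}$ on $[-1,1]$ (the $\log(1/\epsilon)$ from the target accuracy and the extra $n=\log N$ accommodating the dimension-dependent precision needed after the later amplification).

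First I would apply \cref{elementfunction.blockencoding} to $U_A$ and $g_\ell$, producing a block encoding of $g_\ell\circ(A/\alpha)$ with $\cO(\ell)$ queries to $U_A$. The delicate point is the encoding factor. The constant term $c_0\approx1$ multiplies the all-ones matrix, whose spectral norm is $N$; building the full matrix would carry factor $\cO(N)$, and extracting row $j$ and renormalizing would then cost $\cO(N/\sqrt{Z_j})$ rounds — a spurious factor $\sqrt N$. To avoid this I would separate the constant term and only block-encode $M'\coloneqq\sum_{m\ge1}c_m(A/\alpha)^{\circ m}$, which by \cref{elementfunction.blockencoding} has factor $C=\sum_{m\ge1}|c_m|=\cO(1)$ (the Taylor coefficients of $e^{x/2}$ are summable), while row $j$ of the $c_0$-term is simply $c_0$ times the unnormalized uniform vector and needs no circuitry.

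Next I would assemble the state. Row $j$ of $g_\ell\circ(A/\alpha)$ equals $c_0\vec 1$ plus the $j$-th row of $M'$, so $\ket{A_j}\propto c_0\sqrt N\ket{+^n}+\ket{m'_j}$, where $\ket{+^n}=\tfrac{1}{\sqrt N}\sum_k\ket{k}$ and $\ket{m'_j}=\sum_k (M')_{jk}\ket{k}$ has norm $\cO(1)$. I would extract $\ket{m'_j}$ from the $\cO(1)$-factor block encoding of $M'$ by feeding in $\ket{j}$ (using the self-adjoint of the circuit to pass from columns to rows, as the text notes), then combine it with the trivially prepared $\ket{+^n}$ through \cref{LCU.blockencoding} with coefficients $(c_0\sqrt N,\|m'_j\|)$. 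The state-preparation pair then carries weight $\beta=\cO(\sqrt N)$ while the assembled vector has norm $\sqrt{Z_j}$, so the resulting state encoding has factor $\cO(\sqrt{N/Z_j})$; one round of amplitude amplification with $\cO(\sqrt{N/Z_j})$ iterations yields the normalized $(1,\cdot,\cdot)$-state-encoding, for a total of $\cO(\sqrt{N/Z_j}\,\ell)$ uses of $U_A$, and the ancilla count $\cO(\ell(a+n))$ follows by tracking the registers of \cref{elementfunction.blockencoding}.

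Finally, the error analysis. By \cref{lem:element-wise error bound} the block encoding reproduces each matrix entry up to $\epsilon$, which together with the polynomial truncation gives an entrywise error $\cO(\epsilon)$ in the unnormalized vector. The subtle step — and the main obstacle — is turning this entrywise bound into an $L^\infty$ bound on the normalized state, since one divides by the approximate norm $\approx\sqrt{Z_j}$: applying \cref{lem:inf_norm_bound_for_normalized_vectors} with dimension $N$ and $C=\sqrt{Z_j}$ converts entrywise error $\cO(\epsilon)$ into $\cO(\sqrt{\epsilon\sqrt N/\sqrt{Z_j}})=\cO(\sqrt[4]{N/Z_j}\,\sqrt{\epsilon})$, precisely the claimed state-encoding error. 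Thus the two hard parts are the encoding-factor bookkeeping that keeps the all-ones contribution from inflating the cost from $\sqrt{N/Z_j}$ to $\sqrt N$, and shepherding the block-encoding error through the non-linear renormalization without losing more than the stated $\sqrt[4]{N/Z_j}\,\sqrt{\epsilon}$.
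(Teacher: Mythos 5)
Your proposal is correct and follows essentially the same route as the paper's proof: element-wise polynomial approximation of $e^{x/2}$ via \cref{elementfunction.blockencoding} with $\ell=\cO(n\log(1/\epsilon))$, a $\sqrt{N}$-weight handling of the constant term restricted to row $j$ (the paper block-encodes $\sum_k \ket{j}\bra{k}$ with encoding factor $\sqrt{N}$ inside the linear combination, which is the same trick as your splitting off $c_0\sqrt{N}\ket{+^n}$), row extraction via the adjoint circuit applied to $\ket{j}$, the renormalization bound of \cref{lem:inf_norm_bound_for_normalized_vectors} with $C=\sqrt{Z_j}$ yielding the $\cO(\sqrt[4]{N/Z_j}\sqrt{\epsilon})$ state-encoding error, and $\cO(\sqrt{N/Z_j})$ rounds of amplitude amplification. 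One cosmetic slip: in your final LCU the second coefficient should be the (known) encoding factor $C_{M'}=\cO(1)$ of the block encoding of $M'$, not the unknown row norm $\|m'_j\|$ — the relative weight $\sqrt{Z_j}$ emerges automatically from the assembled vector — but this does not affect the $\cO(\sqrt{N})$ bookkeeping or the stated complexity.
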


\begin{proof}
    We first construct the block encoding of $\exp\circ(\frac{A}{2\alpha})$.
    Note that Taylor expansion of $\exp(x)$ contains a constant term $1$.
    This can be achieved with \cref{elementfunction.blockencoding} and \cref{approximation.exp}.
    Here, since we are only focusing on the $j$-th row, instead of taking linear combination with the matrix whose elements are all $1$, we take sum with the matrix whose $j$-th row elements are all $1$ and else are $0$.
    This enables us to have a better dependency on $N$, i.e., from $N$ to $\sqrt{N}$.
    For index $j\in [N]$, let $U_j:\ket{0}\rightarrow \ket{j}$.
    One can achieve this by changing \cref{eq.sum_all_one} to the following,
    \begin{align}
        \sum_{k}|j\rangle\langle k|=\frac{\sqrt{N}}{2}(U_j H^{\otimes n}-U_j\left(I_n -2 \ket { 0^n}\bra{0^n} \right)H^{\otimes n}).
    \end{align}
    Following the same steps in \cref{elementfunction.blockencoding}, one can achieve the construction.
    There are two error terms in this step.
    Note that by \cref{def.blockencoding}, $|A/\alpha|_{jk}\leq 1$ for $j,k\in [N]$.
    The first term comes from the intrinsic error of block encodings, and the second is from the polynomial approximation.
    Denote $U_{f\circ(A)}$ as the constructed block encoding unitary.
    By \cref{elementfunction.blockencoding} and some additional calculation, one can show that $U_{f\circ(A)}$ is a $(C_{f}, b_{f}, \gamma_{f})$-encoding of $f_\ell\circ(A)$, where $C_{f}=\sqrt{N}+\sum_{j=1}^\ell 1/j!=\mathcal{O}(\sqrt{N})$, $b_{f}=\ell a+(\ell-1)n+2\log \ell$, and $\gamma_{f}=\frac{\epsilon}{\alpha} \cdot \sum_{j=1}^\ell 1/(j-1)!=\mathcal{O}(\epsilon/\alpha)$. By triangle inequality, we have 
    \begin{align}
        &\ab\|\exp\circ\ab\Big(\frac{A}{2\alpha})_{j\star}-C_{f}\bra{0^{b_{f}}}U_{f\circ(A)}\ket{0^{ b_{f}}}\|\notag\\
        ={}& \ab\|\exp\circ\ab\Big(\frac{A}{2\alpha})-f_\ell\circ(A)+f_\ell\circ(A)-C_{f}\bra{0^{b_{f}}}U_{f\circ(A)}\ket{0^{ b_{f}}}\| \notag\\
        \leq{}& \ab\|\exp\circ\ab\Big(\frac{A}{2\alpha})-f_\ell\circ(A)\|+\ab\|f_\ell\circ(A)-C_{f}\bra{0^{b_{f}}}U_{f\circ(A)}\ket{0^{ b_{f}}}\|\notag \\
        \leq{}& \ab\|\exp\circ\ab\Big(\frac{A}{2\alpha})-f_\ell\circ(A)\| + \gamma_{f}.
    \end{align}

    Note that we can bound for each element between $\exp\circ(\frac{A}{2\alpha})$ and $f_\ell\circ(A)$ with error $\delta$, which comes from the polynomial approximation.
    By the norm inequality between spectral and Frobenius norm, we have 
    \begin{align}
        \ab\|\exp\circ\ab\Big(\frac{A}{2\alpha})-f_k\circ(A)\|&\leq \ab\|\exp\circ\ab\Big(\frac{A}{2\alpha})-f_k\circ(A)\|_F\notag \\
        &= \ab\bigg(\sum_{j,k} \ab\Big|\exp\circ\ab\Big(\frac{A}{2\alpha})_{jk}-f_\ell\circ(A)_{jk}|^2)^{\frac{1}{2}}\notag\\
        &\leq \ab(N^2 \delta^2)^{\frac{1}{2}}\leq N\delta.
    \end{align}
    To ensure the error bounded by $\epsilon$, we set $\ell=\cO\ab\big(\log(\frac{N}{\epsilon}))=\mathcal{O}\ab\big(n\log(\frac{1}{\epsilon}))$. 
    By \cref{lem:bound.maxnorm}, we have
    \begin{align}
        \max_{j,k \in [N]} \ab\Big|\exp\circ\ab\Big(\frac{A}{2\alpha})_{jk} - C_{f}(\bra{0^{b_{f}}} \bra{i})U_{f\circ(A)}(\ket{0^{ b_{f}}} \ket{j})| &\leq \ab\|\exp\circ\ab\Big(\frac{A}{2\alpha})-C_{f}\bra{0^{b_{f}}}U_{f\circ(A)}\ket{0^{ b_{f}}}\|\notag\\
        &\leq \epsilon + \gamma_f=\mathcal{O}(\epsilon).
    \end{align}
Note that $\exp\circ(\frac{A}{2\alpha})_{jk}=\exp\circ(\frac{A}{2\alpha})^{\top}_{kj}$.    
With unitary $U_{f\circ(A)}^\dagger (I\otimes U_j)$ and amplitude amplification, one can prepare a state that is close to the target state
\begin{align}
    \ket{A_j}\coloneqq\frac{1}{\sqrt{Z_j}}\sum_{k=1}^N \exp\circ\ab\Big(\frac{A}{2\alpha})_{jk}\ket{k},
\end{align}
where $Z_j=\sum_{k=1}^{N} \exp\circ(A/\alpha)_{jk}$ is the normalization factor of softmax function for the $j$-th row.
By \cref{lem:inf_norm_bound_for_normalized_vectors}, the $L_\infty$ distance between the prepared and the target state is $\cO\ab\big((\epsilon\sqrt{N/Z_j})^\frac{1}{2})$.
Therefore, $U_{f\circ(A)}^\dagger (I\otimes U_j)$ is an $\ab\big(\cO(\sqrt{N/Z_j}), b_f, \cO\ab\big((\epsilon\sqrt{N/Z_j})^\frac{1}{2}))$-state-encoding of state $\ket{A_j}$.
By using amplitude amplification~\cite{brassard2002quantum} $\mathcal{O}(\sqrt{N/Z_j})$ times, one can prepare a $(1,b_f,\cO\ab\big((\epsilon\sqrt{N/Z_j})^{\frac{1}{2}})$-state-encoding of state $\ket{A_j}$.
\end{proof}

Then we use the quantum softmax function to implement the block encoding of the self-attention matrix, as shown in the following theorem.

\begin{theorem}[Quantum self-attention\label{attention.attention}]
    Consider the setting as in \cref{Attention.output}.
    Let $\alpha_0=\alpha_s^2\alpha_w^2$.
    For the index $j\in [N]$, one can construct an $\ab\big(\alpha_s\alpha_w, \mathcal{O}(\ell(n+a_s+a_w)), \cO\ab\big(\alpha_s\alpha_w\sqrt[^4]{\frac{N}{Z_j}}\sqrt{\epsilon_s+\epsilon_w}))$-encoding of a matrix $G$ such that $G_{j\star}=G^{\mathrm{soft}}_{j}\coloneqq(\mathrm{softmax}\left(\frac{QK^{\top}}{\alpha_{0}}\right)V)_{j\star}$,
    by using $\mathcal{O}(\sqrt{\frac{N}{Z_j}}\ell)$ times of $U_S,U_{W_q},U_{W_k}$ and $U_{W_v}$, where $Z_j=\sum_{k=1}^N \exp\circ(QK^{\top}/\alpha_{0})_{jk}$, and $\ell=\mathcal{O}(n\log(\frac{1}{\epsilon_s + \epsilon_w}))$.
\end{theorem}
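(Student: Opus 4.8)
The plan is to realize $\Gsoft_j=(\mathrm{softmax}(QK^{\top}/\alpha_0)V)_{j\star}$ in the four stages dictated by its definition: assemble $QK^{\top}$ from the input block encodings, turn its $j$-th row into a softmax \emph{amplitude} vector, square those amplitudes to recover the genuine softmax probabilities, and finally contract against $V$. First I would build $QK^{\top}$. Since $Q=SW_q$ and $K=SW_k$, applying \cref{product.blockencoding} to $U_S$ with $U_{W_q}$ (resp.\ $U_{W_k}$) yields $(\alpha_s\alpha_w,\cdot,\alpha_s\epsilon_w+\alpha_w\epsilon_s)$-encodings of $Q$ and $K$; taking the adjoint of the latter circuit encodes $K^{\top}$, and one further use of \cref{product.blockencoding} produces an $(\alpha_0,\cdot,\epsilon_0)$-encoding of $QK^{\top}$ with $\alpha_0=\alpha_s^2\alpha_w^2$ and $\epsilon_0=\mathcal{O}(\alpha_s\alpha_w(\alpha_s\epsilon_w+\alpha_w\epsilon_s))$. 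This uses each of $U_S,U_{W_q},U_{W_k}$ only $\mathcal{O}(1)$ times.

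Next I would feed this directly into \cref{attention.softmax} with $A=QK^{\top}$ and $\alpha=\alpha_0$. That result prepares a state-preparation encoding of $\ket{A_j}=\sum_{k=1}^N\sqrt{\mathrm{softmax}(QK^{\top}/\alpha_0)_{jk}}\,\ket{k}$ using $\mathcal{O}(\sqrt{N/Z_j}\,\ell)$ queries to the $QK^{\top}$ encoding, with $\ell=\mathcal{O}(n\log(1/(\epsilon_s+\epsilon_w)))$. The key numerical observation here is that the relevant input error is the \emph{normalized} one, $\epsilon_0/\alpha_0=\mathcal{O}(\epsilon_s/\alpha_s+\epsilon_w/\alpha_w)=\mathcal{O}(\epsilon_s+\epsilon_w)$ (using $\alpha_s,\alpha_w\ge 1$), so that the square-root degradation built into \cref{attention.softmax} gives state-encoding error $\mathcal{O}(\sqrt[4]{N/Z_j}\sqrt{\epsilon_s+\epsilon_w})$.

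Writing $p_k\coloneqq\mathrm{softmax}(QK^{\top}/\alpha_0)_{jk}$, so that $\ket{A_j}$ carries amplitudes $\sqrt{p_k}$, I would apply \cref{block encoding.amplitudes} to this state encoding to obtain a block encoding of $D=\diag(\sqrt{p_1},\dots,\sqrt{p_N})$ with encoding factor $1$. Then $D\ket{A_j}=\sum_k p_k\ket{k}$ is exactly the softmax probability vector (the ``square via Hadamard product'' step). Building $V=SW_v$ through \cref{product.blockencoding} (factor $\alpha_s\alpha_w$) and taking its adjoint for $V^{\top}$, one more composition via \cref{product.blockencoding} gives a state proportional to $V^{\top}p=(\Gsoft_j)^{\top}$ with overall factor $\alpha_s\alpha_w$. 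By the state-encoding/block-encoding correspondence noted earlier, this state-preparation encoding of the column $(\Gsoft_j)^{\top}$ is precisely the required encoding of a matrix $G$ whose $j$-th row equals $\Gsoft_j$. Since every stage after $QK^{\top}$ invokes it $\mathcal{O}(1)$ times beyond the softmax call, the total query count to each input unitary is $\mathcal{O}(\sqrt{N/Z_j}\,\ell)$, and the ancilla count is dominated by the $\ell$-fold Hadamard products inside \cref{attention.softmax}, giving $\mathcal{O}(\ell(n+a_s+a_w))$.

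The main obstacle will be the error bookkeeping rather than any single construction. Concretely, one must (i) confirm the collapse of the combined input error to $\mathcal{O}(\epsilon_s+\epsilon_w)$ after normalization by $\alpha_0$, (ii) track how the $1/2$-power of \cref{lem:inf_norm_bound_for_normalized_vectors} inside the softmax step and the final reinflation by $\alpha_s\alpha_w$ from multiplying by $V^{\top}$ combine into the stated $\mathcal{O}(\alpha_s\alpha_w\sqrt[4]{N/Z_j}\sqrt{\epsilon_s+\epsilon_w})$, using at each composition that \cref{product.blockencoding} scales one operand's error by the other's encoding factor, and (iii) handle the conceptual subtlety that only the square-root amplitudes $\sqrt{p_k}$ are ever available, so the matrix--vector product $\mathrm{softmax}(\cdot)_{j\star}V$ is realized only after the squaring step is inserted in the correct order.
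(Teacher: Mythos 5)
Your proposal is correct and follows essentially the same route as the paper's proof: build $QK^{\top}$ and $V$ via \cref{product.blockencoding}, invoke \cref{attention.softmax} with the normalized input error $\epsilon_0/\alpha_0=\mathcal{O}(\epsilon_s+\epsilon_w)$, square the softmax amplitudes, and multiply by $V$ so that the error is reinflated by $\alpha_v=\alpha_s\alpha_w$ into the stated $\mathcal{O}\big(\alpha_s\alpha_w\sqrt[4]{N/Z_j}\sqrt{\epsilon_s+\epsilon_w}\big)$, exactly as in your bookkeeping items (i)--(iii). The only (cosmetic) deviation is the squaring step: the paper takes the Hadamard product of the state encoding with itself via \cref{Hadamard.blockencoding}, whereas you route through the diagonal encoding $D=\diag(\sqrt{p_1},\dots,\sqrt{p_N})$ of \cref{block encoding.amplitudes} and apply $D\ket{A_j}$ --- both cost $\mathcal{O}(1)$ additional uses of the state-preparation circuit and yield the same encoding of the probability vector.
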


\begin{proof}
    In the first step, we construct the block encoding of matrix $QK^{\top}$ and $V$.
    Note that for a real matrix $M$ and its block encoding unitary $U_M$, $U_M^\dagger$ is the block encoding of $M^{\top}$.
    By \cref{product.blockencoding}, one can construct an $\QKinput$-encoding $U_{QK^{\top}}$ of $QK^{\top}$, where $\alpha_{0}\coloneqq \alpha_s^2\alpha_w^2$, $a_{0}=2a_s+2a_w$, and $\epsilon_{0}=2\alpha_s\alpha_w^2\epsilon_s+2\alpha_s^2\alpha_w\epsilon_w$.
    One can also construct an $\Vinput$-encoding $U_V$ of $V$, where $\alpha_v=\alpha_s\alpha_w$, $a_v = a_s + a_w$, and $\epsilon_v = \alpha_s \epsilon_w + \alpha_w \epsilon_s$.

    By \cref{attention.softmax}, using $U_{QK^{\top}}$ for $\cO\ab\big(\sqrt{\frac{N}{Z_j}}\ell)$ times, one can prepare a $(1, 2n+3b_f+2,\cO\ab\big(((\epsilon_s+\epsilon_w)\sqrt{N/Z_j})^{\frac{1}{2}})$-state-encoding of the state 
    \[\sum_{k=1}^N \sqrt{\mathrm{softmax}(QK^{\top}/\alpha_0)_{jk}}\ket{k},\]
    where $Z_j=\sum_{k=1}^N \exp\circ(QK^{\top}/\alpha_{0})_{jk}$, $\ell=\mathcal{O}\ab\big(n\log(\frac{1}{\epsilon_s+\epsilon_w}))$, $b_{f}=\ell a_{0}+(\ell-1)n+2\log \ell$, and $\gamma_{f}=\frac{\epsilon_0}{\alpha_0}\cdot \sum_{j=1}^\ell \frac{1}{(j-1)!}=\mathcal{O}(\epsilon_s+\epsilon_w)$.
    Recall that state encoding is also a block encoding.
    By \cref{Hadamard.blockencoding}, one can construct a $(1,\mathcal{O}(\ell(n+a_s+a_w)), \cO\ab\big(((\epsilon_s+\epsilon_w)\sqrt{N/Z_j})^{\frac{1}{2}})$-encoding of a matrix whose $j$-th column is $(\mathrm{softmax}(QK^{\top}/\alpha_0)_{j1},\dots, \mathrm{softmax}(QK^{\top}/\alpha_0)_{jN})$ ignoring other columns.
    By \cref{lem:element-wise error bound}, the absolute difference for each element is also bounded by $\cO\ab\big(((\epsilon_s+\epsilon_w)\sqrt{N/Z_j})^{\frac{1}{2}})$.
    Let this block-encoding unitary be $U_{f(QK^{\top})}$.
    
    Finally, we implement the matrix multiplication with $V$.
    This is easily achieved by \cref{product.blockencoding}, with $U_{f(QK^{\top})}^\dagger$ and $U_V$, and the error will be $\cO\ab\big(\alpha_s\alpha_w\sqrt[^4]{\frac{N}{Z_j}}\sqrt{\epsilon_s+\epsilon_w})$.
    In total, this needs $\mathcal{O}(\sqrt{\frac{N}{Z_j}}\ell)$ times of $U_{S}, U_{W_q}, U_{W_k}$ and $U_{W_v}$.
\end{proof}

Now we consider how to implement the \textit{masked} self-attention, which is essential for the decoder-only structure.
This can be achieved by slightly changing some steps as introduced in previous theorems.

\begin{corollary}
[Quantum masked self-attention]\label{cor:masked_attention.prehalf}
    Consider the same as \cref{Attention.output}.
    Let $\alpha_0=\alpha_s^2\alpha_w^2$.
    For the index $j\in [N]$, one can construct an $\ab\big(\alpha_s\alpha_w, \mathcal{O}(\ell(n+a_s+a_w)), \cO\ab\big(\alpha_s\alpha_w\sqrt[^4]{\frac{2^{\lceil \log j\rceil}}{Z_j}}\sqrt{\epsilon_s+\epsilon_w}))$-encoding of a matrix $G^{\mathrm{mask}}$ such that $G^{\mathrm{mask}}_{j\star}=(\mathrm{softmax}(\frac{QK^{\top}}{\alpha_{0}}+M)V)_{j\star}$,
    by using $\mathcal{O}(\sqrt{\frac{N}{Z_j}}\ell)$ times of $U_S,U_{W_q},U_{W_k}$ and $U_{W_v}$, where $M$ is the masked matrix as \cref{eq.mask}, $Z_j=\sum_{k=1}^N \exp\circ(\frac{QK^{\top}}{\alpha_{0}}+M)_{jk}$, and $\ell=\mathcal{O}(n\log(\frac{1}{\epsilon_s+\epsilon_w}))$.
\end{corollary}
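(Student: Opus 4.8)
The plan is to reuse the proof of \cref{attention.attention} essentially verbatim and isolate the single place where the mask intervenes. Because $\exp(-\infty)=0$, adding the mask $M$ of \cref{eq.mask} annihilates the $j$-th row entries with $k>j$ and restricts the softmax normalization to $Z_j=\sum_{k\le j}\exp\circ(QK^{\top}/\alpha_0)_{jk}$. The block encodings $U_{QK^{\top}}$ and $U_V$ assembled through \cref{product.blockencoding}, the Hadamard-product squaring of the amplitudes via \cref{Hadamard.blockencoding}, and the concluding multiplication by $V$ are all untouched, so the whole task collapses to replacing the softmax-state preparation of \cref{attention.softmax} by its masked analogue $\frac{1}{\sqrt{Z_j}}\sum_{k\le j}\exp\circ(QK^{\top}/2\alpha_0)_{jk}\ket{k}$.

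To produce this masked state I would construct a block encoding of the projector $\Pi_{\le j}=\sum_{k=1}^{j}\ket{k}\bra{k}$ onto the permitted columns, which is Hermitian with spectral norm one and hence admits an encoding factor $1$ via a comparison circuit that flags the basis indices at most $j$, and then multiply it onto the unmasked softmax state prepared inside the proof of \cref{attention.softmax} using \cref{product.blockencoding}. The key observation is that the unmasked amplitudes are proportional to $\exp\circ(QK^{\top}/2\alpha_0)_{jk}$, so projecting onto $k\le j$ and renormalizing yields exactly the masked softmax with the restricted partition function $Z_j$; the renormalization is carried out for free by the amplitude amplification already present in \cref{attention.softmax}. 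Because the projector is built from $\cO(1)$ elementary gates it adds no queries to $U_S,U_{W_q},U_{W_k},U_{W_v}$, and since the projection rescales the good amplitude in a way that exactly compensates the change of partition function, the amplitude-amplification count remains $\cO(\sqrt{N/Z_j}\,\ell)$ with the restricted $Z_j$.

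The only change in the error analysis is that the projected amplitude vector is supported on the $j$ allowed coordinates, whose count I bound by the enclosing power of two $2^{\lceil\log j\rceil}$. Substituting this effective dimension in place of $N$ into the $L_\infty$ normalization estimate of \cref{lem:inf_norm_bound_for_normalized_vectors} --- exactly the bound invoked in \cref{attention.softmax} --- turns the state-preparation error factor $\sqrt[^4]{\frac{N}{Z_j}}$ into $\sqrt[^4]{\frac{2^{\lceil\log j\rceil}}{Z_j}}$, and carrying this through \cref{Hadamard.blockencoding}, \cref{product.blockencoding}, and the max-norm bound of \cref{lem:element-wise error bound} reproduces the claimed encoding error $\cO(\alpha_s\alpha_w\sqrt[^4]{\frac{2^{\lceil\log j\rceil}}{Z_j}}\sqrt{\epsilon_s+\epsilon_w})$.

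The step I expect to be the main obstacle is certifying that multiplying the projector into the construction of the element-wise exponential is consistent with the linear-combination-of-unitaries bookkeeping of \cref{elementfunction.blockencoding}: one must check that $\Pi_{\le j}$ acts on the \emph{full} exponential --- not merely its constant term --- so that every column $k>j$ is sent exactly to zero while the columns $k\le j$ are preserved, and that the post-projection amplitude amplification still succeeds with the renormalization delivering the restricted $Z_j$ rather than the full one. The remaining bookkeeping point is the non-power-of-two value of $j$, which is what forces $2^{\lceil\log j\rceil}$ to appear in place of $j$ throughout the error bound.
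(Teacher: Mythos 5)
Your proposal is correct and follows essentially the same route as the paper: the paper likewise constructs a $(1,1,0)$-block-encoding of the projector $\sum_{k\le j}\ket{k}\bra{k}$ (written there as the linear combination $\frac{1}{2}I+\frac{1}{2}(2\sum_{k\le j}\ket{k}\bra{k}-I)$), multiplies it onto the element-wise exponential after the construction of \cref{attention.softmax}, and then reuses \cref{attention.attention} unchanged. The one divergence is where the $2^{\lceil\log j\rceil}$ factor is extracted: the paper obtains it by also replacing the constant-term ``ones-row'' matrix by one whose $j$-th row has only its first $2^{\lceil\log j\rceil}$ entries equal to $1$, with encoding factor $2^{\lceil\log j\rceil/2}$, whereas you keep the original ones-row and instead substitute the post-projection support size into \cref{lem:inf_norm_bound_for_normalized_vectors}; both accountings are valid (the lemma's $\sqrt{d}$ indeed restricts to the common support, since the projected columns vanish exactly) and reproduce the stated encoding error and the $\mathcal{O}\big(\sqrt{N/Z_j}\,\ell\big)$ query count with the restricted partition function $Z_j$.
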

\begin{proof}
    To achieve the masked self-attention, we slightly change the steps mentioned in \cref{attention.softmax}.
    First, about approximating the exponential function,
    instead of taking linear combination with the matrix whose $j$-th row elements are all $1$ and others are $0$, we further consider only the first $2^{\lceil \log j\rceil}$ elements in $j$-th row are $1$.
    Note that this matrix can be achieved similarly as the original one.
    The encoding factor of this matrix is $2^{\lceil \log j\rceil/2}$.
    Second, after approximating the function,
    for index $j\in [N]$, we multiply the block encoding with a projector $\sum_{k:k\leq j}\ket{k}\bra{k}$ to mask the elements.
    Though the projector $\sum_{k\in \mathcal{S}} \ket{k}\bra{k}$ for $S\subseteq [N]$ is not unitary in general, one can construct a block encoding of the projector by noticing that it can be written by the linear combination of two unitaries:
    \begin{align}
        \sum_{k\in \mathcal S} \ket{k}\bra{k}=\frac{1}{2}I+\frac{1}{2}\ab\Big(2\sum_{k\in \mathcal S} \ket{k}\bra{k}-I).
    \end{align}
    Define $U_{\rm proj}\coloneqq\ket{0}\bra{0}\otimes I+|1\rangle\langle 1|\otimes (2\sum_{k\in \mathcal S} |k\rangle\langle k|-I)$.
    One can easily verify that $(H\otimes I)U_{\rm proj}(H\otimes I)$ is a $(1,1,0)$-encoding of $\sum_{k\in \mathcal S} |k\rangle\langle k|$, where $H$ is the Hadamard gate.
    The following steps follow the same with \cref{attention.softmax} and \cref{attention.attention}.
    Complexity analysis can be derived by direct computation.
\end{proof}

One may further achieve the multi-head self-attention case by using the linear combination of unitaries.
We do not describe further details on multi-head attention in this work.
For simplicity, in the following, we will directly say we have a $(\alpha_g,a_g,\epsilon_g)$-encoding of $G$, e.g., $\alpha_g=\alpha_s\alpha_w\sqrt{N}$, $a_g=\mathcal{O}(\ell(n+a_s+a_w))$ and $\epsilon_g=\mathcal{O}\left(\alpha_s \alpha_w\sqrt[4]{\frac{N}{Z_j}}\sqrt{\epsilon_s+\epsilon_w}\right)$.

\subsection{Quantum residual connection and layer normalization}

Here, we discuss how to implement the residual connection with layer normalization as \cref{prob.residualwithlayer}.

\begin{theorem}[Quantum residual connection with layer normalization\label{theorem.residual}]
Consider the setting of \cref{prob.residualwithlayer}.
One is able to construct an $(\mathcal{O}(\sqrt{d}(\alpha_g+\alpha_s)/\varsigma), 2a_g+n+4, \mathcal{O}((\epsilon_g+\epsilon_s)/\varsigma))$-state-encoding of the state 
    \[
    \sum_{k=1}^d \mathrm{LN}(G^{\mathrm{soft}}_{j},S_{j})_k\ket{k}=\frac{1}{\varsigma}\sum_{k=1}^d (G^{\mathrm{soft}}_{jk}+S_{jk}-\Bar{s}_j)\ket{k},\]
    where $\Bar{s}_j\coloneqq \frac{1}{d}\sum_{k=1}^d (G^{\mathrm{soft}}_{jk}+S_{jk})$ and $\varsigma\coloneqq \sqrt{\sum_{k=1}^d (G^{\mathrm{soft}}_{jk}+S_{jk}-\Bar{s}_j)^2}$.
\end{theorem}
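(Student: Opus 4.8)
The plan is to read layer normalization (with $\gamma = 1/\sqrt d$, $\beta = \vec 0$) as three composable linear-algebraic operations on the $j$-th rows $\Gsoft_j$ and $S_j$: the residual sum $v := \Gsoft_j + S_j$, the mean-centering $v \mapsto v - \bar s_j\vec 1$, and the normalization by $\varsigma = \|\Gsoft_j + S_j - \bar s_j\vec 1\|_2$. The first key observation is that the target vector is already $L^2$-normalized, i.e. $\sum_k \mathrm{LN}(\Gsoft_j,S_j)_k\ket{k} = w/\varsigma$ with $w := \Gsoft_j + S_j - \bar s_j\vec 1$ and $\|w\|_2 = \varsigma$; hence it suffices to build a block encoding of the $d\times 1$ vector $w$ and reinterpret it as a state-preparation encoding (\cref{def.stateencoding}), whereupon its factor and error divide through by $\varsigma$. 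The second, crucial, observation is that mean-centering is the rank-one projection $v \mapsto (I - \ket{+}\bra{+})v$, where $\ket{+} = \frac{1}{\sqrt d}\sum_{k=1}^d\ket{k}$: indeed $\ket{+}\bra{+}v = \langle +|v\rangle\ket{+} = \bar s_j\vec 1$, so $(I - \ket{+}\bra{+})v = w$. This turns the entire construction into standard block-encoding arithmetic.

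Concretely I would proceed as follows. First, extract the rows: applying $U_G^\dagger(U_j\otimes I)$ with $U_j:\ket{0}\mapsto\ket{j}$ produces a block encoding of the column vector $\Gsoft_j$ with factor $\alpha_g$ and error $\epsilon_g$, since $U_G^\dagger$ encodes $(\Gsoft)^\top$ and $(\Gsoft)^\top\ket{j} = \sum_k \Gsoft_{jk}\ket{k}$; likewise $U_S^\dagger(U_j\otimes I)$ gives a block encoding of $S_j$ with factor $\alpha_s$ and error $\epsilon_s$. Second, form the residual $v$ by a linear combination of these two vector block encodings via a straightforward generalization of \cref{LCU.blockencoding} to summands with different encoding factors (absorbing $\alpha_g,\alpha_s$ into the state-preparation-pair weights), yielding an encoding of $v$ with factor $\mathcal{O}(\alpha_g + \alpha_s)$ and error $\mathcal{O}(\epsilon_g + \epsilon_s)$. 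Third, build the centering matrix $I - \ket{+}\bra{+}$: since $\ket{+}\bra{+} = \frac1d\sum_{k,k'}\ket{k}\bra{k'}$ has unit spectral norm, the construction behind \cref{eq.sum_all_one} (an LCU of $I$ with the reflection $I - 2\ket{+}\bra{+} = H^{\otimes\log d}(I - 2\ket{0}\bra{0})H^{\otimes\log d}$) gives an $\mathcal{O}(1)$-factor block encoding. Multiplying it into the encoding of $v$ via \cref{product.blockencoding} produces a block encoding of $w = (I - \ket{+}\bra{+})v$, still with factor $\mathcal{O}(\alpha_g+\alpha_s)$ and error $\mathcal{O}(\epsilon_g+\epsilon_s)$. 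Finally, reinterpreting this $d\times 1$ block encoding as a state-preparation encoding of $w/\varsigma$ divides both quantities by $\varsigma$: by \cref{lem:element-wise error bound} each entry of the block lies within $\mathcal{O}(\epsilon_g+\epsilon_s)$ of $w_k$, so the $L^\infty$ error of the normalized state is $\mathcal{O}((\epsilon_g+\epsilon_s)/\varsigma)$, matching the statement. Tallying ancillas across the two LCUs, the product, the reflection, and the $n$-qubit system register accounts for the $2a_g + n + 4$ figure (bounding $a_s \le a_g$).

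The one place where care — and the extra overhead — enters is the bookkeeping of the encoding factor in the mean/normalization step. The clean arithmetic above yields factor $\mathcal{O}((\alpha_g+\alpha_s)/\varsigma)$, whereas the stated $\mathcal{O}(\sqrt d\,(\alpha_g+\alpha_s)/\varsigma)$ carries an additional $\sqrt d$, which I attribute to realizing the all-ones direction $\vec 1 = \sqrt d\,\ket{+}$ inside a unit-norm (state-encoding) framework; pinning down exactly where this $\sqrt d$ appears, and verifying that the residual, centering, and division by $\varsigma$ compose without other hidden blow-ups, is the main technical obstacle, since everything else is a direct application of \cref{LCU.blockencoding}, \cref{product.blockencoding}, and the identity \cref{eq.sum_all_one}. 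A useful cross-check is the edge case $\varsigma \to 0$ (near-constant rows), where the factor correctly diverges, confirming that layer normalization is ill-conditioned precisely when the pre-normalization vector is nearly uniform.
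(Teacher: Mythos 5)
Your proof is correct, and your centering step takes a genuinely different route from the paper's. The paper forms the residual sum exactly as you do (an LCU with the weighted preparation pair $\sqrt{\alpha_g},\sqrt{\alpha_s}$), but then proceeds by converting the extracted row into a \emph{diagonal} block encoding via robust amplitude encoding (\cref{block encoding.amplitudes}), constructing a block encoding of the scalar matrix $\sqrt{d}\,\Bar{s}_j\cdot I$ (by multiplying $\Gsoft+S$ with $H^{\otimes \log d}$, which places $\sqrt{d}\,\Bar{s}_j$ in one entry, then swapping rows and tensoring with $I$), subtracting it with a second LCU, and finally applying the centered diagonal matrix to the uniform superposition $H^{\otimes \log d}\ket{0}$. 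That last step is precisely where the $\sqrt{d}$ you were puzzling over enters: a diagonal matrix encoded with factor $(\alpha_g+\alpha_s)(1+1/\sqrt{d})$ acts on amplitudes of size $1/\sqrt{d}$, so the resulting state-encoding factor is $\sqrt{d}(\alpha_g+\alpha_s)(1+1/\sqrt d)/\varsigma$. Your rank-one identity $w=(I-\ket{+}\bra{+})v$, with the $(1,1,0)$-encoding $I-\ket{+}\bra{+}=\tfrac12 I+\tfrac12\,H^{\otimes\log d}(I-2\ket{0}\bra{0})H^{\otimes\log d}$ multiplied into the vector encoding via \cref{product.blockencoding}, bypasses both the diagonalization and the uniform-superposition step, so you obtain factor $\mathcal{O}((\alpha_g+\alpha_s)/\varsigma)$ --- a $\sqrt{d}$ \emph{better} than the statement, which it therefore still implies since the stated factor is only an upper bound. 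In other words, the ``main technical obstacle'' you flag is not an obstacle at all: the $\sqrt{d}$ is an artifact of the paper's diagonal-matrix route, not of the problem, and your construction legitimately does without it. Two bookkeeping remarks: your error conversion is even simpler than you suggest, since the target $w/\varsigma$ is exactly normalized and \cref{def.stateencoding} compares $\alpha$ times the prepared block against it, so $\norm{w-w'}_2\leq \epsilon_g+\epsilon_s$ directly yields $L^\infty$ error $(\epsilon_g+\epsilon_s)/\varsigma$ without needing \cref{lem:inf_norm_bound_for_normalized_vectors}; and your ancilla count comes out smaller than the stated $2a_g+n+4$ (the $2a_g+n+2$ there is the cost of the amplitude-encoding conversion you never invoke), which is harmless as ancillas can always be padded.
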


\begin{proof}
    As shown in \cref{attention.attention}, we can construct an $(\alpha_g,a_g,\epsilon_g)$-encoding of a matrix whose $j$-th row is the same row as that of $G^{\mathrm{soft}}$.
    By assumption, we are given $U_s$ which is an $\Sinput$-encoding of $S$.
    By Lemma~\ref{LCU.blockencoding} with state preparation pair $(P,P)$ such that 
    \begin{equation}
        P\ket{0} = \frac{1}{\sqrt{\alpha_g+\alpha_s}}(\sqrt{\alpha_g}\ket{0}+\sqrt{\alpha_s}\ket{1}),
    \end{equation}
    one can construct a quantum circuit $\Ures$ which is an $(\alpha_g+\alpha_s, a_g+1, \epsilon_g+\epsilon_s)$-encoding of an $N \times d$ matrix whose $j$-th row is the same as that of $G^{\mathrm{soft}}+S$.
    
    Now we consider how to create a block encoding of a diagonal matrix $\Bar{s}_j\cdot I$, where $\Bar{s}_j\coloneqq \frac{1}{d}\sum_{k=1}^d (G^{\mathrm{soft}}_{jk}+S_{jk})$. Let us define a unitary $H_{\log d} \coloneqq H^{\otimes \log d}$.
    Note that $H_{\log d}$ is a $(1,0,0)$-encoding of itself, and the first column of $H_{\log d}$ is $\frac{1}{\sqrt{d}}(1,\dots,1)^{\top}$.
    By \cref{product.blockencoding}, one can multiply $\Gsoft+S$ with $H_{\log d}$ to construct an $(\alpha_g+\alpha_s,a_g + 1,\epsilon_g+\epsilon_s)$-encoding of an $N \times d$ matrix, whose $(j,1)$-element is $\sqrt{d}\Bar{s}_i$.
    One can further move this element to $(1,1)$ by switching the first row with the $j$-th row.
    By tensoring with the identity $I$ of $\log d$ qubits, one can construct an $(\alpha_g+\alpha_s,a_g+n+1,\epsilon_g+\epsilon_s)$-encoding of $\sqrt{d}\Bar{s}_i\cdot I$.

    With $U_j:\ket{0}\rightarrow \ket{j}$, one can prepare the state 
    \begin{align}
        \Ures^\dagger (I\otimes U_j)\ket{0}\ket{0}=\frac{1}{\alpha_g+\alpha_s}\ket{0}\sum_{k=1}^d \psi'_k\ket{k}+\sqrt{1-\frac{\sum_k \psi_k'^2}{(\alpha_g+\alpha_s)^2}}\ket{1}\ket{\mathrm{bad}},
    \end{align}
    where $|\psi_k'-(G^{\mathrm{soft}}_{jk}+S_{jk})|\leq \epsilon_g+\epsilon_s$ for $k\in [d]$.
    By \cref{block encoding.amplitudes}, this can be converted to an $(\alpha_g+\alpha_s, 2a_g+n+3,\epsilon_g+\epsilon_s)$-encoding of the diagonal matrix $\diag(G_{j1}+S_{j1},\dots,G_{jd}+S_{jd})$.
    
    By \cref{LCU.blockencoding} with state preparation pair $(P_1,P_2)$, where
    \begin{equation}
        P_1\ket{0}=\frac{1}{\sqrt{1+1/\sqrt{d}}}(\ket{0}+\frac{1}{\sqrt{d}}\ket{1})
    \end{equation}
    and 
    \begin{equation}
        P_2\ket{0}=\frac{1}{\sqrt{1+1/\sqrt{d}}}(\ket{0}-\frac{1}{\sqrt{d}}\ket{1}),
    \end{equation}
    one can construct an $((\alpha_g+\alpha_s)(1+1/\sqrt{d}), 2a_g+n+4, (\epsilon_g+\epsilon_s)(1+1/\sqrt{d}))$-encoding of $\mathrm{diag}(G_{j1}+S_{j1}-\Bar{s}_j,\dots,G_{jd}+S_{jd}-\Bar{s}_j)$.
    
    Let this unitary be $\Uln$.
    Then the unitary $\Uln(I\otimes H_{\log d})$ is an $(\mathcal{O}(\sqrt{d}(\alpha_g+\alpha_s)/\varsigma), 2a_g+n+4, \mathcal{O}((\epsilon_g+\epsilon_s)/\varsigma))$-state-encoding of the state 
    \[\frac{1}{\varsigma}\sum_{k=1}^d (G^{\mathrm{soft}}_{jk}+S_{jk}-\Bar{s}_j)\ket{k},\]
    where $\varsigma\coloneqq \sqrt{\sum_{k=1}^d (G^{\mathrm{soft}}_{jk}+S_{jk}-\Bar{s}_j)^2}$.
\end{proof}

\subsection{Quantum feedforward network}

We turn our attention to the third main building block of the transformer architecture, the feed-forward neural network. This block often is a relatively shallow neural network with linear transformations and ReLU activation functions \cite{vaswani2017attention}. More recently, activation functions such as the GELU have become popular, being continuously differentiable. 
We highlight that they are ideal for quantum Transformers, since the QSVT framework requires functions that are well approximated by polynomial functions.
Functions like $\relu(x)=\max (0,x)$ can not be efficiently approximated.
The GELU is constructed from the error function, which is efficiently approximated as follows.


\begin{lemma}[Polynomial approximation of error function \cite{low2017quantum}\label{polyappro.error}]
    Let $\epsilon>0$.
    For every $k>0$, the error function $\mathrm{erf}(kx)\coloneqq \frac{2}{\sqrt{\pi}}\int_{0}^{kx} e^{-t^2} \,dt$ can be approximated with error up to $\epsilon$ by a polynomial function with degree $\mathcal{O}(k\log(\frac{1}{\epsilon}))$.
\end{lemma}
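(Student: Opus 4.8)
The plan is to reduce the approximation of $\mathrm{erf}(kx)$ to approximating its integrand, the Gaussian $e^{-k^2 t^2}$, and then lift back by a single integration. Concretely, since
\[
\mathrm{erf}(kx) = \frac{2k}{\sqrt{\pi}}\int_0^x e^{-k^2 t^2}\,dt,
\]
it suffices to produce a polynomial $P(t)$ with $\sup_{t\in[-1,1]}\abs{e^{-k^2 t^2} - P(t)} \le \epsilon'$ and set $\tilde p(x) \coloneqq \frac{2k}{\sqrt{\pi}}\int_0^x P(t)\,dt$. Then $\tilde p$ is a polynomial of degree one higher than $P$, and for $x \in [-1,1]$ its error is bounded by $\frac{2k}{\sqrt{\pi}}\abs{x}\sup_{t}\abs{e^{-k^2t^2}-P(t)} \le \frac{2k}{\sqrt{\pi}}\epsilon'$. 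Choosing $\epsilon' = \Theta(\epsilon/k)$ then controls the final error by $\epsilon$; crucially, this only inflates the required accuracy of the Gaussian approximation by a factor $k$, which costs merely an additive $O(\log k)$ in the degree through the $\log(1/\epsilon')$ dependence.

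The core step is therefore a degree bound for approximating $e^{-k^2 t^2}$ on $[-1,1]$. First I would use the Fourier--Gaussian identity
\[
e^{-k^2 t^2} = \frac{1}{\sqrt{\pi}} \int_{-\infty}^{\infty} e^{-s^2}\cos(2kst)\,ds,
\]
and truncate the $s$-integral to $\abs{s} \le R$ with $R = O(\sqrt{\log(1/\epsilon')})$, using the Gaussian tail $\int_{\abs{s}>R} e^{-s^2}\,ds = O(e^{-R^2})$ to discard the remainder at the cost of $\epsilon'$. On the truncated range each $\cos(2kst)$, viewed as a function of $t$, is expanded by the Jacobi--Anger series $\cos(zt) = J_0(z) + 2\sum_{m\ge 1}(-1)^m J_{2m}(z)\,T_{2m}(t)$ and truncated. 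Since $J_m(z)$ decays super-exponentially once $m \gtrsim e\abs{z}/2$, and here $\abs{z} = 2k\abs{s} \le 2kR$, truncating at degree $D = O(kR + \log(1/\epsilon'))$ gives uniform error $\epsilon'$ over the whole range. Integrating the weight $e^{-s^2}$ against this fixed-degree Chebyshev expansion yields a single polynomial in $t$ of degree $D = O\big(k\sqrt{\log(1/\epsilon)} + \log(1/\epsilon)\big)$, which is within the claimed $O(k\log(1/\epsilon))$.

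Collecting the pieces: integrate the degree-$D$ Gaussian approximant to get $\tilde p$ of degree $D+1 = O(k\log(1/\epsilon))$, and combine the tail-truncation, Jacobi--Anger, and integration errors by the triangle inequality. I expect the main obstacle to be the uniform error control in the second step, namely bounding the Jacobi--Anger tail simultaneously for all $s \in [-R,R]$ and verifying that integrating against $e^{-s^2}$ does not reintroduce degree growth; the bookkeeping must ensure that the $s$-dependence enters only through the (integrated) Bessel coefficients and not through the degree of the Chebyshev polynomials. A subtlety worth emphasizing is that the extra factor $k$ from the derivative relation stays inside a logarithm, so that the final degree remains \emph{linear} in $k$, in contrast to the $O(k^2)$ that a naive Taylor truncation of $\mathrm{erf}$ would produce.
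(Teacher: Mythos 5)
The paper itself contains no proof of this lemma --- it is imported verbatim from the cited reference --- and your overall strategy (reduce $\mathrm{erf}(kx)$ to its Gaussian integrand, approximate $e^{-k^2t^2}$ via the Fourier--Gaussian identity plus truncated Jacobi--Anger expansions, then integrate once) is exactly the strategy of that reference's proof. The Gaussian core of your argument is sound: truncating the $s$-integral at $R=O(\sqrt{\log(1/\epsilon')})$ costs $O(e^{-R^2})$, the Jacobi--Anger tail bound $|J_{2m}(z)|\le (|z|/2)^{2m}/(2m)!$ is monotone in $|z|\le 2kR$, so a single truncation degree $D=O(kR+\log(1/\epsilon'))$ works uniformly over $s\in[-R,R]$, and integrating against $e^{-s^2}$ touches only the coefficients, not the degree. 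This gives $D=O\bigl(k\sqrt{\log(1/\epsilon')}+\log(1/\epsilon')\bigr)$, matching the known Gaussian-approximation lemma.

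The genuine gap is in the lift back to $\mathrm{erf}$. Setting $\epsilon'=\Theta(\epsilon/k)$ is \emph{not} an ``additive $O(\log k)$'' cost: $\log(1/\epsilon')=\log k+\log(1/\epsilon)$ enters under the square root multiplied by $k$, so your degree is $O\bigl(k\sqrt{\log k+\log(1/\epsilon)}+\log(k/\epsilon)\bigr)$, and the term $k\sqrt{\log k}$ is not $O(k\log(1/\epsilon))$ --- fix $\epsilon$ constant and let $k\to\infty$. Your final bound $O\bigl(k\sqrt{\log(1/\epsilon)}+\log(1/\epsilon)\bigr)$ silently drops the $\log k$ from under the root. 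The standard repair avoids the $1/k$ precision inflation altogether: integrate the truncated Chebyshev series term by term and use $\int_0^x T_{2m}(t)\,dt=\tfrac{1}{2}\bigl(T_{2m+1}(x)/(2m+1)-T_{2m-1}(x)/(2m-1)\bigr)+\mathrm{const}$, so the $m$-th integrated coefficient is suppressed by $\Theta(1/m)$. Taking the truncation degree $D\ge k$ (which costs nothing, since the Gaussian bound already exceeds $k$ when $\epsilon\le 1/e$, and otherwise padding to $k$ is within budget), the tail of the integrated series is at most $(2k/\sqrt{\pi})\sum_{m>D/2}|a_m|/m\le (2/\sqrt{\pi})\sum_{m>D/2}|a_m|$, so the prefactor $2k/\sqrt{\pi}$ is absorbed and $\epsilon'=\Theta(\epsilon)$ suffices. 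This yields degree $O\bigl(k\sqrt{\log(1/\epsilon)}+\log(1/\epsilon)\bigr)$, which is indeed $O(k\log(1/\epsilon))$ for $\epsilon\le 1/2$; your cruder bookkeeping only establishes $O(k\log(k/\epsilon))$, which happens to match the paper's GELU corollary (where a factor $k\lambda$ sits inside the logarithm anyway) but falls short of the lemma as stated.
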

This lemma, implies the following efficient approximation of the GELU function with polynomials.
\begin{corollary}[Polynomial approximation of GELU function\label{thm.gelu}]
Let $\epsilon>0$ and $\lambda \in \mathcal{O}(1)$.
For every $k>0$ and $x\in [-\lambda, \lambda]$, the $\mathrm{GELU}$ function $\mathrm{GELU}(kx)\coloneqq kx\cdot \frac{1}{2}(1+\mathrm{erf}(\frac{kx}{\sqrt{2}}))$ can be approximated with error up to $\epsilon$ by a polynomial function with degree $\mathcal{O}(k\log(\frac{k\lambda}{\epsilon}))$.
\end{corollary}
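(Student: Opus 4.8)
The plan is to build the approximating polynomial by leaving the linear prefactor $kx$ untouched and approximating only the error-function part with the polynomial guaranteed by \cref{polyappro.error}. Writing $\mathrm{GELU}(kx) = \tfrac{kx}{2}\bigl(1 + \mathrm{erf}(kx/\sqrt{2})\bigr)$, the only non-polynomial ingredient is $h(x) \coloneqq \mathrm{erf}(kx/\sqrt{2})$, so it suffices to produce a good polynomial surrogate $p$ for $h$ and then multiply back by the (exact) factor $\tfrac{kx}{2}$.

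First I would reduce the domain $[-\lambda,\lambda]$ to the standard interval $[-1,1]$ via the substitution $x = \lambda\tilde{x}$, under which $h(x) = \mathrm{erf}\bigl(\tfrac{k\lambda}{\sqrt{2}}\,\tilde{x}\bigr) = \mathrm{erf}(\tilde{k}\,\tilde{x})$ with $\tilde{k} \coloneqq k\lambda/\sqrt{2}$. Applying \cref{polyappro.error} with parameter $\tilde{k}$ yields a polynomial $\tilde{p}$ of degree $\mathcal{O}(\tilde{k}\log(1/\delta)) = \mathcal{O}(k\lambda\log(1/\delta))$ satisfying $|\tilde{p}(\tilde{x}) - \mathrm{erf}(\tilde{k}\tilde{x})| \le \delta$ for $\tilde{x}\in[-1,1]$. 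Undoing the substitution, $p(x) \coloneqq \tilde{p}(x/\lambda)$ is a polynomial of the same degree with $|p(x) - h(x)| \le \delta$ on all of $[-\lambda,\lambda]$; since $\lambda = \mathcal{O}(1)$, this degree is $\mathcal{O}(k\log(1/\delta))$. I would then define the candidate $P(x) \coloneqq \tfrac{kx}{2}(1 + p(x))$, a polynomial of degree $1 + \deg p = \mathcal{O}(k\log(1/\delta))$, and bound the error directly using $|x|\le\lambda$:
\begin{align}
\bigl|P(x) - \mathrm{GELU}(kx)\bigr| = \frac{|kx|}{2}\,\bigl|p(x) - h(x)\bigr| \le \frac{k\lambda}{2}\,\delta.\notag
\end{align}
Choosing $\delta = 2\epsilon/(k\lambda)$ forces the right-hand side below $\epsilon$, and substituting $1/\delta = k\lambda/(2\epsilon)$ into the degree bound gives $\mathcal{O}\bigl(k\log(k\lambda/\epsilon)\bigr)$, exactly the claimed scaling.

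The main obstacle is the bookkeeping of error amplification by the prefactor: multiplying the erf-approximation by $kx$ inflates its pointwise error by a factor as large as $k\lambda$, so the inner accuracy $\delta$ must be chosen a factor $k\lambda$ smaller than the target $\epsilon$. This is precisely what produces the extra $\log(k\lambda)$ contribution inside the final degree bound, and the only care needed is to check that the single logarithm $\log(k\lambda/\epsilon)$ absorbs both the $k$ and $\lambda$ factors cleanly. No difficulty arises from the size of $p$ itself, since the error is controlled entirely through $|p - h|$ and the true function $\mathrm{GELU}(kx)$ already contains the factor $h$; hence boundedness of the surrogate never enters the estimate.
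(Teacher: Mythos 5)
Your proposal is correct and follows exactly the paper's route: the paper's one-line proof likewise keeps the exact prefactor $\tfrac{kx}{2}$ and invokes \cref{polyappro.error} to approximate the error function to precision $\epsilon/(k\lambda)$, yielding degree $\mathcal{O}(k\log(k\lambda/\epsilon))$. Your version merely spells out the domain rescaling to $[-1,1]$ and the error-amplification bookkeeping that the paper leaves implicit, both of which you handle correctly.
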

\begin{proof}
    It suffices to approximate the error function with precision $\frac{\epsilon}{k\lambda}$ by \cref{polyappro.error}.
\end{proof}

In the following theorem, we consider how to implement the two-layer feedforward network.
As mentioned, the GELU function is widely used in transformer-based models and we explicitly consider it as the activation function in the theorem.
Cases for other activation functions like sigmoid follow the same analysis.
An example is the $\mathrm{tanh}(x)$ function, which can be well approximated by a polynomial for $x\in [-\pi/2,\pi/2]$ \cite{guo2021nonlinear}.

\begin{theorem}[Two-layer feedforward network with GELU function\label{theorem.ffn}]
Consider the setting as in \cref{prob.ffn}.
Let the activation function be $\mathrm{GELU}(x)\coloneqq x\cdot \frac{1}{2}(1+\mathrm{erf}(\frac{x}{\sqrt{2}}))$.
One can prepare an $(\mathcal{O}(\alpha\alpha_{m}^2/C), 2a+n+2 a_{m}+4, \mathcal{O}((\frac{\sqrt{N_2}}{C}\alpha^2_{m}\ell'\sqrt{\alpha_{m}\epsilon+\epsilon_{m}})^{\frac{1}{2}}))$-state-encoding of the state
\begin{align}
\ket \phi = \frac{1}{C} \sum_{k=1}^{N_{2}} \ab\Big(M_2\cdot\mathrm{GELU}(M_1\cdot\psi))_k\ket{k},
\end{align}
by using $\mathcal{\ell'}$ times of $U_{\psi}$ and $U_{\psi}^{\dagger}$,
where $C$ is the normalization factor and $\ell'= \mathcal{\tilde{O}}(\alpha \alpha_{m}\log(1/\epsilon_m))$.
\end{theorem}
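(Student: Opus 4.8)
The plan is to compose the three quantum-linear-algebra primitives established above in the same order as the classical network---apply $M_1$, then the activation element-wise, then $M_2$---while carefully tracking encoding factors and errors at each stage. The dimension-independence in the query count will come from using the \emph{importance-weighted} nonlinear amplitude transformation of Ref.~\cite{rattew2023nonlinear} rather than a uniform-superposition construction.

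First I would apply the first linear layer. Treating the state encoding $U_\psi$ as a block encoding of the column vector $\psi$ and invoking \cref{product.blockencoding} with $U_{M_1}$, I obtain a state encoding of the unnormalized vector $w := M_1\psi$ with encoding factor of order $\alpha\alpha_m$ and combined input error of order $\alpha_m\epsilon+\epsilon_m$. Writing $C_0 := \norm{w}_2$, this is equivalently an $(\alpha\alpha_m/C_0,\cdot,\cdot)$-state-encoding of the normalized state $\ket{\hat w}=\sum_k (w_k/C_0)\ket{k}$. Next I would realize the activation through the observation that $\mathrm{GELU}(x)=x\cdot g(x)$ with $g(x)=\tfrac12(1+\mathrm{erf}(x/\sqrt2))$, so that $\sum_k \mathrm{GELU}(w_k)\ket{k}=C_0\sum_k \hat w_k\, g(w_k)\ket{k}$. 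The crucial point is that the factor $\hat w_k$ is already supplied by $\ket{\hat w}$ itself, so I only need to multiply in the diagonal $\diag(g(w_k))$; this is exactly why no constant term (and no $\sqrt{N_1}$ overhead) appears. Concretely, I would use \cref{block encoding.amplitudes} to turn $U_{\hat w}$ into a block encoding of $\diag(w_k/(\alpha\alpha_m))$, then apply QSVT (\cref{theorem.qsvt}) with the polynomial approximating $g(\alpha\alpha_m\cdot y)$ for $y\in[-1/\alpha,1/\alpha]$. Since $g$ is built from the error function, \cref{polyappro.error} gives degree $\ell'=\widetilde{\mathcal{O}}(\alpha\alpha_m\log(1/\epsilon_m))$, and since $0\le g\le 1$ a harmless factor of $2$ meets the $|f|\le\tfrac12$ requirement of QSVT. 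Applying the resulting block encoding of $\diag(g(w_k))$ to $\ket{\hat w}$ yields a state encoding of $\mathrm{GELU}(M_1\psi)$; as QSVT is the only step that consumes $U_\psi$ repeatedly, this produces the stated $\ell'$ queries to $U_\psi$ and $U_\psi^\dagger$. Finally I would apply the second layer by a further use of \cref{product.blockencoding} with $U_{M_2}$.

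The encoding factor then follows by multiplying the factors of each stage: the amplitude-encoding-plus-QSVT stage contributes $\mathcal{O}(\alpha\alpha_m)$, the second layer contributes $\alpha_m$, and dividing by the final normalization $C=\norm{M_2\,\mathrm{GELU}(M_1\psi)}_2$ gives $\mathcal{O}(\alpha\alpha_m^2/C)$ as claimed. The number of ancillas is just the additive sum of the ancillas introduced by each primitive, giving $2a+n+2a_m+4$.

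The main obstacle is the error bookkeeping. I would carry the per-entry (max-norm) error of the block-encoded matrices via \cref{lem:element-wise error bound}, propagate the input error $\alpha_m\epsilon+\epsilon_m$ through robust amplitude encoding (a bounded constant blow-up by \cref{block encoding.amplitudes}) and then through QSVT, which by \cref{theorem.qsvt} contributes the characteristic factor-$\ell'$ growth together with a square-root in the input error. After multiplying by $M_2$ and its encoding factor of order $\alpha_m^2$, the remaining task is to convert the resulting per-amplitude error of the \emph{normalized} output state into an $L_\infty$ bound using \cref{lem:inf_norm_bound_for_normalized_vectors}; this last conversion is precisely what yields the outer square root and the $\sqrt{N_2}/C$ prefactor, reproducing the stated error $\mathcal{O}((\tfrac{\sqrt{N_2}}{C}\alpha_m^2\ell'\sqrt{\alpha_m\epsilon+\epsilon_m})^{1/2})$. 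The two delicate points I expect to verify most carefully are that the rescaling inside the polynomial matches the amplitude-encoding factor $\alpha\alpha_m$ exactly, so that $g$ is evaluated at the true entry $w_k$ rather than a rescaled one, and that the importance-weighted construction does not silently reintroduce an $N_1$-dependence in either the query complexity or the final encoding factor.
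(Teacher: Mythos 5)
Your proposal is correct and follows essentially the same route as the paper's proof: a block-encoding product for $M_1$, robust amplitude encoding to a diagonal matrix, QSVT of the rescaled factor $g(x)=\tfrac12(1+\mathrm{erf}(x/\sqrt2))$ so that the state itself supplies the multiplicative $x$ in $\mathrm{GELU}(x)=x\,g(x)$ (the importance-weighted trick that removes any $N_1$-dependence), a final product with $U_{M_2}$, and the conversion of per-entry error to the normalized $L_\infty$ bound via \cref{lem:inf_norm_bound_for_normalized_vectors}, which produces the $\sqrt{N_2}/C$ prefactor and the outer square root. Your encoding-factor, ancilla, and degree bookkeeping all match the paper's, including the rescaling of the erf argument by $\alpha\alpha_m$ that you flagged as a point to verify.
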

\begin{proof}
Let the erroneous block-encoded matrices be $M_1'$ and $M_2'$.
We have 
\begin{align}
(I_{a}\otimes U_{M_1})(I_{a_{m}} \otimes U_{\psi}) \ket{0^{a+a_{m}+n}}=\frac{1}{\alpha \alpha_{m}}\ket{0^{a+a_{m}}}M_1'\ket{\psi'}+\ket{\widetilde{\perp}},
\end{align}
where $\ket{\widetilde{\perp}}$ is an unnormalized orthogonal state.
For the case $N_1\geq N$, this can be achieved by padding ancilla qubits to the initial state.
By direct computation, we have 
\begin{align}
&\norm{M_1\ket{\psi}-M_1'\ket{\psi'}}_{\infty}\notag\\
\leq& \norm{M_1\ket{\psi}-M_1\ket{\psi'}+M_1\ket{\psi'}-M_1'\ket{\psi'}}_{\infty} \notag\\
\leq& \norm{M_1\ket{\psi}-M_1\ket{\psi'}}_{\infty}+\norm{M_1\ket{\psi'}-M_1'\ket{\psi'}}_{\infty} \notag \\
\leq& \norm{M_1}\norm{\ket{\psi}-\ket{\psi'}}_{\infty} + \norm{M_1-M_1'}\norm{\ket{\psi'}}_{\infty} \notag \\
\leq& \alpha_{m}\epsilon+ \epsilon_{m}.
\end{align}
By \cref{block encoding.amplitudes}, one can construct an $(\alpha \alpha_{m}, a+n+2, \alpha_{m}\epsilon+ \epsilon_{m})$-encoding of matrix $\diag((M_1\psi)_1,\dots, (M_1\psi)_{N_1})$.
Note that the $\mathrm{GELU}$ function does not have a constant term, and is suitable to use the importance-weighted amplitude transformation as in Ref.~\cite{rattew2023nonlinear}.
Instead of directly implementing the GELU function, we first implement the function $f(x)=\frac{1}{2}(1+\mathrm{erf}(\frac{x}{\sqrt{2}}))$.
Note that the value of $|\mathrm{erf}(x)|$ is upper bounded by $1$.
By \cref{block encoding.amplitudes} with function $\frac{1}{4}(1+\mathrm{erf}(\alpha \alpha_{m} \frac{x}{\sqrt{2}}))$, one can construct a $(2, a+n+4, 4\ell\sqrt{\alpha_{m}\epsilon+\epsilon_{m}}+\gamma+\delta)$-encoding of matrix $\diag(f(M_1\psi)_1,\dots, f(M_1\psi)_{N_1})$, where $\ell=\mathcal{\tilde{O}}(\alpha \alpha_{m}\log(1/\gamma))$.

Let the previously constructed block-encoding unitary be $U_{f(x)}$.
We have
\begin{align}
U_{f(x)}(I\otimes U_{M_1})(I\otimes U_{\psi})\ket{0}\ket{0}=\frac{1}{2\alpha\alpha_{m}}\ket{0}\sum_k \mathrm{GELU}'(M_1'\psi')_k\ket{k}+\ket{\widetilde{\perp'}},
\end{align}
where $\ket{\widetilde{\perp'}}$ is an unnormalized orthogonal state.
Setting $\gamma,\delta=\mathcal{O}(\epsilon_m)$,
by direct computation, we have 
\begin{align}
&\| \mathrm{GELU}'(M_1'\psi')-\mathrm{GELU}(M_1\psi) \|_{\infty}\notag\\
=& \| M_1'\psi'f'(M_1'\psi')-M_1\psi f(M_1\psi)\|_{\infty} \notag\\
\leq& \|M_1'\psi'f'(M_1'\psi')-M_1'\psi' f(M_1\psi)\|_{\infty}+\|M_1'\psi'f(M_1\psi)-M_1\psi f(M_1\psi)\|_{\infty}  \notag\\
\leq& \alpha_{m}(4\ell\sqrt{\alpha_{m}\epsilon+\epsilon_{m}}+\gamma+\delta)+\alpha_{m}\epsilon+\epsilon_{m}=\mathcal{O}(\alpha_{m}\ell\sqrt{\alpha_{m}\epsilon+\epsilon_{m}}).
\end{align}
Finally, by implementing the block-encoding unitary $U_{M_2}$, we have 
\begin{align}
&(I\otimes U_{M_2})(I\otimes U_{f(x)})(I\otimes U_{M_1})(I\otimes U_{\psi})\ket{0}\ket{0}\notag\\
=&\frac{C'}{2\alpha\alpha^2_{m}}\ket{0}\frac{1}{C'}\sum_j \psi_{\mathrm{fin}}\ket{j}+\ket{\widetilde{\perp''}},
\end{align}
where $C'$ is the exact normalization factor, $\norm{\psi_{\mathrm{inf}}-M_2\mathrm{GELU}(M_1\psi)}_{\infty}=\mathcal{O}(\alpha^2_{m}\ell'\sqrt{\alpha_{m}\epsilon+\epsilon_{m}}+\epsilon_{m})=\mathcal{O}(\alpha^2_{m}\ell'\sqrt{\alpha_{m}\epsilon+\epsilon_{m}})$,
and $\ket{\widetilde{\perp}''}$ is an unnormalized orthogonal state.
By \cref{lem:inf_norm_bound_for_normalized_vectors}, we have 
\begin{align}
\Bigl\|\frac{1}{C'}\psi_{\mathrm{inf}} -\frac{1}{C} M_2\mathrm{GELU}(M_1\psi) \Bigl\|_{\infty}=\mathcal{O}\left(\left(\frac{\sqrt{N_2}}{C}\alpha^2_{m}\ell'\sqrt{\alpha_{m}\epsilon+\epsilon_{m}}\right)^{\frac{1}{2}}\right).
\end{align}
\end{proof}

\subsection{Quantum single-layer transformer}

Combining the previous results, one can obtain the following result.
Note that for a single-layer transformer, we mean the same as \cref{figure_transformer}, i.e., combined with a self-attention block, a two-layer feedforward network, and two residual connection with layer normalization blocks.

\begin{theorem} 
[Quantum single-layer Transformer]\label{thmTransformer}
Let the input assumptions be as in \cref{input.assumption}.
If $\epsilon_s,\epsilon_w,\epsilon_m=\mathcal{O}(\epsilon^8d^{-4}\alpha_m^{-14}\alpha_s^{-6}\alpha_w^{-6}\varsigma^{2}\varsigma'^{8} \sqrt{\frac{Z_j}{N}})$, then for the index $j\in [N]$,
one can construct a $(1, \mathcal{O}(\ell(n+a_s+a_w)+a_M), \epsilon)$-state-encoding of a quantum state proportional to
\begin{align}
\sum_{k=1}^d \mathrm{Transformer}(S,j)_k \ket{k},
\end{align}
by using $\mathcal{O}(d \alpha_s\alpha_w\alpha_m^3\ell\sqrt{\frac{N}{Z_j}}\frac{1}{\varsigma \varsigma'}\log(\frac{1}{\epsilon_m}))$ times of $U_S, U_{W_q}, U_{W_k}$, $U_{W_v}$ and $U_{M}$, where $\ell=\mathcal{O}(n\log(\frac{1}{\epsilon_s+\epsilon_w}))$,
$Z_j=\sum_{k=1}^N \exp\circ(QK^{\top}/\alpha_s^2\alpha_w^2)_{jk}$, and $\varsigma,\varsigma'$ are standard deviations from two layer normalization blocks. 
\end{theorem}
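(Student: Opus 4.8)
The plan is to realize the functional composition $\mathrm{Transformer}(S,j)=\mathrm{LN}(\mathrm{FFN}(\mathrm{LN}(\mathrm{Atten}(S))))_j$ by chaining the four building-block subroutines established above in exactly this order, feeding the output encoding of each stage as the input encoding of the next. Concretely, I would proceed in four stages. First, apply \cref{attention.attention} to $U_S,U_{W_q},U_{W_k},U_{W_v}$ to obtain an $(\alpha_g,a_g,\epsilon_g)$-encoding $U_G$ of a matrix whose $j$-th row equals $G^{\mathrm{soft}}_j=\mathrm{Atten}(S)_j$. Second, feed $U_G$ together with $U_S$ into \cref{theorem.residual} to obtain a state-encoding $U_B$ of $\ket{B_j}\propto\mathrm{LN}(G^{\mathrm{soft}}_j,S_j)$, whose encoding factor scales as $\mathcal O(\sqrt{d}(\alpha_g+\alpha_s)/\varsigma)$. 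Third, feed $U_B$ into \cref{theorem.ffn} with the weight encodings $U_{M_1},U_{M_2}$ and the GELU activation to obtain a state-encoding $U_C$ of $\ket{C_j}\propto\mathrm{FFN}(M_1,M_2,B_j)$. Fourth, invoke \cref{theorem.residual} once more, now with $U_C$ playing the role of the attention output and $U_B$ playing the role of the residual input, to obtain a state-encoding of $\ket{D_j}\propto\mathrm{LN}(\mathrm{FFN}(\cdot),B_j)=\mathrm{Transformer}(S,j)$. A concluding round of amplitude amplification reduces the overall encoding factor to $1$, yielding the claimed $(1,\cdot,\epsilon)$-state-encoding. No cloning of intermediate states is required: reusing $U_B$ both as the feed-forward input and as the residual for the last normalization simply means invoking the same unitary several times.

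For the complexity, I would compose the four subroutines, track how the query counts and encoding factors multiply, and then bound the cost of the concluding amplitude amplification by the product of all stage encoding factors. The leading factor $d$ arises as the product of the two $\sqrt{d}$ encoding factors contributed by the two layer-normalization stages; the $\alpha_s\alpha_w\sqrt{N/Z_j}$ and the polynomial degree $\ell=\mathcal O(n\log(1/(\epsilon_s+\epsilon_w)))$ come from the attention stage; the $1/(\varsigma\varsigma')$ comes from the two inverse standard deviations in the normalization encoding factors; and the $\alpha_m^3$ splits as $\alpha_m^2$ from the two feed-forward weight matrices together with one further $\alpha_m$ from the GELU polynomial degree $\ell'=\widetilde{\mathcal O}(\alpha_m\log(1/\epsilon_m))$. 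Since all building-block unitaries are controllable and the permutation and state-preparation-pair overheads of \cref{product.blockencoding} and \cref{LCU.blockencoding} are only polylogarithmic, they are absorbed into $\ell$, and collecting the factors gives the stated $\mathcal O\!\big(d\,\alpha_s\alpha_w\alpha_m^3\,\ell\sqrt{N/Z_j}\,\tfrac{1}{\varsigma\varsigma'}\log(1/\epsilon_m)\big)$ query complexity.

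The main obstacle is the backward error analysis, and it is precisely what forces the eighth power of $\epsilon$ and the asymmetric $\varsigma^2$ versus $\varsigma'^8$ dependence in the hypothesis. Three distinct stages convert a matrix block-encoding carrying element-wise error $\delta$ into a normalized state-encoding, and by \cref{lem:inf_norm_bound_for_normalized_vectors} each such conversion amplifies the error to order $\sqrt{\delta\sqrt{d}/C}$: once in the softmax preparation inside \cref{attention.attention}, and twice inside \cref{theorem.ffn}, namely the inner eigenvalue-transformation error of order $\sqrt{\alpha_m\epsilon+\epsilon_m}$ and the outer renormalization. Composing three nested square roots turns the input tolerance $\epsilon_s+\epsilon_w+\epsilon_m$ into a forward error of order $(\epsilon_s+\epsilon_w+\epsilon_m)^{1/8}$, which is why inverting $\epsilon_{\mathrm{out}}\le\epsilon$ produces $\epsilon^8$. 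The division by $\varsigma$ at the \emph{first} normalization subsequently passes through the two square-root stages of the feed-forward block and surfaces as $\varsigma^2$, whereas the division by $\varsigma'$ at the \emph{last} normalization meets no further square root and surfaces as $\varsigma'^8$; the remaining polynomial factors $d^{-4}\alpha_m^{-14}\alpha_s^{-6}\alpha_w^{-6}\sqrt{Z_j/N}$ emerge from tracking, through triangle inequalities and the product and LCU error rules, exactly how $\alpha_s,\alpha_w,\alpha_m$ and the dimensional factor $\sqrt{d}$ enter each square root. I expect this bookkeeping to be the bulk of the work, while the construction itself is a direct composition of the preceding theorems.
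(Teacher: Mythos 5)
Your proposal matches the paper's proof essentially step for step: the same four-stage composition (attention via \cref{attention.attention}, first residual+LN via \cref{theorem.residual}, FFN via \cref{theorem.ffn}, then a second residual+LN obtained by re-running the steps of \cref{theorem.residual} on the FFN output $\tilde\phi$ added to its input $\psi$), the same query-count factorization ($d$ from the two $\sqrt{d}$ layer-normalization encoding factors, $\alpha_s\alpha_w\ell\sqrt{N/Z_j}$ from attention, $1/(\varsigma\varsigma')$ from the two standardizations, and $\alpha_m^3=\alpha_m^2$ from the weight encodings times $\alpha_m$ from the GELU degree $\ell'$), and the same inversion of the $\epsilon_{\mathrm{block}}^{1/8}$ forward error — with $\varsigma$ entering at exponent $1/4$ and $\varsigma'$ at exponent $1$ — which yields exactly the hypothesis $\epsilon_s,\epsilon_w,\epsilon_m=\mathcal{O}\bigl(\epsilon^8 d^{-4}\alpha_m^{-14}\alpha_s^{-6}\alpha_w^{-6}\varsigma^2\varsigma'^8\sqrt{Z_j/N}\bigr)$. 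One minor attribution slip that does not affect the bookkeeping: the middle square root in the error chain comes from the QSVT robustness term $4\ell\sqrt{\alpha_m\epsilon+\epsilon_m}$ of \cref{theorem.qsvt} applied inside the feed-forward block, not from a second application of \cref{lem:inf_norm_bound_for_normalized_vectors}.
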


\begin{proof}
    As shown in \cref{figure_transformer}, a single-layer transformer contains the self-attention, residual connection and layer normalization, and the feedforward network.
    In Theorem~\ref{attention.attention}, \ref{theorem.residual} and \ref{theorem.ffn}, we have considered each block in detail.
    Here, we complete the analysis for the second residual connection after the feedforward network.

    As described in \cref{prob.ffn} and \cref{theorem.ffn}, we have access to $(\alpha, a, \epsilon)$-state-encoding of $\ket{\psi}$ and $(2\alpha\alpha_m^2, \mathcal{O}(a+n+a_m), \mathcal{O}((\sqrt{d}\alpha\alpha^3_{m}\log(\frac{1}{\epsilon_m})\sqrt{\alpha_{m}\epsilon+\epsilon_{m}})^{\frac{1}{2}}))$-encoding of matrix $B$ such that $B_{\star 1}=(\tilde{\phi}_1,\cdots,\tilde{\phi}_d)$, where $\tilde{\phi}\coloneqq M_2\cdot\mathrm{GELU}(M_1\cdot\psi)$.
    Here, the dimension of vector $\psi$ is $d$ and $N_2=d$.
    The target is to construct a state encoding of
    \begin{align}
        \sum_{k=1}^d \mathrm{LN}(\tilde{\phi}_k+\psi_k)\ket{k}.
    \end{align}
    The state encoding can be understood as a block encoding of a matrix whose first column corresponds to the quantum state.
    By \cref{LCU.blockencoding} and taking the self-adjoint, one can construct a $(2\alpha\alpha_m^2+\alpha, \mathcal{O}(a+n+a_m), \mathcal{O}((\sqrt{d}\alpha\alpha^3_{m}\log(\frac{1}{\epsilon_m})\sqrt{\alpha_{m}\epsilon+\epsilon_{m}})^{\frac{1}{2}}))$-encoding of a matrix whose first row is $(\psi_1+\tilde{\phi}_1,\dots,\psi_d+\tilde{\phi}_d)$.
    
    The following steps are the same as in \cref{theorem.residual}.
    One can construct an $(\mathcal{O}((\sqrt{d}+1)\alpha\alpha_m^2/\varsigma',\mathcal{O}(a+n+a_m),\mathcal{O}((\sqrt{d}\alpha\alpha^3_{m}\log(\frac{1}{\epsilon_m})\sqrt{\alpha_{m}\epsilon+\epsilon_{m}})^{\frac{1}{2}}/\varsigma') )$-state-encoding of the state
    \begin{align}
        \sum_{k=1}^d \mathrm{LN}(\tilde{\phi}_k+\psi_k)\ket{k},
    \end{align}
    where $\varsigma'\coloneqq\sqrt{\sum_{k=1}^d (\tilde{\phi}_k+\psi_k-\Bar{\psi})^2}$ and $\Bar{\psi}\coloneqq \frac{1}{d}\sum_{k=1}^d (\tilde{\phi}_k+\psi_k)$.

    The final result can be achieved by combining the results in Theorem~\ref{attention.attention}, \ref{theorem.residual}, and \ref{theorem.ffn}.
    Let the initial encoding error be $\epsilon_s,\epsilon_w, \epsilon_m=\mathcal{O}(\epsilon_{\mathrm{block}})$.
    In the quantum self-attention block, we output the state with error $\mathcal{O}(\alpha_s\alpha_w\sqrt[^4]{\frac{N}{Z_j}}\sqrt{\epsilon_{\mathrm{block}}})$, which is stated in \cref{attention.attention}.
    After the self-attention, we implement the quantum residual connection and layer normalization. The accumulated error is $\mathcal{O}(\alpha_s\alpha_w\sqrt[^4]{\frac{N}{Z_j}}\sqrt{\epsilon_i}/\varsigma)$, where $\varsigma$ is the standardization factor. Note that here, the normalization factor is $\mathcal{O}(\sqrt{d}\alpha_s\alpha_w/\varsigma)$. This can be seen from \cref{theorem.residual} and \cref{attention.attention}.
    Continuing to the quantum feedforward network and residual connection, described as \cref{theorem.ffn} and above, the error is 
    \begin{align}
    &\mathcal{O}\left(\left(\sqrt{d}\alpha\alpha^3_{m}\log\left(\frac{1}{\epsilon_m}\right)\sqrt{\alpha_{m}\epsilon+\epsilon_{m}}\right)^{\frac{1}{2}}/\varsigma'\right)\\
    =&\mathcal{O}\left(\left(\sqrt{d}\sqrt{d}\alpha_s\alpha_w/\varsigma\alpha^3_{m}\log\left(\frac{1}{\epsilon_m}\right)\sqrt{\alpha_{m}\alpha_s\alpha_w\sqrt[^4]{\frac{N}{Z_j}}\sqrt{\epsilon_{\mathrm{block}}}/\varsigma}\right)^{\frac{1}{2}}/\varsigma'\right)\\
    =&\mathcal{O}\left(\left(d\alpha_m^{\frac{7}{2}} \alpha_s^{\frac{3}{2}}\alpha_w^{\frac{3}{2}}\sqrt[^8]{\frac{N}{Z_j}}\sqrt[^4]{\epsilon_{\mathrm{block}}}/\sqrt{\varsigma}\right)^{\frac{1}{2}}/\varsigma'\right).
    \end{align}
    To make this bounded by $\mathcal{O}(\epsilon)$, we need to set $\epsilon_{\mathrm{block}}=\mathcal{O}(\epsilon^8d^{-4}\alpha_m^{-14}\alpha_s^{-6}\alpha_w^{-6}\varsigma^{2}\varsigma'^{8} \sqrt{\frac{Z_j}{N}})$.
\end{proof}

One can arrive the informal theorem shown in the main part by assuming $\alpha_s=\mathcal{O}(\sqrt{N}), \alpha_w=\mathcal{O}(1), \alpha_m=\mathcal{O}(1), Z_j=\Omega(N)$, and $\varsigma,\varsigma'=\Omega(1)$.

\begin{theorem}[Quantum single-layer transformer, informal\label{thmTransformer.informal_appen}]
For a transformer with embedding dimension $d$ and an input sequence $S$ of length $N$, assume that block-encoded inputs of sequence matrix and weight matrices has embedding factors $\alpha_s = \cO(\sqrt{N})$ and $\alpha_w = \cO(1)$ respectively.
For the index $j\in [N]$, one can construct a quantum circuit that prepares the state
\begin{align}
\sum_{k=1}^d \mathrm{Transformer}(S,j)_k \ket{k},
\end{align}
up to error $\epsilon$ by using ${\mathcal{\widetilde{O}}}(\sqrt{N} d \log^2(1/\epsilon))$ times of the input block encodings.  
\end{theorem}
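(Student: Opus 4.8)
The plan is to derive the informal statement as a direct corollary of the formal Theorem~\ref{thmTransformer}: I would substitute the assumed scalings $\alpha_s = \mathcal{O}(\sqrt{N})$, $\alpha_w = \mathcal{O}(1)$, $\alpha_m = \mathcal{O}(1)$, $Z_j = \Omega(N)$, and $\varsigma,\varsigma' = \Omega(1)$ into the query complexity $\mathcal{O}(d\,\alpha_s\alpha_w\alpha_m^3\ell\sqrt{N/Z_j}\,\tfrac{1}{\varsigma\varsigma'}\log(1/\epsilon_m))$, simplify, and absorb all polylogarithmic factors in $N$ and $d$ into the $\widetilde{\mathcal{O}}$ notation. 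No new subroutine needs to be constructed; the entire content is bookkeeping of the norm and error parameters.

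First I would address the error-parameter hypothesis of Theorem~\ref{thmTransformer}, which demands input block-encoding errors $\epsilon_s,\epsilon_w,\epsilon_m = \mathcal{O}(\epsilon^8 d^{-4}\alpha_m^{-14}\alpha_s^{-6}\alpha_w^{-6}\varsigma^2\varsigma'^{8}\sqrt{Z_j/N})$. Under the assumed scalings this collapses to $\epsilon_{\mathrm{block}} = \mathcal{O}(\epsilon^8/\mathrm{poly}(N,d))$, so it suffices to prepare the input encodings to this tolerance. The single consequence I need from this is the logarithmic estimate $\log(1/\epsilon_{\mathrm{block}}) = 8\log(1/\epsilon) + \mathcal{O}(\log(Nd)) = \mathcal{O}(\log(1/\epsilon))$ up to additive $\mathrm{polylog}(N,d)$ terms, which will control both $\epsilon$-dependent factors below.

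Next I would evaluate the prefactor $d\,\alpha_s\alpha_w\alpha_m^3\sqrt{N/Z_j}\,/(\varsigma\varsigma')$. Here $\alpha_s\alpha_w = \mathcal{O}(\sqrt{N})$, while $\alpha_m^3 = \mathcal{O}(1)$, $\sqrt{N/Z_j} = \mathcal{O}(1)$ since $Z_j = \Omega(N)$, and $1/(\varsigma\varsigma') = \mathcal{O}(1)$, so the prefactor is $\mathcal{O}(\sqrt{N}\,d)$. The remaining $\epsilon$-dependence sits in the two factors $\ell = \mathcal{O}(n\log(1/(\epsilon_s+\epsilon_w)))$ and $\log(1/\epsilon_m)$. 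Using $n = \log N$ together with the logarithmic estimate above, $\ell = \mathcal{O}(\log N \cdot \log(1/\epsilon))$ and $\log(1/\epsilon_m) = \mathcal{O}(\log(1/\epsilon))$, each up to $\mathrm{polylog}(N,d)$ terms; their product therefore supplies the $\log^2(1/\epsilon)$ in the target bound.

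Multiplying the pieces gives query complexity $\mathcal{O}(\sqrt{N}\,d \cdot \log N \cdot \log^2(1/\epsilon))$ together with further $\mathrm{polylog}(N,d)$ contributions, which collapses to $\widetilde{\mathcal{O}}(\sqrt{N}\,d\,\log^2(1/\epsilon))$; the correctness of the output as a $(1,\cdot,\epsilon)$-state-encoding of the normalized transformer vector is inherited verbatim from Theorem~\ref{thmTransformer}. I expect the only genuine care to lie in confirming that the $\epsilon^8$-type input tolerance is the right one, i.e.\ that it survives the nested square-root error amplifications of the attention, layer-normalization, and feed-forward subroutines to yield a final state-encoding error of exactly $\epsilon$. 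This, however, is precisely the property the hypothesis of Theorem~\ref{thmTransformer} was engineered to guarantee, so no additional error estimate is required and the derivation reduces to the substitution above.
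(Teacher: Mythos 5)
Your proposal is correct and follows essentially the same route as the paper, which likewise obtains the informal statement as a direct corollary of \cref{thmTransformer} by substituting $\alpha_s=\mathcal{O}(\sqrt{N})$, $\alpha_w=\mathcal{O}(1)$, $\alpha_m=\mathcal{O}(1)$, $Z_j=\Omega(N)$, and $\varsigma,\varsigma'=\Omega(1)$. Your explicit bookkeeping of the $\epsilon^8/\mathrm{poly}(N,d)$ input tolerance and of the product $\ell\cdot\log(1/\epsilon_m)$ yielding the $\log^2(1/\epsilon)$ factor simply spells out what the paper leaves implicit, and is accurate.
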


To validate the assumptions, we provide numerical experiments in later sections.

\subsection{Output of quantum transformer}

Notice that the quantum single-layer transformer prepares a quantum state proportional to the corresponding classical vectors. For data post-processing and related applications like classification and next token prediction, we need to first translate the quantum state to a classical vector. Here we provide a detailed discussion about the output procedure.

To obtain the classical output, one can perform the quantum state tomography. 
Here, we use the $\ell_{\infty}$-norm tomography for the analysis.
Note that we change from the time complexity to the query complexity to match the analysis in this paper.
\begin{theorem}[$L^{\infty}$ state tomography \cite{Kerenidis2020Quantum}\label{thm.tomography}]
    Given access to a quantum circuit $U:\ket{0}\rightarrow \ket{\psi}$, there is a tomography algorithm that produces unit vector $\psi'\in\mathbb{R}^d$ such that $\|\psi'-\psi\|_{\infty}\leq \epsilon$ with probability at least $1-1/\mathrm{poly}(d)$ by using $\mathcal{O}(\log d/\epsilon^2)$ times of controlled-$U$.
\end{theorem}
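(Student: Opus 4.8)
The plan is to reconstruct the real unit vector $\psi$ coordinate by coordinate, first recovering the magnitudes $\abs{\psi_i}$ and then the signs, spending $\Ord{\log d/\epsilon^2}$ uses of (controlled-)$U$ in each stage. The two stages are analyzed by a Chernoff bound together with a union bound over the $d$ coordinates, so that all per-coordinate estimates are simultaneously accurate with probability $1-1/\mathrm{poly}(d)$.

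First I would estimate the magnitudes by sampling. Applying $U$ to $\ket{0}$ and measuring in the computational basis yields outcome $i$ with probability $p_i=\abs{\psi_i}^2$. Drawing $N=\Ord{\log d/\epsilon^2}$ samples and setting $\hat a_i=\sqrt{X_i/N}$, where $X_i$ counts outcome $i$, I would bound the probability that $\abs{\hat a_i-\abs{\psi_i}}>\epsilon/2$ by a \emph{multiplicative} Chernoff bound: that event forces $X_i/N$ to deviate from $p_i$ by at least about $\epsilon\sqrt{p_i}$, and the factor $\sqrt{p_i}$ is precisely what makes the Chernoff exponent scale as $\Theta(N\epsilon^2)$, independent of $p_i$. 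A union bound over the coordinates then gives $\norm{\hat a-\abs{\psi}}_\infty\le\epsilon/2$ with probability $1-1/\mathrm{poly}(d)$, using $\Ord{\log d/\epsilon^2}$ copies.

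Next I would fix the signs by interference against a reference \emph{shaped by the estimated magnitudes}. Using the classically known $\hat a_i\ge 0$, I would prepare a reference state $\ket{r}\propto\sum_i\hat a_i\ket{i}$ with all-nonnegative amplitudes and form $\tfrac1{\sqrt2}(\ket{0}\ket{\psi}+\ket{1}\ket{r})$ with one controlled-$U$ and one controlled reference preparation. Applying a Hadamard to the control qubit and measuring both registers, the outcome $(b,i)$ occurs with probability $\tfrac14(\psi_i\pm \hat a_i/\norm{\hat a})^2$ for $b=0,1$. Since $\hat a_i$ tracks $\abs{\psi_i}$, conditioned on data outcome $i$ the control bit is almost deterministically $0$ when $\psi_i>0$ and $1$ when $\psi_i<0$, and crucially the joint probability of seeing coordinate $i$ is $\Theta(p_i)$. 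Hence every ``heavy'' coordinate with $p_i\gtrsim\epsilon^2$ is hit $\Omega(\log d)$ times within $\Ord{\log d/\epsilon^2}$ repetitions, so a majority vote fixes its sign with failure probability $1/\mathrm{poly}(d)$; coordinates with $\hat a_i<\epsilon/2$ may be assigned an arbitrary sign, as flipping them perturbs the output by at most $\epsilon$. Setting $\psi'_i=\hat s_i\hat a_i$ and combining the magnitude and sign analyses yields $\norm{\psi'-\psi}_\infty\le\epsilon$ at total query cost $\Ord{\log d/\epsilon^2}$.

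The main obstacle is the sign stage. A naive Hadamard test against the \emph{uniform} reference $\tfrac1{\sqrt d}\sum_i\ket{i}$ pushes the per-coordinate joint probabilities down to $\Theta(1/d)$, so that $\Omega(d)$ samples would be needed merely to observe each coordinate; the essential trick is to reuse the already-estimated magnitudes so that the usable signal sits at scale $p_i$ rather than $1/d$. The delicate part of the analysis is then to propagate the $\Ord{\epsilon}$ error in $\hat a_i$ through the interference step and verify that it never flips the decided sign of a heavy coordinate, which forces the magnitude tolerance to be a small constant fraction of $\epsilon$ and requires a careful choice of the heaviness threshold.
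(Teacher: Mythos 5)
Your proposal is correct and follows essentially the same route as the paper's source for this statement: the paper itself imports the theorem without proof from Ref.~\cite{Kerenidis2020Quantum}, and the tomography algorithm there is exactly your two-stage scheme — computational-basis sampling with a multiplicative Chernoff bound plus a union bound to get $\|\hat a - |\psi|\,\|_\infty \leq \mathcal{O}(\epsilon)$ from $\mathcal{O}(\log d/\epsilon^2)$ copies, followed by a Hadamard-test sign recovery against a reference state $\propto \sum_i \hat a_i \ket{i}$ shaped by the estimated magnitudes, with majority vote on heavy coordinates and arbitrary signs on light ones. Your key observation (that interfering with the estimated-amplitude reference rather than the uniform state keeps the per-coordinate signal at scale $p_i$ instead of $1/d$) is precisely the idea in the cited construction, and the constant-factor tuning of the magnitude tolerance and heaviness threshold that you flag is handled there the same way.
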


The theorem implies that we can obtain the classical vector $\mathrm{Transformer}(S,j)$ with precision $\epsilon$ by using the quantum transformer circuit in \cref{thmTransformer.informal_appen} for $\cO(\log d/\epsilon^2)$ times. After getting the classical vector $\mathrm{Transformer}(S,j)$, one could directly follow the procedure of applying classical transformer in various tasks such as sequence classification and next token prediction.
Here, we can take $\epsilon$ as a constant, similar to the classical quantization method \cite{gholami2021surveyquantizationmethodsefficient, Ofir2019Q8BERT}, where they train the model with $16$ or $32$ bit precision, and implement the inference with $4$ or $8$ bit precision.
They show that this works well in practice and can save computational cost from low precision computation.

For the task of $k$-category sequence classification, a pre-trained linear map is applied on the classical vector $\mathrm{Transformer}(S,j)$ and give a $k$-dimension vector that indicates the classification result. The computational cost is $\cO(dk)$ from the matrix multiplication, which is negligible as the number of categories $k$ and the embedding dimension $d$ are typically much smaller than the sequence length $N$.

As for next token prediction, the predicted token is obtained by first linearly transforming the vector $\mathrm{Transformer}(S,j)$ to dimension $d_{\mathrm{token}}$ (the number of different tokens), then implementing a softmax function and sampling from the distribution. The runtime of such a procedure is $\cO(d \cdot d_{\mathrm{token}})$. Note that $d_{\mathrm{token}}$ is comparatively small to $N$, thus it does not slow down the quadratic speedup on $N$ brought by the quantum subroutine. It could be easily extended to predicting next $k$ tokens, by adding the previously predicted token to the input sequence and repeating the procedure for $k$ times.

One can implement the process for all focused tokens $j\in [N]$ to obtain the information required by the next layer's self-attention block.
Since there are $N$ tokens, one needs to repeat the algorithm $N$ times.
After reading out the classical vectors, one can reload the $Nd$ data back to qRAM and the quantum data structure.
After reloading, one can continue the computation for the next layer.
In this way, one can directly generalize to the multi-layer transformer architecture.
However, in this case the quantum complexity is $\widetilde{\mathcal{O}}(N^{\frac{3}{2}}d)$, while the classical is $\widetilde{\mathcal{O}}(N^2d+Nd^2)$.
Whether there exists a more efficient method to generalize to the multi-layer reamins as an open problem.

\subsection{Possible generalizations}

We briefly describe an extension of our work.

\textit{Trainable architecture\ ---}
For the trainability of the architecture, we require trainable parameters and a loss function. So far, we have assumed that the weights are pre-trained and made available via block-encodings. The modularity of the block-encoding framework allows to swap the assumed block encodings for parameterized block encodings, that contain trainable parameters. We provide a formal definition for a trainable block encoding here and note that the definition contains the usual variational circuits and allows for more general circuits. 

\begin{definition}[Parameterized block encoding (PBE)] \label{defPBE}
Let $\theta \in \mathbbm R^M$ where $M$ is the number of parameters, $A(\theta) \in \mathbbm C^{2^n\times 2^n}$ and $\alpha(\theta)>0$ such that $\|A(\theta)\|/\alpha(\theta) \leq 1$. 
We say a unitary $U$ is a $(\alpha(\theta),a,\epsilon)$ parameterized block encoding if $U$ is a $(\alpha(\theta),a,\epsilon)$ block encoding of $A(\theta)$.
\end{definition}
For training, the main strategy is to use the loss functions from the classical architectures \cite{vaswani2017attention} and results from tomography \cite{Kerenidis2020Quantum, huang2020predict}. While we expect that issues such as barren plateaus \cite{mcclean2018barren, larocca2024review} will appear, especially for variational PBEs, there could be room for efficient training arising from the discussed possible quantum advantages of the inference step.  We leave a discussion of PBEs and transformer architecture training for future work. It would also be interesting to consider a comparison of the more general definition of PBEs and variational circuits in light of the barren plateau issue.

\section{Discussion for quantum advantages}

\subsection{Numerical studies of quantum-relevant properties of real-world LLMs}
In this section, we provide numerical investigations of popular open-source LLMs in terms of their connection to our quantum implementation of the transformer. In particular, we focus on the key quantities that determine the run time of the quantum transformer, which arise from the given input.
There are multiple ways to construct the block encoding as given in the input assumption, which we describe in \cref{input.assumption}.
The embedding dimension $d$ is $768$ for BERT \cite{devlin2019bert}, RoBERTa \cite{liu2019roberta}, GPT \cite{radford2018improve}, DistilGPT \cite{sanh2019distilbert} and GPT2 \cite{radford2019language}; $2048$ for TinyLlama \cite{zhang2024tinyllamaopensourcesmalllanguage}; and $4096$ for both Llama2-7B \cite{touvron2023llama2openfoundation} and Mistral-7B \cite{jiang2023mistral7b}.

If it is possible to have access to the qRAM and quantum data structure, one can construct a block encoding for an arbitrary matrix, paying the price that the normalization factor will be the Frobenius norm of block encoded matrix.
\cref{sec.block-encoding}.
\begin{figure}[htbp]
  \centering
  \includegraphics[width=\columnwidth]{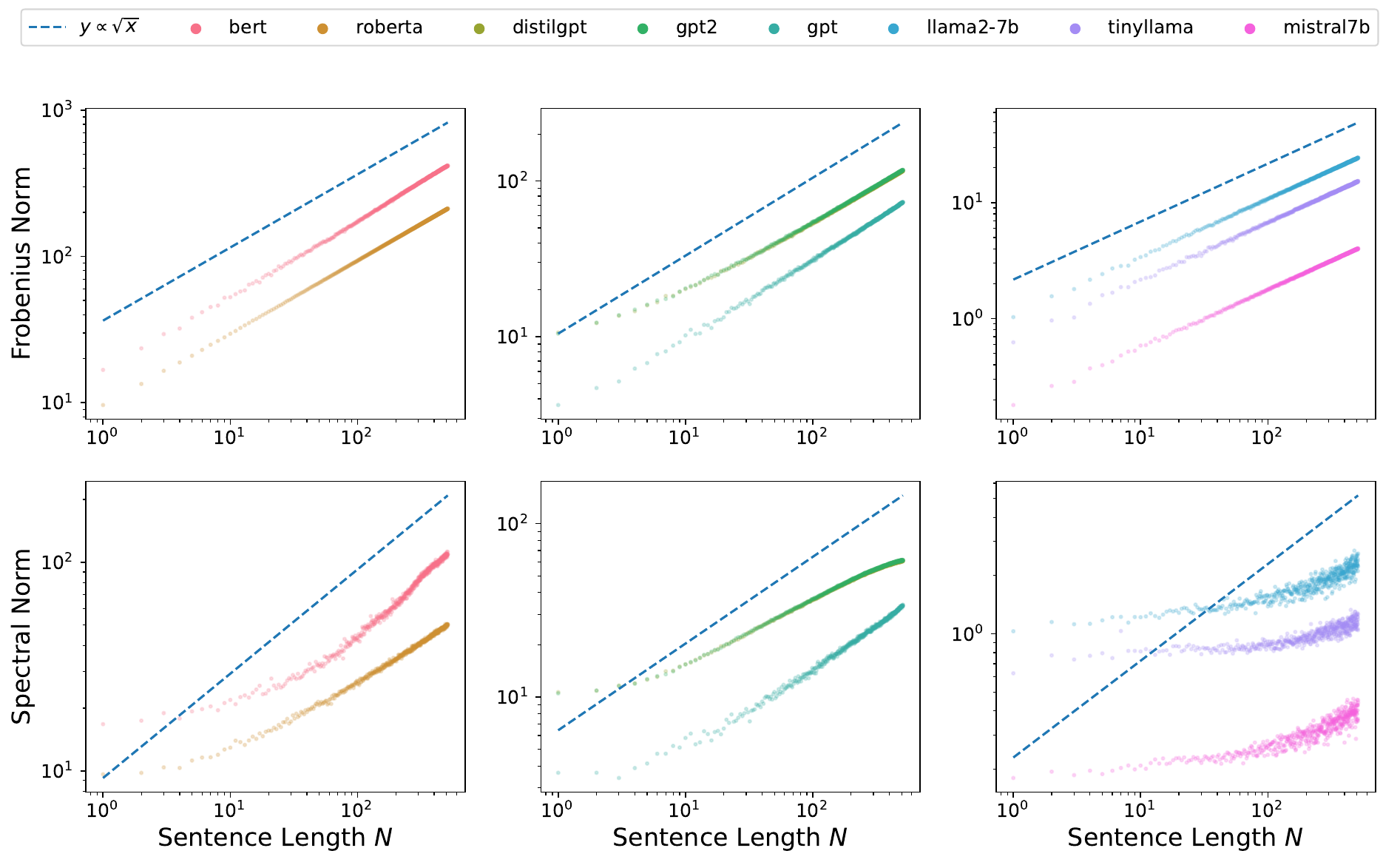}
  \caption{Scaling of the spectral norm $\|S\|$ and the Frobenius norm $\|S\|_{F}$ with $N$ for each model, displayed on logarithmic scales for both axes. For reference, the line $y \propto \sqrt{x}$ is also shown. We randomly generate tokens and convert them to $S$.}
  \label{fig:random_dataset_fro_spe}
\end{figure}
Based on this consideration and to obtain a better intuition, we numerically study several open-source large language models\footnote{Parameters are obtained from the \href{https://huggingface.co/}{Hugging Face} website, which is an open-source platform for machine learning models.}. We first investigate the spectral and Frobenius norm of the input sequence matrix $S$. To demonstrate how the norms of $S$ scale with the length $N$, we randomly sample tokens from the tokenizer that each pretrained model uses and then perform inference on the model with the generated dataset. The results are shown in \cref{fig:random_dataset_fro_spe}. The norms seen in \cref{fig:random_dataset_fro_spe} are calculated by summing the input embedding with the positional embedding, and lastly computing the respective norms on the resulting vector. We observe that the spectral norm scales almost sublinearly with $\cO(\sqrt{N})$ and the Frobenius norm scales as $\cO(\sqrt{N})$.

We also consider data in real-world applications, such as samples from the widely-used Massive Multitask Language Understanding (MMLU) dataset~\cite{hendry2021measuring} covering 57 subjects across STEM, the humanities, the social sciences, and more. The scaling of the spectral norm and the Frobenius norm of $S$ on the MMLU dataset is demonstrated in \cref{fig:MMLU_dataset_fro_spe}.
Again, the {\it DistilGPT} results almost overlap with those of {\it GPT2}. We see that in some of the models, the variances of the Frobenius norm and the spectral norm for a given $N$ are large compared to those of the random dataset. The large variances are arguably the consequence of the training in those models; the embeddings that frequently appear in the real-world dataset are actively updated at the pre-training stage, and therefore, are more broadly distributed as a result of the pre-training. In models with relatively small variance, e.g., {\it BERT}, {\it GPT}, and {\it Llama2-7b}, the spectral norm and the Frobenius norm sublinearly scale as $\mathcal{O}(\sqrt{N})$.

It is notable that the spectral norms in {\it BERT} and {\it Roberta} even decrease with the value of $N$. This can be caused by the correlations between the embeddings; the embeddings that appear in the longer sentences may be correlated with each other in those models, resulting in a smaller spectral norm.

\begin{figure}[htbp]
  \centering
  \includegraphics[width=\columnwidth]{Figures/MMLU.pdf}
  \caption{Scaling of the spectral norm $\|S\|$ and the Frobenius norm $\|S\|_{F}$ with $N$ for each model, displayed on logarithmic scales for both axes. For reference, the line $y \propto \sqrt{x}$ is also shown. We use tokens in the MMLU dataset and convert them to $S$.}
  \label{fig:MMLU_dataset_fro_spe}
\end{figure}

\begin{figure}[htbp]
  \centering
  \includegraphics[width=0.8\columnwidth]{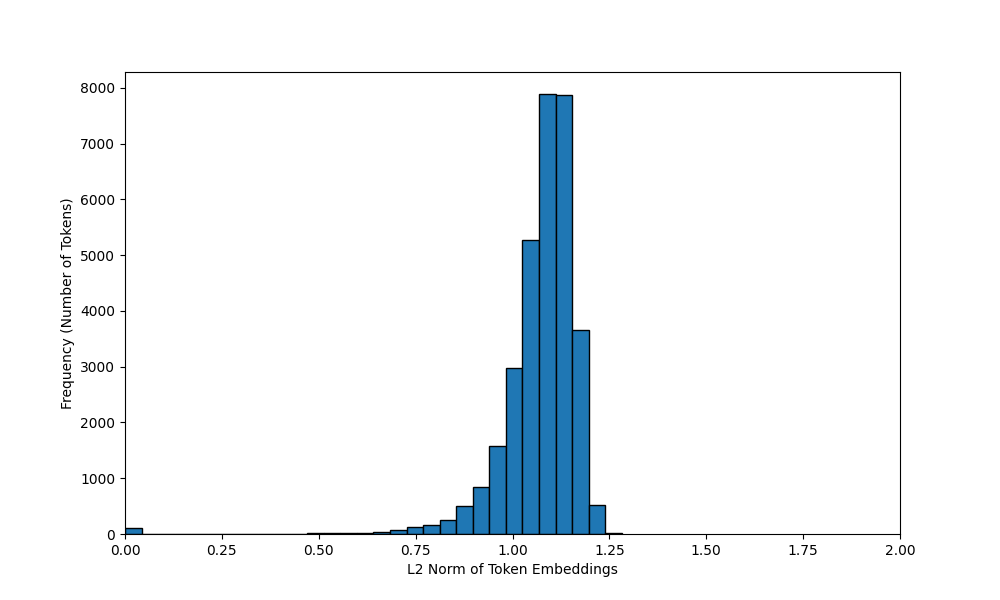}
  \caption{$L^2$ norm of token vectors in Llama2-$7b$. We compute the vector norm for all token vectors in the vocabulary of Llama2-$7b$.}
  \label{fig:llama_vector_norm}
\end{figure}

For a random matrix $S\in \mathbb{R}^{N\times d}$, the Frobenius norm in general scale as $\mathcal{O}(\sqrt{Nd})$.
From this mathematical aspect, one may wonder whether there is an additional dependency on the embedding dimension $d$ for the transformer architecture.
However, it is not the case as after training, the $L^2$-norm of each token vector is upper bounded by a constant, and independent of the dimension.
This can be observed in \cref{fig:MMLU_dataset_fro_spe}, as the Llama2-7b and Mistral-7b with $d=4096$ have smaller Frobenius norm than the other models like BERT, ROBERTA and GPT with $d=768$.
To verify this even further, we have computed the vector norm for all tokens in the vocabulary of the \textit{Llama2}-7b, shown as \cref{fig:llama_vector_norm}.
One can clearly see that the $L^2$-norm is centered around $1.1$, and upper bounded by $1.5$.
As a comparison, the embedding dimension of \textit{Llama2-7b} is $4096$.

Furthermore, for applications like retrieval-augmented generation (RAG) and other similarity estimation based tasks, token embeddings are typically 
$L^2$-normalized to unit length \cite{10.5555/3495724.3496517, karpukhin-etal-2020-dense}.
Based on these, we see that whether explicitly or implicitly, the spectral and Frobenius norm will not have dependency on the embedding dimension.

We then compute the spectral and Frobenius norms of weight matrices ($W_q,W_k,W_v$) for the large language models.
The result can be seen in \cref{Fig.LLMs}. Many of the LLMs below a dimension $d$ of $10^3$ that we have checked have substantially different norms.
We observe that for larger models such as \textit{Llama2-7b} and
\textit{Mistral-7b}, which are also current state-of-the-art open-source models, the norms do not change dramatically.
To better present the result, we compute the $L^2$-norm of column vectors inside weight matrices in various models. As shown in \cref{table.norm.appen}, there is a clear trend that as the embedding dimension $d$ increases, both the mean and variance of the $L^2$-norm of column vectors in weight matrices decrease. This trend is most apparent within the same model family of GPT2. Thus one can reasonably assume that the $L^2$-norm of column vectors is upper bounded by a constant that is independent of $d$. By direct calculation, the Frobenius norm of weight matrices scales as $\cO(\sqrt{d})$, and so does the encoding factor $\alpha_w$. Therefore, by direct computation one can  conclude that the Frobenius norm is $\mathcal{O}(\sqrt{d})$ since $W_q,W_k,W_v\in \mathbb{R}^{d\times d}$.

\begin{figure}[htbp]
\centering
\includegraphics[width=\linewidth]{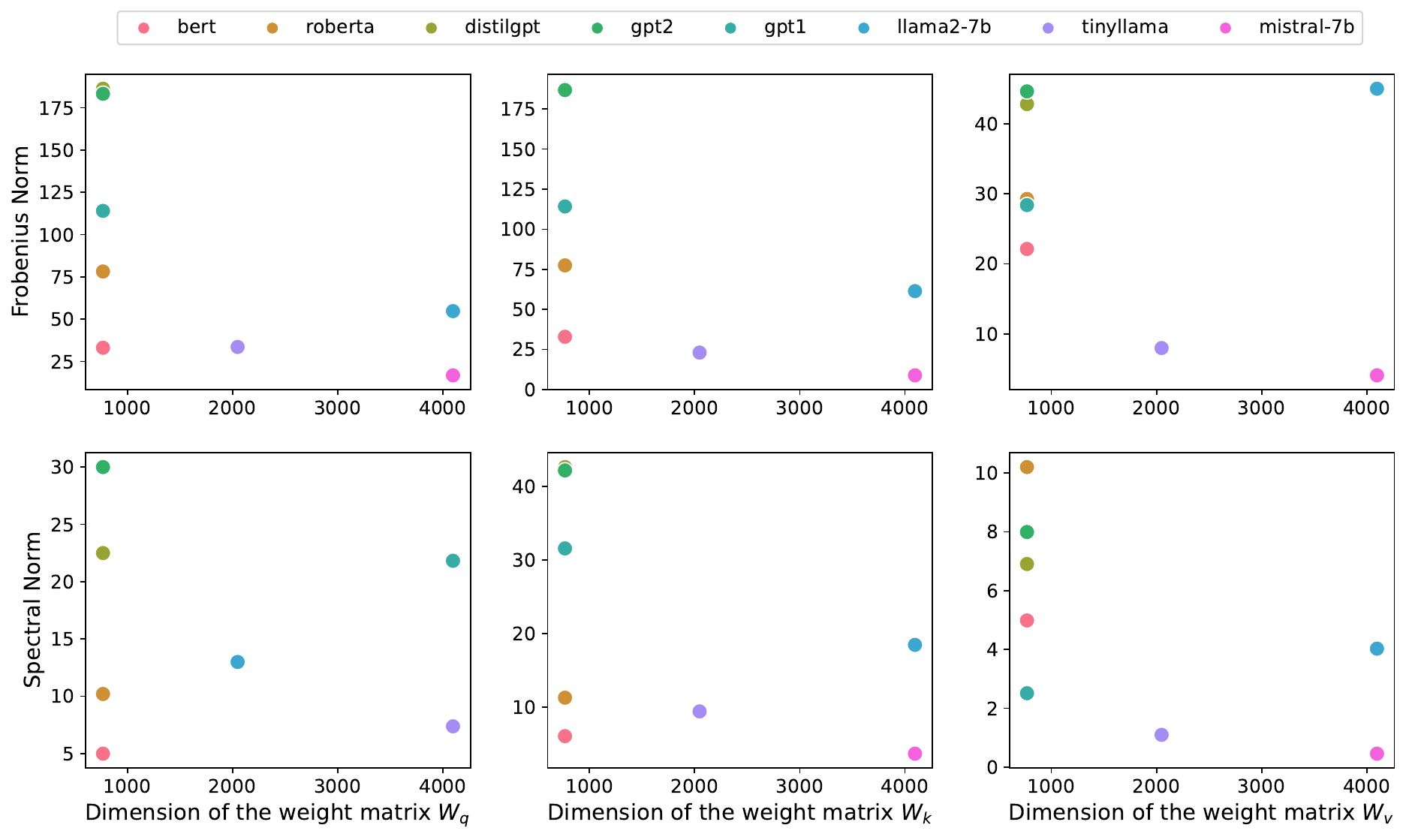}
\caption{Norms of weight matrices in popular open-source LLMs. We compute the spectral and Frobenius norms of the weight matrices $W_q, W_k$, and $W_v$ in the first layer. Note that for the multi-head self-attention, matrices have been concatenated to achieve the square matrix.}\label{Fig.LLMs}
\end{figure}

\begin{table}[htbp!]
    \centering
    \begin{tblr}{lccc}
    \toprule
    \textbf{Model} &
    \textbf{Dimension} $d$ & \textbf{Mean of $L^2$ norm}  & \textbf{Variance of $L^2$ norm}    \\
    \midrule
    GPT2 & $768$ & $3.6973$ & $1.5615$\\
    GPT2-medium & $1024$ & $3.4570$ & $0.8457$\\
    GPT2-large & $1280$ & $1.7617$ & $0.1929$\\
    GPT2-xl & $1600$ & $1.6289$ & $0.1406$\\
    \midrule
    TinyLlama & $2048$ & $0.6973$ & $0.1692$\\
    Llama2-7b  & $4096$ & $1.3486$ & $0.0901$\\
    Mistral-7b & $4096$ & $0.1576$ & $0.0047$ \\
    \bottomrule
    \end{tblr}
    \caption{The $L^2$-norm of column vectors in weight matrices from different large language models.}
    \label{table.norm.appen}
\end{table}

The ability to obtain a quantum advantage hinges on how the input is given and the particular problem. We do not provide a provable end-to-end advantage here, but rather develop the pertinent quantum subroutines and combine them into a transformer architecture. 
Given the input, our subroutines are efficient in several aspects. They use a number of working plus ancilla qubits that is logarithmic in the problem size specified by the sequence length $N$ and the embedding size $d$. The use of  amplification and its cost depends on the final task at hand. 
A regime for a possible quantum advantage is summarized in the \cref{table2}. According to our numerical observations on the spectral norm and Frobenius norm of matrices $S$, $W_q$, $W_k$, and $W_v$, the regime for the normalization factors in the table is reasonable and can be broader in possible real-world scenarios.
Based on these assumption, we obtain a number of queries to the input 
of $\widetilde{\mathcal{O}}(d^{\frac{3}{2}} \sqrt N)$. The classical run time is $\mathcal{O}(Nd +d^2)$. 
We note that the efficiency of the subroutines allows for the potential for larger speedups in other regimes.
\begin{table}[htbp!]
    \centering
    \begin{tblr}{lcc}
    \toprule
    \textbf{Quantity} &
    \textbf{Symbol}    &  \textbf{Regime} \\
    \midrule
    Softmax normalization factor & $Z_j$ & $\Omega(N)$\\
    Sequence matrix normalization &
    $\alpha_s$ & $\cO(\sqrt N)$   \\
    Attention weight matrix normalization&
$\alpha_w$ & $\cO(\sqrt d)$  \\ 
    Layer normalization factors &
    $\varsigma,\varsigma'$ & $\Omega(1)$
    \\
    Self-attention weight matrix normalization
    &
    $\alpha_w$ & $\cO(1)$\\
    FNN matrix normalization
    &
    $\alpha_m$ & $\cO(1)$\\
    Final output error & $\epsilon$ & $\Omega (1/N)$\\
    \bottomrule
    \end{tblr}
    \caption{A possible regime for the transformer where a quantum advantage could be exhibited, based on our result in \cref{thmTransformer}.}
    \label{table2}
\end{table}
With the QRAM assumption, the input block encodings can be implemented in a polylog time of $N$. In the next section, we provide detailed discussions of possible quantum advantages without QRAM.
In these cases, we obtain a quadratic speedup compared to the runtime of classical transformers.

\subsection{Training quantum-friendly transformer}

In the previous section, we have estimated the important properties for quantum algorithms directly from the classical transformer architectures.
However, a possibility remains that one may train a classical transformer that is ``friendly" to the quantum setting and has similar performance with standard architecture.
Here, for ``quantum friendly", we mean transformer architectures whose weight matrices are normalized based on spectral and Frobenius norm.

To verify this consideration, we have tested the performance for the genomic task.
We do the test on the dataset called the GenomicBenchmarks \cite{gresova2023genomic}.
We consider the promoter detection task, which can be framed as a binary classification problem to determine whether a given DNA sequence region functions as a promoter, i.e., the site where RNA polymerase and other factors bind to initiate transcription—or not.
This dataset includes $36131$ sequences, and we have used $27097$ sequences for training and $9034$ sequences for validation and testing.

We trained the standard, spectral-normalized, and Frobenius-normalized transformer model, which are all single-layer and have $10$M parameters.
All experiments run on a single NVIDIA A100 SXM4 GPU paired with an AMD EPYC 7713 processor.
We also trained a multi-layer Frobenius normalized transformer model containing $110$M parameters.
For the tokenization, we use the same as Ref.~\cite{zhou2024dnabert}.The embedding dimension is $768$, and the total vocabulary size is $4096$, including combinations of DNA letters $A,C,G,T$ and other special tokens.
We implement a linear mapping on the output of transformer to achieve the classification.
The results can be seen in \cref{table.nontata.appen}.
The performance of other models are mentioned in Ref.~\cite{gresova2023genomic, nguyen2023hyenadna}, and we list here to make comparisons.
We find that the performance of multilayer normalized transformer architecture is comparable to other advanced multilayer models.
It is therefore reasonable to normalize the weight matrices so that the encoding factor is $\alpha_w = \cO(1)$, which could make the quantum transformer run faster without much loss of performance.

\begin{table}[htbp!]
    \centering
    \begin{tblr}{lc}
    \toprule
    \textbf{Model} &
    \textbf{Nontata Accuracy}     \\
    \midrule
      Single-layer transformer & 89.1  \\
    Single-layer SN transformer & 88.4   \\ 
    Single-layer FN  transformer& 87.7   \\
    \midrule
    CNN & 85.1 \cite{nguyen2023hyenadna}\\
    HyenaDNA & 96.6 \cite{nguyen2023hyenadna} \\
    DNABERT & 92.6 \cite{gresova2023genomic}\\
    Multilayer FN transformer & 92.1 \\
    \bottomrule
    \end{tblr}
    \caption{Benchmarks of different large machine learning models on the Genomic Benchmarks (GB) dataset. ``SN'' and ``FN'' stand for spectral-normalized and Frobenius-normalized respectively.
    The multilayer FN transformer has the same size of parameters with DNABERT.
    }
    \label{table.nontata.appen}
\end{table}

\subsection{Quantum advantage without QRAM assumption\label{advantage.noqram}}

In this section, we thoroughly discuss in which cases we can achieve the quantum advantage without QRAM.
Note that $N$ often plays a dominant role in applications of the transformer. The input sequence length can keep increasing, while the dimension of token $d$ is fixed once the model has been trained.
We first consider how to implement the block encoding of $W_q, W_k, W_v$ matrices.
We numerically verified in the state-of-the-art open source models to see whether these matrices are (approximate) sparse, i.e., elements are smaller than a certain threshold value like $0.01$.
From the results, we see that these parameterized matrices are in general dense matrices.
Therefore, we consider the direct \cref{be.dense} for implementing these block encodings.
Note that the upper bound of each element of these matrices is a constant.
First, we need to store the matrix elements in the quantum registers.
As there are $\mathcal{O}(d^2)$ elements, it takes time $\mathcal{O}(d^2)$ to achieve this.
Then, we use the bucket-brigade method \cite{giovannetti2009quantum,  PhysRevA.78.052310, Olivia2020QRAM} to construct the quantum circuit for the oracle required by \cref{be.dense}. 
The quantum circuit construction requires $\mathcal{O}(d^2)$ ancilla qubits and $\mathcal{O}(\log d)$ circuit depth, i.e., after storing elements into the quantum registers, it takes $\mathcal{O}(\log d)$ time to implement the oracle.
By using \cref{be.dense}, we have $\alpha_w=\mathcal{O}(d)$ in this case. 

Next, we discuss how to implement the block encoding of the input sequence matrix $S$, which contains the $N$ dependency and is hence the more dominant part.
Similar to the parameter matrices $W_q, W_k, W_v$, we notice that open-source Large Language Models use dense encoding for the tokens, i.e., the token vector is not sparse in general. 
However, there is an alternative method called the sparse embedding \cite{formal2021spladesparselexicalexpansion, formal2021spladev2sparselexical}, which maps tokens into $k$-sparse vectors.
This method is now widely used in the Retrieval-Augmented Generation (RAG) and vector database \cite{10.5555/3495724.3496517}, which are closely related to the LLMs. 
Also, since the model size of state-of-the-art LLMs like GPT-4 \cite{openai2023gpt4} are much larger than the open-source LLMs, their dense embedding may behave in an approximate sparse way.
Therefore, we believe it is reasonable and practical to consider the case when the embedding is sparse.
Under this condition, the input sequence matrix $S$ is row-sparse, but note that may not be column-sparse.

We discuss scenarios where the column-density does not pose a problem. In particular,  when the cost of preparing a quantum state of a dense column of $N$ amplitudes is at most $O({\rm poly} \log N)$. Efficient state preparation can be attained in several special cases. Some examples are when subnormalizations are efficiently computable \cite{grover2002creatingsuperpositionscorrespondefficiently} or when the amplitudes are efficiently computable and only a small number of amplification steps are needed \cite{rattew2022preparingarbitrarycontinuousfunctions}. In the second case, the filling ratio determines the number of steps until successful state preparation. Preparation of Gaussian distributions has been discussed extensively \cite{kitaev2009wavefunctionpreparationresamplingusing, Rattew2021efficient, iaconis2024quantum}. If the sequence is generated from a linear system or an ordinary, partial, or stochastic differential equation (e.g., driven by Gaussians), there are scenarios when an efficient state preparation is possible as well \cite{harrow2009quantum, doi:10.1073/pnas.2026805118, Childs2021highprecision, An2021quantumaccelerated}.
These efficient state preparation results imply that the columns of the sequence matrix could be efficiently constructed in special cases without the use of QRAM. 

More specifically, in these cases, \cref{be.rowsparse} allows to construct the block encoding of $S$ in $\mathcal{O}(\mathrm{polylog}(N))$ time and $\alpha_s=\mathcal{O}(\sqrt{N}k)=\mathcal{O}(\sqrt{N})$.
Classical computation can not utilize these properties to improve the dependency on $N$ as in general the multiplication between a row-sparse matrix and a dense matrix is not sparse, and even if there is a computable function to generate the sequence, for the inner product and softmax function, and multiplication with $V$ there is still a linear dependency on $N$ for single-layer Transformer.

Based on the above discussion, under certain conditions that we believe are still practical for certain applications, we can obtain the quantum advantage without QRAM assumption.

\subsection{Classical randomized algorithm}

Here, we provide a discussion about the classical randomized algorithm.
Similar to the QRAM input assumption in quantum algorithms, the classical randomized algorithms assume the sample and query (SQ) access.
In the dequantization literature, people find that for some quantum machine learning algorithms like quantum recommendation system \cite{kerenidisQuantumRecommendationSystems2016} and quantum principal component analysis \cite{lloyd2014quantum}, they can not achieve exponential quantum advantage when compared with classical randomized algorithms \cite{tang2019quantuminspire, Tang_2021} in the low rank regime.
The regime has been further generalized to the extreme sparse case \cite{Gharibian_2023}, when the sparsity is constant.
However, there remains a large polynomial gap between the quantum and classical algorithms.

In the following, we analyze the classical randomized algorithm for the self-attention block based on dequantization techniques.
We follow the useful subroutines in the dequantized algorithm introduced in~\cite{tang2019quantuminspire, gilyen2018quantuminspiredlowrankstochasticregression,chia2022sampling, Gilyen2022improvedquantum}. The query access to a matrix or vector is denoted as $Q(\cdot)$. The sample and query access is denoted as $SQ(\cdot)$. Then we have the following lemma on the matrix-vector multiplication via sample and query access.

\begin{lemma}[Matrix-vector multiplication via SQ access]\label{deq.matrixvector}
Let $A \in \mathbb{C}^{M \times N}$ and $x \in \mathbb{C}^N$.
Given $SQ(A)$ and $Q(x)$, one can output a sample from the vector $Ax$ with at least $1-\delta$ probability with $\cO(N^2 C(A,x)\log\frac{1}{\delta})$ query and time complexity, where
\begin{align}
    C(A,x) := \frac{\sum_{j=1}^{N} \| x_j A(\cdot,j) \|^2}{\| Ax \|^2}.
\end{align}
Further, we can compute $(Ax)_i$ with $\mathcal{O}(N)$ queries.
\end{lemma}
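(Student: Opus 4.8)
The plan is to produce the sample by rejection sampling, using the column decomposition $Ax=\sum_{j=1}^{N}x_j A(\cdot,j)$ as the proposal mechanism and a Cauchy--Schwarz inequality to justify a valid acceptance rule. Recall that the input access provides exactly what this needs: from $Q(x)$ we can read any entry $x_j$, while $SQ(A)$ lets us read any entry $A_{ij}$, read each column norm $\|A(\cdot,j)\|$, and sample a row index $i$ inside a fixed column $j$ with probability $|A_{ij}|^2/\|A(\cdot,j)\|^2$. Writing $Z:=\sum_{j=1}^{N}\|x_j A(\cdot,j)\|^2=\sum_{j=1}^{N}|x_j|^2\|A(\cdot,j)\|^2$ for the numerator of $C(A,x)$, I would first spend $\mathcal{O}(N)$ queries to compute $Z$ and the explicit distribution $q_j:=|x_j|^2\|A(\cdot,j)\|^2/Z$ on $[N]$; with this precomputed, each subsequent draw of $j$ from $q$ is cheap.

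Each round proposes a candidate as follows: draw $j\sim q$, then draw a row $i$ inside column $j$. The induced marginal over $i$ is
\begin{align}
P(i)=\sum_{j=1}^{N} q_j\,\frac{|A_{ij}|^2}{\|A(\cdot,j)\|^2}=\frac{1}{Z}\sum_{j=1}^{N}|x_j A_{ij}|^2=:\frac{w_i}{Z},
\end{align}
which is not yet the target distribution $\propto|(Ax)_i|^2$. I would therefore accept the candidate $i$ with probability
\begin{align}
r_i:=\frac{|(Ax)_i|^2}{N\,w_i}.
\end{align}
Cauchy--Schwarz gives $|(Ax)_i|^2=\big|\sum_j x_j A_{ij}\big|^2\le N\sum_j|x_j A_{ij}|^2=N w_i$, so $r_i\le 1$ and the rule is well defined. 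Evaluating $r_i$ needs $(Ax)_i=\sum_{j=1}^{N}A_{ij}x_j$ and $w_i$, each computable with $\mathcal{O}(N)$ queries to $A$ and $x$; this direct evaluation of $(Ax)_i$ is exactly the second ``further'' claim of the lemma.

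It remains to verify correctness and the running time. Conditioned on acceptance, the output index has probability proportional to $P(i)\,r_i\propto|(Ax)_i|^2$, so any accepted sample is a correct sample from $Ax$. The per-round acceptance probability is $\sum_i P(i)r_i=\tfrac{1}{NZ}\sum_i|(Ax)_i|^2=\tfrac{\|Ax\|^2}{NZ}=\tfrac{1}{N\,C(A,x)}$, so the number of rounds until the first acceptance is geometric with mean $N\,C(A,x)$. Running $\mathcal{O}(N\,C(A,x)\log(1/\delta))$ rounds makes the failure probability at most $\delta$ by the geometric tail bound, and since each round costs $\mathcal{O}(N)$ queries and time (dominated by forming $(Ax)_i$ and $w_i$), the total is $\mathcal{O}(N^2\,C(A,x)\log(1/\delta))$, matching the statement.

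The main obstacle is choosing the acceptance rule so that $r_i\le1$ while keeping the acceptance probability as large as possible. The Cauchy--Schwarz bound $|(Ax)_i|^2\le N w_i$ is what forces the extra factor $N$ in the envelope, giving $N\,C(A,x)$ expected rounds rather than $C(A,x)$; one should confirm that this factor is genuinely the source of the $N^2$ scaling (one $N$ from the rounds, one $N$ from the per-round cost) and that no hidden dependence on $\max_j|x_j|$ or on the column sampling creeps in to spoil the clean $\mathcal{O}(N^2 C(A,x)\log(1/\delta))$ bound.
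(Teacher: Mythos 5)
Your proof is correct and takes essentially the same route as the source: the paper states this lemma without proof, importing it from the dequantization literature it cites, where the standard argument is exactly your rejection-sampling scheme — propose $j$ with probability $|x_j|^2\|A(\cdot,j)\|^2/Z$, propose $i$ within column $j$, and accept with probability $|(Ax)_i|^2/(N w_i)$, well-defined by Cauchy--Schwarz. Your accounting is also right: one factor of $N$ comes from the $1/(N\,C(A,x))$ acceptance rate and the other from the $\mathcal{O}(N)$ per-round cost of evaluating $(Ax)_i$ and $w_i$, yielding the stated $\mathcal{O}(N^2 C(A,x)\log(1/\delta))$ bound, with the second claim following from direct summation.
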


When $N$ is large, one may consider the following importance sampling based method.

\begin{lemma}[Approximate matrix-vector multiplication via SQ access \cite{Tang2023thesis}\label{deq.matrixvector.approx}]
Let $A \in \mathbb{R}^{M \times N}$ be a matrix and $x \in \mathbb{R}^N$ be a vector.
Given $SQ(A)$ and $Q(x)$, with probability at least $1-\delta$, we can output a vector that is 
$\epsilon$-close to $Ax$ with 
\begin{align}
    \tau = \Theta \left(\frac{\|A\|_F^2\|x\|^2}{\epsilon^2} \right)
\end{align}
query complexity.
\end{lemma}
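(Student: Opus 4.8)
The plan is to prove this via unbiased importance sampling over the columns of $A$, followed by averaging to suppress the variance, which is the classical primitive underlying the exact version \cref{deq.matrixvector}. First I would write the product as a column combination, $Ax = \sum_{j=1}^N x_j A(\cdot,j)$, and observe that $SQ(A)$ lets me sample a column index $j$ with probability $p_j = \|A(\cdot,j)\|^2/\|A\|_F^2$, while $Q(x)$ lets me read the scalar $x_j$ and query individual entries of each sampled column. This motivates the single-shot estimator $Y = \frac{x_j}{p_j}\,A(\cdot,j)$, which is unbiased since $\E[Y] = \sum_j p_j\,\frac{x_j}{p_j}\,A(\cdot,j) = Ax$.

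The second step is the variance bound, and this is where the Frobenius-norm scaling enters. A direct computation gives the second moment
\begin{align}
\E\|Y\|^2 = \sum_{j=1}^N p_j\,\frac{x_j^2}{p_j^2}\,\|A(\cdot,j)\|^2 = \|A\|_F^2\sum_{j=1}^N x_j^2 = \|A\|_F^2\|x\|^2,
\end{align}
where the cancellation of $\|A(\cdot,j)\|^2$ against $p_j$ is precisely what makes the squared-column-norm distribution the optimal choice here. Averaging $\tau$ i.i.d. copies $Y_1,\dots,Y_\tau$ into $\bar Y = \frac{1}{\tau}\sum_i Y_i$ yields $\E\|\bar Y - Ax\|^2 = \frac{1}{\tau}\bigl(\E\|Y\|^2 - \|Ax\|^2\bigr) \le \|A\|_F^2\|x\|^2/\tau$. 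Choosing $\tau = \Theta(\|A\|_F^2\|x\|^2/\epsilon^2)$ forces the expected squared $\ell_2$ error to be at most $\epsilon^2$, and Markov's inequality then gives $\|\bar Y - Ax\| \le \epsilon$ (up to an adjustable constant in $\tau$) with constant probability; each sample costs one column-sampling operation on $A$ and one entry query to $x$, which accounts for the $\tau$ query complexity.

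The main obstacle is upgrading the constant success probability to $1-\delta$ while retaining control in the $\ell_2$ norm. A naive coordinate-wise median of independent runs does not directly bound the $\ell_2$ distance, so instead I would aggregate $O(\log(1/\delta))$ independent averages $\bar Y$ through a geometric-median (median-of-means) argument for Hilbert-space-valued estimators, which boosts the failure probability at the price of a $\log(1/\delta)$ factor that is suppressed inside the stated $\Theta$ or absorbed by treating $\delta$ as constant. I would conclude by verifying that every operation used — sampling column indices by squared norm, querying entries of $x$, and assembling the rescaled columns — is supported by $SQ(A)$ and $Q(x)$, so the whole procedure remains inside the assumed access model.
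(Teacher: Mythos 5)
Your proposal is correct and follows essentially the same route as the standard proof behind \cref{deq.matrixvector.approx}, which the paper imports from Tang's thesis without reproving: importance sampling of columns with $p_j=\|A(\cdot,j)\|^2/\|A\|_F^2$, the second-moment cancellation giving $\E\norm{Y}^2=\|A\|_F^2\|x\|^2$, averaging plus Markov for constant success probability, and a geometric-median boost to $1-\delta$. The two caveats you flag — that boosting introduces a $\log(1/\delta)$ factor not visible in the stated $\Theta(\cdot)$, and that the ``output vector'' is really a succinct $\tau$-term combination of sampled columns rather than an explicitly written $M$-dimensional vector — are loosenesses already present in the lemma as stated, so your treatment is faithful.
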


Now we discuss how to construct the classical randomized algorithm for the self-attention. For simplicity, we assume the sample and query access to $ \mathbb{R}^{N\times d}$ matrices $Q=SW_q,K=SW_k,V=SW_v$, the Frobenius norm of $S$ being $\cO(\sqrt{N})$, and the Frobenius norms of weight matrices $Q,K,V$ being $\cO(\sqrt{d})$. For the self-attention block, we focus on the $j$-th token, the same with the quantum setting, then matrix multiplication $QK^T$ can be simplified as the matrix vector multiplication.
By \cref{deq.matrixvector}, one can sample from $(QK^T)_{j\star}$ with $\mathcal{O}(d^2 C(K^T,Q_{j\star})\log\frac{1}{\delta})$ queries, and compute $(QK^T)_{j\star}$ with $\mathcal{O}(d)$ queries.
Now suppose one could efficiently implement the row-wise softmax function $\mathrm{softmax}(QK^\top/\alpha_0)_j$ based on sample and query access to $(QK^T)_{j\star}$, considering the matrix-vector multiplication between $V\in \mathbb{R}^{N \times d}$ and $\mathrm{softmax}(QK^\top/\alpha_0)_j \in \mathbb{R}^d$, the query complexity is $\mathcal{O}(N^2)$ by \cref{deq.matrixvector}. Even if one uses \cref{deq.matrixvector.approx}, the query complexity depends on the Frobenius norm of $V$. Since we construct $V$ from $S$ and $W_v$, the query complexity can be written as $\Theta(\|S\|_F^2\|W_v\|_F^2)$. Follow the assumptions of $\|S\|_F = \cO(\sqrt{N})$, the query complexity of the classical randomized algorithm is $\widetilde{\cO}(N)$. From this one can still see an at least quadratic separation on the matrix norm  between the quantum algorithm that we proposed and the classical randomized algorithm (thus at least quadratic quantum speedup).

The softmax function may also be a challenge for the classical randomized algorithm.
To implement function onto amplitudes/vectors, the classical randomized algorithms basically use the rejection sampling method.
Note that the rejection sampling method requires the knowledge of each probability.
However, for the softmax function, one needs to estimate the partition function to get the probability.
We know that in the worst case the partition function estimation is $\#P$-hard.
Though one can use Metropolis-Hastings method, which allows us to sample without knowing the partition function, in general there is no theoretical guarantee about how many iterations are needed.
The quantum algorithm we provide in this work does not need to estimate the partition function and has no such problem.
Therefore, this may enable our quantum algorithm to be not dequantized even if in the ideal reagime, i.e, when the matrix norm is $\mathcal{O}(\mathrm{polylog}(N))$.
However, to demonstrate whether this can really enable our quantum algorithm to be not dequantized in the ideal regime requires further study and remains as an open problem.

\section{Technical tools}

\subsection{Construction of block encoding unitaries\label{sec.block-encoding}}

In this section, we summarize some methods to construct a block-encoding unitary.
The first method is applicable to sparse matrices.
As mentioned in \cite{tay2022efficient}, there are many works considering the sparsification of attention matrices. Quantum may also benefit from these results.

\begin{lemma}[Block-encoding of sparse-access matrices \cite{gilyen2019quantum}]\label{be.sparse}
Let $A \in \mathbb{C}^{N \times N}$ ($N=2^n$)  be a matrix that is \(s_{r}\)-row-
sparse and \(s_{c}\)-column-sparse, and each element of \(A\) has absolute value at most $1$.
Suppose that we have access to the following sparse-access oracles acting on two $(n+1)$ qubit registers
\begin{align*}
    &\mathrm{O}_{r}:|i\rangle|k\rangle \rightarrow|i\rangle\left|r_{i k}\right\rangle \quad \forall i \in\left[2^{w}\right]-1, k \in\left[s_{r}\right] \text {, and }\\
    &\mathrm{O}_{c}:|\ell\rangle|j\rangle \rightarrow\left|c_{\ell j}\right\rangle|j\rangle \quad \forall \ell \in\left[s_{c}\right], j \in\left[2^{n}\right]-1 \text {, where}
\end{align*}
$r_{i j}$ is the index for the $j$-th non-zero entry of the $i$-th row of $A$, or if there are less than $i$ non-zero
entries, then it is $j+2^{n}$, and similarly $c_{i j}$ is the index for the \(i\)-th non-zero entry of the \(j\)-th column
of \(A\), or if there are less than \(j\) non-zero entries, then it is \(i+2^{n}\). Additionally assume that we have
access to an oracle \(\mathrm{O}_{A}\) that returns the entries of \(A\) in a binary description
$$
\mathrm{O}_{A}:|i\rangle|j\rangle|0\rangle^{\otimes b} \rightarrow|i\rangle|j\rangle\left|a_{i j}\right\rangle \quad \forall i, j \in\left[2^{w}\right]-1 \text {, where }
$$
\(a_{i j}\) is a b-bit binary description of the $A_{ij}$.
Then we can implement a
\(\left(\sqrt{s_{r} s_{c}}, n+3, \varepsilon\right)-\) block-encoding of \(A\) with a single use of \(\mathrm{O}_{r}, \mathrm{O}_{c}\), two uses of \(\mathrm{O}_{A}\) and additionally using \(\mathcal{O}\left(n+\log ^{2.5}\left(\frac{s_{r} s_{c}}{\varepsilon}\right)\right)\) one and two qubit gates while using \(\mathcal{O}\left(b, \log ^{2.5}\left(\frac{s_{r} s_{c}}{\varepsilon}\right)\right)\) ancilla qubits.    
\end{lemma}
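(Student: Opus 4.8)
The plan is to realize $U_A$ as a product of two state-preparation unitaries acting on separate index registers, glued together by an amplitude-loading step that writes the entry $a_{ij}$ into an ancilla amplitude, so that the top-left block becomes $A/\sqrt{s_r s_c}$. This is essentially the standard sparse-access construction underlying \cite{gilyen2019quantum}. Concretely, I would introduce a ``column register'', a ``row register'', and a single-qubit flag ancilla, and design $U_A$ so that
\begin{align}
(\bra{0}\otimes\bra{i})\,U_A\,(\ket{0}\otimes\ket{j}) = \frac{a_{ij}}{\sqrt{s_r s_c}},
\end{align}
with all remaining amplitude pushed into states flagged by $\ket{1}$ on the ancilla (the orthogonal ``$\cdot$'' blocks). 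The sub-normalization $\sqrt{s_r s_c}$ will arise as the product of the two $1/\sqrt{s}$ factors coming from uniform superpositions over the row- and column-sparsity indices.

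First I would prepare, on input $\ket{j}$, a uniform superposition $\tfrac{1}{\sqrt{s_c}}\sum_{k\in[s_c]}\ket{k}$ over the column-sparsity index using $\log s_c$ Hadamard gates, then apply $\mathrm{O}_c$ once to replace $k$ by the true row index $c_{kj}$ of the $k$-th nonzero entry in column $j$; this yields a superposition supported exactly on the nonzero rows of column $j$. Symmetrically, a single use of $\mathrm{O}_r$ produces the analogous superposition over the nonzero columns of a given row, applied in adjoint form on the other side. Sandwiched between these, I would call $\mathrm{O}_A$ to fetch the $b$-bit description of $a_{ij}$ into a scratch register, apply a controlled rotation realizing $\ket{0}\mapsto a_{ij}\ket{0}+\sqrt{1-\abs{a_{ij}}^2}\,\ket{1}$ on the flag ancilla (legitimate since $\abs{a_{ij}}\le 1$), and then call $\mathrm{O}_A$ a second time to uncompute the scratch register, which accounts for the stated two uses of $\mathrm{O}_A$. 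After matching the two index registers with a swap so that the overlap selects a common entry, the amplitude on $(\bra{0}\otimes\bra{i})U_A(\ket{0}\otimes\ket{j})$ collects a factor $\tfrac{1}{\sqrt{s_c}}$ (entry $i$ appears once as some $c_{kj}$), a factor $\tfrac{1}{\sqrt{s_r}}$ from the row side, and the loaded factor $a_{ij}$, giving exactly $a_{ij}/\sqrt{s_r s_c}$.

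The verification then reduces to checking that non-matching index pairs and the $\sqrt{1-\abs{a_{ij}}^2}$ ``garbage'' all land in the ancilla-$\ket{1}$ sector, so that the $(0,0)$ block is precisely $A/\sqrt{s_r s_c}$; crucially, the out-of-range sentinel values $j+2^n$ returned when a row or column has fewer than $s_r$ or $s_c$ nonzero entries must be routed to the orthogonal sector as well, so they do not spuriously contribute. The error $\varepsilon$ enters only through the amplitude-loading rotation: computing the rotation angle from the $b$-bit string for $a_{ij}$ to the precision needed so that the encoded block deviates by at most $\varepsilon$ in spectral norm requires reversible arithmetic, and it is this step that contributes the $\mathcal{O}(n+\log^{2.5}(s_r s_c/\varepsilon))$ gates and the $\mathcal{O}(b,\log^{2.5}(s_r s_c/\varepsilon))$ ancillas.

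I expect the main obstacle to be the error analysis together with the clean separation of the target block from its orthogonal complement: one must show that the finite-precision rotation perturbs the encoded matrix by at most $\varepsilon$ in spectral norm rather than merely entrywise, and simultaneously verify that the swap-and-overlap construction genuinely produces $a_{ij}$, with its correct complex phase, instead of $\abs{a_{ij}}$ or $\abs{a_{ij}}^2$. Getting the phase right, handling the asymmetry between $s_r$ and $s_c$, and confirming that the sentinel indices are annihilated in the relevant block are the places I would be most careful; the remaining gate- and ancilla-counting is routine bookkeeping.
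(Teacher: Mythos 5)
The paper contains no proof of this statement: Lemma~\ref{be.sparse} is imported verbatim from Ref.~\cite{gilyen2019quantum} (Lemma~48 there) as a black-box tool in the appendix on constructing block encodings, so the only available comparison is with the proof in that reference. Your sketch is a faithful and essentially correct reconstruction of it. Both arguments realize $U_A$ as $V_L^{\dagger}\,(\mathrm{SWAP})\,V_R$, where $V_R$ applies a diffusion over $[s_c]$ followed by one call to $\mathrm{O}_c$ to spread $\ket{j}$ over the nonzero rows of column $j$, $V_L$ does the symmetric thing with $\mathrm{O}_r$, the $(n+1)$-qubit index registers together with the sentinel values $j+2^n$ guarantee that out-of-range and mismatched index pairs land orthogonal to the selected block, and the two $\mathrm{O}_A$ queries bracket the amplitude-loading step. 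The one structural difference is where the entry is loaded: Gily\'en et al.\ split it symmetrically, each state-preparation unitary rotating by a square root of the entry (with the complex phase divided between the two sides) so that the overlap reproduces $a_{ij}$, whereas you load the full complex amplitude $a_{ij}\ket{0}+\sqrt{1-|a_{ij}|^2}\ket{1}$ in a single compute--rotate--uncompute block between the two $\mathrm{O}_A$ calls. Your variant is equally valid, gives the same query count, and arguably sidesteps the branch-cut bookkeeping of complex square roots, provided (as you yourself flag) the controlled rotation implements the phase of $a_{ij}$ and not merely its modulus.

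Two small corrections to your accounting. First, the $\mathcal{O}(n)$ term in the gate count comes from the $n$-qubit SWAP and the diffusion steps, not from the rotation arithmetic (and note that a uniform superposition over $[s_c]$ needs a slightly more general preparation than plain Hadamards when $s_c$ is not a power of two). Second, your stated worry about entrywise-versus-spectral error is settled by the standard bound $\|E\|\le \sqrt{s_r s_c}\,\max_{i,j}|E_{ij}|$ for any error matrix $E$ supported on the sparsity pattern of $A$: after dividing by the normalization $\sqrt{s_r s_c}$, this is precisely why entrywise precision $\varepsilon/\sqrt{s_r s_c}$ from the $\log^{2.5}$-cost reversible arcsine arithmetic suffices for a spectral-norm guarantee of $\varepsilon$, and it relies on exactly the sentinel- and garbage-routing you identified as the delicate point.
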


Based on this lemma, one can see the following statements.

\begin{lemma}[Naive block encoding of dense matrices \cite{Nguyen2022blockencodingdense}]\label{be.dense}
    Let $A \in \mathbb{C}^{N \times N}$ ($N=2^n$) and let $\hat{a}=\max_{ij}|a_{ij}|$. Suppose we are given the oracle acting on two $(n+1)$ qubit registers
    \begin{align}
        \mathcal{O_A}:\ket{i}\ket{j}\ket{0}\rightarrow \ket{i}\ket{j}\ket{\tilde{a}_{ij}},
    \end{align}
    where $\tilde{a}_{ij}=a_{ij}/\hat{a}$. One can implement a $(N\hat{a}, n+1, \epsilon)$-encoding of $A$ with two uses of $\mathcal{O}_A$ with $\mathcal{O}(\mathrm{polylog}(\frac{N \hat{a}}{\epsilon}))$ one- and two-qubit gates, and ancilla qubits.
\end{lemma}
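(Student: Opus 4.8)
The plan is to give an explicit circuit for $U_A$ meeting \cref{def.blockencoding} with encoding factor $N\hat a$, and to recognize it as the dense specialization of the sparse-access construction in \cref{be.sparse}. Since every entry of $A/\hat a$ has modulus at most $1$, it suffices to block-encode $A/\hat a$ with factor $N$ and rescale: an $(N,n+1,\epsilon')$-encoding of $A/\hat a$ is an $(N\hat a,n+1,\hat a\epsilon')$-encoding of $A$. The key observation is that a fully dense matrix is trivially $N$-row-sparse and $N$-column-sparse, and its sparse-access oracles $\mathrm{O}_r,\mathrm{O}_c$ degenerate to the identity (the $k$-th nonzero entry of every row sits in column $k$), so they require no queries. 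This collapses the $n+3$ ancillas of \cref{be.sparse} down to $n+1$ and leaves $\mathcal{O}_A$ as the only nontrivial oracle.

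Concretely, first I would introduce an $n$-qubit index register $R$ and a one-qubit flag $f$, starting from $\ket{0}_R\ket{j}_{\mathrm{sys}}\ket{0}_f$ together with a $b$-qubit data register in $\ket{0}$. Applying $H^{\otimes n}$ to $R$ produces $\tfrac{1}{\sqrt N}\sum_i \ket{i}_R\ket{j}_{\mathrm{sys}}$; a single call to $\mathcal{O}_A$ loads $\ket{\tilde a_{ij}}$ into the data register; a rotation on $f$ controlled by the data register imprints the amplitude, $\ket{\tilde a_{ij}}\ket{0}_f \mapsto \ket{\tilde a_{ij}}\bigl(\tilde a_{ij}\ket{0}_f+\sqrt{1-\abs{\tilde a_{ij}}^2}\ket{1}_f\bigr)$; and $\mathcal{O}_A^\dagger$ uncomputes the data register, giving the second use of $\mathcal{O}_A$. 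Swapping $R$ with the system register and applying $H^{\otimes n}$ to $R$ once more, the amplitude on $\ket{0}_R\ket{i}_{\mathrm{sys}}\ket{0}_f$ equals $\tfrac{1}{\sqrt N}\cdot\tfrac{1}{\sqrt N}\,\tilde a_{ij}=A_{ij}/(N\hat a)$, which is exactly the $(i,j)$ block entry demanded by \cref{def.blockencoding}. The registers $R$ and $f$ are the $n+1$ block-encoding ancillas, while the data register is restored to $\ket{0}$.

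The cost and error accounting is routine. The only two error sources are the $b$-bit truncation of $\tilde a_{ij}$ returned by $\mathcal{O}_A$ and the finite-precision synthesis of the controlled rotation; letting $\delta$ denote the resulting error in the loaded value $\tilde a_{ij}$, each amplitude is perturbed by at most $\delta/N$, hence each entry of $N\hat a\,(\bra{0^{n+1}}\otimes I)U_A(\ket{0^{n+1}}\otimes I)$ differs from $A_{ij}$ by at most $\hat a\delta$. Bounding the spectral norm of the error matrix by its Frobenius norm gives $\norm{A-N\hat a(\bra{0}\otimes I)U_A(\ket{0}\otimes I)}\le N\hat a\delta$, so choosing $\delta=\Ord{\epsilon/(N\hat a)}$, i.e. $b=\Ord{\log(N\hat a/\epsilon)}$ bits, achieves error $\epsilon$. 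The Hadamards and the swap cost $\Ord{n}$ gates, while the controlled rotation acting on $b=\Ord{\polylog(N\hat a/\epsilon)}$ bits dominates with $\Ord{\polylog(N\hat a/\epsilon)}$ gates, matching the claim.

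I expect the only genuinely delicate point to be this precision bookkeeping: because the encoding factor $N\hat a$ multiplies every amplitude, a truncation error $\delta$ is amplified to $N\hat a\,\delta$ in the reconstructed matrix, which is why the bit count must grow like $\log(N\hat a/\epsilon)$ rather than $\log(1/\epsilon)$. A secondary technical wrinkle is the complex case: for $\tilde a_{ij}\in\mathbb C$ the single rotation must be split into a magnitude rotation and a controlled phase (or the real and imaginary parts handled separately), but this only changes constants in the gate count and leaves the structure of the argument untouched.
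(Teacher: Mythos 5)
Your proposal is correct and follows essentially the route the paper indicates: the paper states this lemma as a specialization of the sparse-access construction (\cref{be.sparse}) with $s_r=s_c=N$ and trivial index oracles, deferring the explicit circuit to the cited reference, and your Hadamard--oracle--controlled-rotation--uncompute--swap circuit is exactly that standard construction with the correct amplitude $\tilde a_{ij}/N$ on the $\ket{0^{n+1}}$ block. You additionally make explicit two points the paper leaves implicit --- why the ancilla count drops from $n+3$ to $n+1$ when the index oracles degenerate, and the precision bookkeeping showing $b=\Ord{\log(N\hat a/\epsilon)}$ bits suffice --- both of which are sound.
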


\begin{lemma}[Block encoding of row sparse matrices]\label{be.rowsparse}
    Let $A\in \mathbb{C}^{2^n\times 2^n}$ be a matrix that is $s_r$-row sparse, and let $\hat{a}=\max_{ij}|a_{ij}|$.
    Suppose we are given the oracles acting on two $(n+1)$ qubit registers
    \begin{align}
        \mathcal{O}_r: \ket{i}\ket{k}&\rightarrow \ket{i}\ket{r_{ik}}\\
        \mathcal{O_A}: \ket{i}\ket{j}\ket{0}&\rightarrow \ket{i}\ket{j}\ket{\tilde{a}_{ij}},
    \end{align}
    where $r_{ik}$ is the index for the $k$-th non-zero entry of the $i$-th row of $A$, and $\tilde{a}_{ij}=a_{ij}/\hat{a}$. One can implement a $(\sqrt{Ns_r}\hat{a}, n+3, \epsilon)$-encoding of $A$ with two uses of $\mathcal{O}_A$ with $\mathcal{O}(\mathrm{polylog}(\frac{N s_r\hat{a}}{\epsilon}))$ one- and two-qubit gates, and ancilla qubits.
\end{lemma}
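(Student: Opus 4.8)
The plan is to obtain this as a direct specialization of the general sparse-access construction \cref{be.sparse}, exploiting the fact that an $s_r$-row-sparse matrix is trivially $N$-column-sparse. First I would pass to the normalized matrix $\tilde A \coloneqq A/\hat a$, whose entries all have absolute value at most $1$, as required by \cref{be.sparse}. The given value oracle $\mathcal{O}_A$ returns exactly $\tilde a_{ij} = a_{ij}/\hat a$, so it already serves as the entry oracle for $\tilde A$, and the row oracle $\mathcal{O}_r$ matches the convention in \cref{be.sparse} verbatim.

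The key step is supplying the column oracle $\mathcal{O}_c$ required by \cref{be.sparse} when we set $s_c = N$. In this column-dense regime the oracle is trivial: placing the $\ell$-th ``nonzero'' entry of each column $j$ at row $\ell$ gives $c_{\ell j} = \ell$, so $\mathcal{O}_c$ is the identity and costs nothing. Genuinely zero entries are harmlessly enumerated by this choice, since $\mathcal{O}_A$ assigns them the value $\tilde a_{ij} = 0$ and hence no rotation. Invoking \cref{be.sparse} with these $s_r$ and $s_c = N$ then yields a $(\sqrt{s_r N}, n+3, \delta)$-encoding of $\tilde A$ using one call to $\mathcal{O}_r$, the free identity for $\mathcal{O}_c$, two calls to $\mathcal{O}_A$, and $\mathcal{O}(n + \log^{2.5}(s_r N/\delta))$ further gates with a comparable number of ancillas.

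Finally I would undo the normalization and track the error. By \cref{def.blockencoding}, if $U$ is a $(\sqrt{s_r N}, n+3, \delta)$-encoding of $\tilde A$, then multiplying the defining inequality by $\hat a$ shows that the same $U$ is a $(\sqrt{s_r N}\,\hat a, n+3, \hat a\delta)$-encoding of $A = \hat a \tilde A$. Choosing $\delta = \epsilon/\hat a$ gives the claimed $(\sqrt{N s_r}\,\hat a, n+3, \epsilon)$-encoding with two uses of $\mathcal{O}_A$, and since $n = \log N$ the gate and ancilla counts collapse to $\mathcal{O}(\mathrm{polylog}(N s_r \hat a/\epsilon))$.

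I expect the only real subtlety --- rather than a genuine obstacle --- to be checking that the $s_c = N$ specialization is legitimate within the conventions of \cref{be.sparse}: that the overflow convention for $c_{\ell j}$ becomes vacuous once all rows are enumerated, and that the $\sqrt{s_c} = \sqrt N$ normalization is exactly what a uniform column diffuser over all $N$ row indices contributes. As a cross-check I would verify the factor by the self-contained route, building the encoding from a row diffuser $|i\rangle|0^n\rangle \mapsto \frac{1}{\sqrt{s_r}}\sum_k |i\rangle|r_{ik}\rangle$ via $\mathcal{O}_r$, a uniform diffuser $\frac{1}{\sqrt N}\sum_\ell |\ell\rangle$ via Hadamards, and a controlled rotation loading $\tilde a_{ij}$ from $\mathcal{O}_A$; the two normalizations multiply to $1/\sqrt{N s_r}$, reproducing the same encoding factor $\sqrt{N s_r}\,\hat a$.
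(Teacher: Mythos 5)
Your proposal is correct and is exactly the paper's argument: the paper proves \cref{be.rowsparse} in one line by taking $s_c = N$ in \cref{be.sparse}, which is precisely your specialization. The details you fill in --- the trivial identity column oracle with $c_{\ell j}=\ell$, the rescaling of the $(\sqrt{s_r N}, n+3, \delta)$-encoding of $A/\hat a$ into a $(\sqrt{N s_r}\,\hat a, n+3, \epsilon)$-encoding of $A$ via $\delta = \epsilon/\hat a$, and the self-contained diffuser cross-check --- are all consistent with what the paper leaves implicit.
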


This lemma can be directly shown by taking $s_c=N$ in \cref{be.sparse}.
The second method is for general matrices, yet we need some further assumptions which may not be easy to achieve.
\begin{lemma}[Block-encodings of matrices stored in quantum data structures \cite{kerenidisQuantumRecommendationSystems2016, gilyen2019quantum}]
Let \(A \in \mathbb{C}^{2^{n} \times 2^{n}}\).
For \(q \in[0,2]\), let us define \(\mu_{q}(A)=\sqrt{w_{q}(A) w_{(2-q)}\left(A^{T}\right)}\), where \(w_{q}(A):=\max _{i}\left\|a_{i} .\right\|_{q}^{q}\) is the \(q\)-th
power of the maximum \(q\)-norm of the rows of \(A\). Let \(A^{(q)}\) denote the matrix of the same dimensions
as \(A\), with \(A_{i j}^{(q)}=\sqrt{a_{i j}^{q}}\). 

If \(A^{(q)}\) and \(\left(A^{(2-q)}\right)^{\dagger}\) are both stored in quantum accessible data structures,
then there exist unitaries \(U_{R}\) and \(U_{L}\) that can be implemented in time \(\mathcal{O}(\operatorname{poly}(n \log (1 / \varepsilon)))\)
such that \(U_{R}^{\dagger} U_{L}\) is a \(\left(\mu_{q}(A), n+2, \varepsilon\right)\)-block-encoding of \(A\).
On the other hand, if \(A\) is stored in a quantum accessible data structure, then there exist
unitaries \(U_{R}\) and \(U_{L}\) that can be implemented in time \(\mathcal{O}(\operatorname{poly}(n \log (1 / \varepsilon)))\) such that \(U_{R}^{\dagger} U_{L}\) is
an \(\left(\|A\|_{F}, n+2, \varepsilon\right)\)-block-encoding of \(A\).
\end{lemma}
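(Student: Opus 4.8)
The plan is to realize $U_R^\dagger U_L$ as an inner product of two families of states that the quantum-accessible data structure can prepare in time polylogarithmic in the dimension, following the Kerenidis--Prakash construction underlying Ref.~\cite{kerenidisQuantumRecommendationSystems2016, gilyen2019quantum}. The only facts about the data structure that I would use are its two standard primitives: once a matrix $B$ is stored in its per-row binary trees of partial squared norms (with the entry phases at the leaves), one can (i) perform the controlled row-state preparation $V\colon \ket{0^n}_{\mathrm{anc}}\ket{i}\mapsto \ket{B_{i\star}}_{\mathrm{anc}}\ket{i}$ with $\ket{B_{i\star}}=\frac{1}{\|B_{i\star}\|}\sum_k B_{ik}\ket{k}$, and (ii) prepare the row-norm superposition $W\colon \ket{0^n}\mapsto \frac{1}{\|B\|_F}\sum_i \|B_{i\star}\|\ket{i}$; each costs $\mathcal{O}(\mathrm{poly}(n\log(1/\varepsilon)))$ gates at an amplitude precision governed by the number of stored bits. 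I would take these as given and build the block encoding on top of them.

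For the Frobenius-norm statement, store $A$ itself and work on a system register $S$ and an $n$-qubit ancilla register $B$ (plus two flag qubits). Define $U_R$ to act as the controlled row preparation $V$ into $B$, i.e. $U_R\ket{0^n}_B\ket{i}_S=\ket{A_{i\star}}_B\ket{i}_S$, and define $U_L$ to first swap the index into the ancilla and then prepare the row-norm superposition on the system, i.e. $U_L\ket{0^n}_B\ket{j}_S=\ket{j}_B\otimes\bigl(\tfrac{1}{\|A\|_F}\sum_i\|A_{i\star}\|\ket{i}\bigr)_S$. The swap is what aligns the two families into the standard block-encoding form, since with $\ket{r}=\tfrac{1}{\|A\|_F}\sum_i\|A_{i\star}\|\ket{i}$,
\[
(\bra{0^n}_B\bra{i}_S)\,U_R^\dagger U_L\,(\ket{0^n}_B\ket{j}_S)=\langle A_{i\star}|j\rangle\cdot\langle i|r\rangle=\frac{\overline{A_{ij}}}{\|A_{i\star}\|}\cdot\frac{\|A_{i\star}\|}{\|A\|_F}=\frac{\overline{A_{ij}}}{\|A\|_F}.
\]
The per-row normalizations telescope, leaving exactly the global factor $\|A\|_F$; in our real-matrix setting the conjugation is vacuous, and in general it is absorbed by storing conjugated phases at the leaves. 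Hence $U_R^\dagger U_L$ is a $(\|A\|_F,n+2,0)$-encoding in exact arithmetic, the two extra ancillas being the flag qubits.

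For the general $\mu_q$ statement, store $A^{(q)}$ and $(A^{(2-q)})^\dagger$ instead, and exploit the factorization $A^{(q)}_{ij}\,\overline{A^{(2-q)}_{ij}}=A_{ij}$ together with $\|A^{(q)}_{i\star}\|^2=\|A_{i\star}\|_q^q\le w_q(A)$ and the analogous column bound from $w_{2-q}(A^\dagger)$. Now $U_R$ prepares, controlled on the row index $i$, the sub-normalized amplitude state $\frac{1}{\sqrt{w_q(A)}}\sum_j A^{(q)}_{ij}\ket{j}$ padded by a garbage branch flagged on an ancilla (the padding is needed precisely because individual rows may have norm below the global maximum $\sqrt{w_q(A)}$), and $U_L$ prepares the corresponding sub-normalized column family from $(A^{(2-q)})^\dagger$ after the index swap. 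Restricting the overlap to the all-flags-zero subspace retains only the ``good'' branches, and the two sub-normalizations multiply to $1/\mu_q(A)=1/\sqrt{w_q(A)\,w_{2-q}(A^\dagger)}$, giving the claimed $(\mu_q(A),n+2,0)$-encoding; the $\|A\|_F$ case is the specialization in which $A$ is stored directly and no row sub-normalization occurs.

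Finally, I would handle the approximation parameter $\varepsilon$. Since the tree stores amplitudes and phases only to finite precision, the states prepared by $V$ and $W$ differ from the ideal ones by an amount set by the number of stored bits, and this perturbation propagates linearly through the two preparations and their overlap; bounding the resulting entrywise deviation by $\delta$ forces a block-encoding error of at most $\mathcal{O}(N\delta)$ in spectral norm, so storing $\mathcal{O}(\log(N/\varepsilon))$ bits per amplitude drives the error below $\varepsilon$ while keeping each circuit at $\mathcal{O}(\mathrm{poly}(n\log(1/\varepsilon)))$ gates. I expect the main obstacle to be bookkeeping rather than conceptual: one must fix the phase convention so that the two amplitude families multiply to exactly $A_{ij}$, verify that the sub-normalizations collapse to exactly $\mu_q(A)$ (which is why $w_q$ is defined as a \emph{maximum} over rows, guaranteeing that every row state is a valid sub-normalized state), place the index swap so that $S$ consistently carries the system index on both the input and output sides, and track the flag qubits so that the ancilla count comes out to exactly $n+2$.
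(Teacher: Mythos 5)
The paper offers no proof of this lemma at all---it is imported verbatim, with citation, from Refs.~\cite{kerenidisQuantumRecommendationSystems2016, gilyen2019quantum}---and your reconstruction is precisely the standard construction from those works (Lemma~50 of Ref.~\cite{gilyen2019quantum}): controlled row-state preparation and the row-norm superposition from the binary-tree data structure, an index swap so that $\bra{0}\bra{i}U_R^\dagger U_L\ket{0}\ket{j}$ telescopes to $A_{ij}/\|A\|_F$, padding each row state to the uniform subnormalization $\sqrt{w_q(A)}$ in the general case so the overlaps give $A^{(q)}_{ij}A^{(2-q)}_{ij}/\mu_q(A)=A_{ij}/\mu_q(A)$, and finite-precision amplitude storage absorbed into~$\varepsilon$. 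Your sketch is correct, and the only delicate points---the phase convention ensuring the two amplitude families multiply to $A_{ij}$ rather than its conjugate, and the exact $n+2$ ancilla accounting---are ones you explicitly flag and resolve in the standard way.
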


Another method that may be useful, especially for the transformer architecture is for the Gram matrix whose entries are given by the inner products.

\begin{lemma}[Block-encoding of Gram matrices by state preparation unitaries]
Let  $U_{L}$ and $U_{R}$ be state preparation unitaries acting on  $a+n$ qubits preparing the vectors $\left\{\left|\psi_{i}\right\rangle: i \in\left[2^{n}\right]-1\right\}$ and $\left\{\left|\phi_{j}\right\rangle: j \in\left[2^{n}\right]-1\right\}$ such that 
\begin{align}
    U_{L}&:|0\rangle|i\rangle \rightarrow\left|\psi_{i}\right\rangle\\
    U_{R}&:|0\rangle|j\rangle \rightarrow\left|\phi_{j}\right\rangle,
\end{align}
Then $U=U_{L}^{\dagger} U_{R}$is an $(1, a, 0)$-block-encoding of the Gram matrix A such that  $A_{i j}=\left\langle\psi_{i} | \phi_{j}\right\rangle$.    
\end{lemma}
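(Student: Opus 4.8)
The plan is to verify the claim directly against the block-encoding definition (\cref{def.blockencoding}) with the specific parameters $\alpha=1$, $\epsilon=0$. Since both the encoding factor is fixed to one and the error is claimed to be zero, it suffices to establish the exact operator identity $(\bra{0^a}\otimes I_n)U(\ket{0^a}\otimes I_n)=A$, where $U=U_L^\dagger U_R$. I would reduce the whole proof to computing the matrix elements of the top-left block of $U$ in the computational basis of the $n$-qubit register and checking they coincide with $A_{ij}=\braket{\psi_i}{\phi_j}$.

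The core step is a one-line collapse. Fix $i,j\in[2^n]-1$ and compute the $(i,j)$ entry of the block:
\begin{align}
\bra{i}\ab(\bra{0^a}\otimes I_n)U_L^\dagger U_R\ab(\ket{0^a}\otimes I_n)\ket{j}
&=\ab(\bra{0^a}\bra{i})U_L^\dagger U_R\ab(\ket{0^a}\ket{j})\notag\\
&=\ab(U_L\ket{0^a}\ket{i})^\dagger\ab(U_R\ket{0^a}\ket{j})\notag\\
&=\braket{\psi_i}{\phi_j}=A_{ij}.
\end{align}
Here I use the defining actions $U_R(\ket{0^a}\ket{j})=\ket{\phi_j}$ and $(U_L(\ket{0^a}\ket{i}))^\dagger=\bra{\psi_i}$. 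Since this holds for every pair of basis labels, the operator identity $(\bra{0^a}\otimes I_n)U(\ket{0^a}\otimes I_n)=A$ follows, so the error term in \cref{def.blockencoding} is exactly $0$.

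It remains to confirm that $\alpha=1$ is an admissible encoding factor, i.e., that $\norm{A}\leq 1$ as required by the definition. This is automatic: $A$ is the compression of the unitary $U$ to the $\ket{0^a}$ ancilla subspace, so $\norm{A}=\norm{(\bra{0^a}\otimes I_n)U(\ket{0^a}\otimes I_n)}\leq\norm{U}=1$, consistent with $\alpha\geq\norm{A}$. Taken together, the three checks show $U=U_L^\dagger U_R$ is a $(1,a,0)$-encoding of $A$.

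I do not expect a genuine obstacle here; the argument is pure bookkeeping with the definition. The only point worth a moment's care is that \cref{def.blockencoding} constrains the top-left block alone: the state-preparation unitaries are specified only on inputs with ancilla $\ket{0^a}$, and their unspecified action on $\ket{k}\ket{j}$ for $k\neq 0^a$ contributes only to the off-diagonal blocks of $U$, which the projection $\bra{0^a}\otimes I_n$ discards. Noting this explicitly is what guarantees the Gram matrix is isolated cleanly, and it is the sole conceptual (as opposed to mechanical) ingredient of the proof.
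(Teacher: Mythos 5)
Your proof is correct, and it is the standard argument: the paper actually states this lemma without proof (it is a known result in the block-encoding literature, essentially Lemma~53 of Ref.~\cite{gilyen2019quantum}), and your direct matrix-element computation $(\bra{0^a}\bra{i})U_L^\dagger U_R(\ket{0^a}\ket{j})=\braket{\psi_i}{\phi_j}$ together with the observation that a compression of a unitary has spectral norm at most $1$ is precisely the expected justification. Your closing remark --- that the unitaries' unspecified action on ancilla inputs other than $\ket{0^a}$ is irrelevant because the projection discards it --- is the right point of care, and nothing further is needed.
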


\subsection{Robust nonlinear amplitude transformation\label{appen.nonlinearamplitude}}

\begin{theorem}[Robust amplitude encoding]
    Given an $(\alpha,a,\epsilon)$-state-encoding $U_{\psi}$ of an $n$-qubit state $\ket{\psi}=\sum_{j=1}^{N} \psi_j \ket{j}$, where $\{\psi_j\}$ are real and $\norm{\psi}_2=1$,
    one can construct an $(\alpha,2a+n+2,\epsilon)$-encoding of the diagonal matrix $A=\diag(\psi_1, \dots, \psi_{N})$ with $\mathcal{O}(n)$ circuit depth and $\mathcal{O}(1)$ queries to controlled-$U$ and controlled-$U^\dagger$.
    One can also construct an $(\alpha^2, 3a+2n+2, 3\epsilon)$-encoding of diagonal matrix $A_{abs}=\diag(\psi_1^2,\dots, \psi_{N}^2)$.
\end{theorem}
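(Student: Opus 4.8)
The plan is to reduce the robust statement to the ideal ($\epsilon=0$) diagonal amplitude-loading gadget of \cite{guo2021nonlinear, rattew2023nonlinear} and then push all of the error into the $L_\infty$ discrepancy between the target amplitudes $\psi$ and the amplitudes $\psi'$ that $U_\psi$ actually prepares. By \cref{def.stateencoding} we may write $U_\psi\ket{0^{a+n}}=\tfrac1\alpha\ket{0^a}\ket{\psi'}+\ket{\perp}$, where $\ket{\psi'}=\sum_j\psi'_j\ket{j}$ satisfies $\norm{\psi'-\psi}_\infty\le\epsilon$ and $(\ket{0^a}\bra{0^a}\otimes I)\ket{\perp}=0$. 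First I would introduce an $n$-qubit system register (carrying the block index) and an $(a+n)$-qubit workspace, and define $W:=\mathrm{CNOT}^{\otimes n}\,(I_{\mathrm{sys}}\otimes U_\psi)$, where the transversal CNOTs are controlled on the system register and target the $n$-qubit part of the workspace. A direct computation then gives
\begin{equation*}
(\bra{i}\otimes\bra{0^{a+n}})\,W\,(\ket{j}\otimes\ket{0^{a+n}})=\tfrac1\alpha\,\psi'_j\,\delta_{ij},
\end{equation*}
since copying $\ket{j}$ onto the workspace and postselecting it on $\ket{0^n}$ isolates the $k=j$ amplitude, while the $\ket{\perp}$ part is annihilated by $\bra{0^a}$ on the state-preparation ancilla. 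Hence $W$ is an exact $(\alpha,a+n,0)$-encoding of $\diag(\psi')$. The doubling $a\to 2a$ and the two extra flag qubits in the statement arise from additionally applying controlled-$U_\psi^\dagger$ to uncompute the ancilla and to place the gadget in the standard block-encoding form used downstream, exactly as in \cite{guo2021nonlinear, rattew2023nonlinear}; the encoding factor, query count, and circuit depth are unaffected.

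For the robustness step I would exploit that the encoded matrix is diagonal, so its spectral norm equals its largest absolute entry. Thus
\begin{equation*}
\norm{\diag(\psi)-\diag(\psi')}=\max_{j}\abs{\psi_j-\psi'_j}=\norm{\psi-\psi'}_\infty\le\epsilon,
\end{equation*}
and since $W$ is an \emph{exact} encoding of $\diag(\psi')$, the triangle inequality together with \cref{lem:element-wise error bound} immediately promotes it to an $(\alpha,2a+n+2,\epsilon)$-encoding of $\diag(\psi)$. This is precisely why the state-encoding error is measured in $L_\infty$ rather than $L_2$: the $L_\infty$ amplitude error is exactly the spectral-norm block-encoding error for a diagonal matrix.

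For the second claim I would use $\diag(\psi^2)=\diag(\psi)^2$, reusing the prepared workspace to run the copy-and-project step a second time (equivalently, composing the diagonal encoding with itself via \cref{product.blockencoding}); this squares the encoding factor to $\alpha^2$ and, with register reuse for one of the two state preparations, yields the stated $3a+2n+2$ ancillas. The only nontrivial point is the error constant: since the construction exactly encodes $\diag(\psi'^2)$, I would bound
\begin{equation*}
\norm{\diag(\psi^2)-\diag(\psi'^2)}=\max_j\abs{\psi_j^2-\psi_j'^2}\le\max_j\abs{\psi_j-\psi'_j}\,\abs{\psi_j+\psi'_j}\le\epsilon(2+\epsilon)\le 3\epsilon,
\end{equation*}
using $\abs{\psi_j}\le 1$ and $\abs{\psi'_j}\le 1+\epsilon$ with $\epsilon\le 1$, which gives the claimed $3\epsilon$. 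I expect the main obstacle to be bookkeeping rather than conceptual: verifying that the postselection genuinely isolates $\psi'_j$ (and not $\abs{\psi'_j}^2$ or a conjugate) and that the uncomputation and flag qubits introduce no cross terms, so that the clean identity for the block entry holds on the nose — after which the two robustness bounds above are immediate.
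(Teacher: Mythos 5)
Your proposal is correct and follows essentially the same route as the paper: the paper likewise defers the diagonal block-encoding construction to Refs.~\cite{guo2021nonlinear,rattew2023nonlinear} (and Theorem~2 of \cite{mitarai2019quantum} for the squared case) and concentrates on exactly the two error bounds you derive, namely $\|\diag(\psi)-\diag(\psi')\|=\|\psi-\psi'\|_\infty\le\epsilon$ and $\max_j|\psi_j^2-\psi_j'^2|\le\epsilon(2+\epsilon)\le 3\epsilon$, which is precisely why the paper redefines state encodings with the $L_\infty$ norm. Your one genuinely valuable addition is making the CNOT-copy gadget and the annihilation of the $\ket{\perp}$ component explicit, and correctly routing all error through the exact encoding of $\diag(\psi'^2)$ rather than through \cref{product.blockencoding}, which would otherwise have produced an $\alpha$-dependent error $2\alpha\epsilon$ instead of the stated $3\epsilon$.
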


\begin{proof}

The construction is the same as Ref.~\cite{guo2021nonlinear, rattew2023nonlinear} and our focus is on the error analysis. 
The $(\alpha,a,\epsilon)$-state-encoding $U_{\psi}$ approximately prepares the state
\begin{align}
    U\ket{0}\ket{0}=\frac{1}{\alpha} \ket{0}\ket{\psi}+\sqrt{1-\alpha^2}\ket{1}\ket{\mathrm{bad}},
\end{align}
where $\ket{\mathrm{bad}}$ is a quantum state we are not interested.
By the diagonal amplitude block-encoding introduced in Ref.~\cite{guo2021nonlinear, rattew2023nonlinear},
one can approximately construct a block-encoding of $A=\diag(\psi_1,\dots,\psi_N)$.
By direct computation, one can see it is an $(\alpha,2a+n+2,\epsilon)$-encoding, where $\alpha$ is directly from the state-encoding, and the error can be obtained from the $L_{\infty}$-norm.
Let the exact block-encoded diagonal matrix be $A'$.
Note that $\|A-A'\|=\max_j |\psi_j-\psi'_j|=\|\ket{\psi}-\ket{\psi'}\|_{\infty}\leq \epsilon$.
Block-encoding of $A_{abs}$ can be constructed following Theorem $2$ in Ref.~\cite{mitarai2019quantum} and Ref.~\cite{guo2021nonlinear, rattew2023nonlinear}.
The error analysis follows $\max_{j}|\psi_j^2-\psi'_j|^2\leq \max_j|\psi_j^2-(\psi_j+\epsilon)^2|\leq 3\epsilon$.
Query complexity analysis follows the previous results.
\end{proof}

\subsection{Matrix maximum entry norm}
The standard block encoding assumption directly tells us about the matrix norm of the block-encoded matrix, i.e., $\|A\|\leq \alpha$.
With the following lemma, the condition also tells us that $\max_{i,j} |A_{ij}|\leq \alpha$, i.e., the absolute value of each element is also bounded by $\alpha$.

\begin{lemma}\label{lem:bound.maxnorm}
For a complex matrix $A\in \mathbb{C}^{n\times m}$, $\max_{i,j}|A_{ij}|\leq \|A\|$.
\end{lemma}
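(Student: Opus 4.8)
The plan is to bound each entry of $A$ by relating it to the operator-norm characterization of the spectral norm. First I would recall that, for the Euclidean vector norm $\|\cdot\|_2$, the spectral norm adopted throughout the paper (the largest singular value) admits the variational form $\|A\| = \max_{\|x\|_2 = 1}\|Ax\|_2$. Writing $\{\hat e_k\}$ for the standard orthonormal basis, the key observation is that every matrix entry is an inner product against basis vectors, namely $A_{ij} = \langle \hat e_i, A\hat e_j\rangle$, with $\|\hat e_i\|_2 = \|\hat e_j\|_2 = 1$.

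Next I would apply the Cauchy--Schwarz inequality (in its complex form, so that the modulus sits on the inner product) followed by the operator-norm bound:
\begin{align}
|A_{ij}| = \left|\langle \hat e_i, A\hat e_j\rangle\right| \le \|\hat e_i\|_2\,\|A\hat e_j\|_2 \le \|\hat e_i\|_2\,\|A\|\,\|\hat e_j\|_2 = \|A\|. \notag
\end{align}
Since this bound holds for every choice of indices $i \in [n]$ and $j \in [m]$, taking the maximum over all such pairs yields $\max_{i,j}|A_{ij}| \le \|A\|$, as claimed.

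There is essentially no substantive obstacle here; the statement is a routine consequence of the definition of the spectral norm. The only points requiring minor care are ensuring that Cauchy--Schwarz is invoked in its complex form and that the norm convention matches the paper's, i.e.\ that $\|A\|$ denotes the largest singular value (equivalently, the Euclidean-induced operator norm) rather than some other matrix norm; under that convention the chain of inequalities above is immediate and completes the proof.
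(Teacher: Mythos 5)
Your proof is correct and follows essentially the same route as the paper's: both express the entry as $A_{ij}=\langle \hat e_i, A\hat e_j\rangle$ and apply Cauchy--Schwarz with unit basis vectors. The only difference is cosmetic—you invoke the operator-norm characterization $\|A\hat e_j\|_2\le\|A\|$ directly, while the paper derives that same bound by unpacking the singular value decomposition $A=U\Sigma V^\dagger$ and bounding $\|\Sigma V^\dagger \hat e_j\|_2\le\sigma_{\max}$.
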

\begin{proof}
Let $\sigma_{\max}(A)$ be the largest singular value of $A$.
By definition, we have $\|A\|=\sigma_{\max}(A)$.
Consider the singular value decomposition $A=U\Sigma V^{\dagger}$, where $U$ and $V$ are unitaries and $\Sigma$ is a diagonal matrix.
Let $\{f_i\}_i$ and $\{g_j\}_j$ be the basis of $\mathbb{C}^{n}$ and $\mathbb{C}^m$ respectively.
Since $U$ and $V$ are unitaries, we have 
\begin{align}
\|U^\dagger f_i\|=\|V^\dagger g_j\|=1.
\end{align}
Write $v=V^\dagger g_j$. We have 
\begin{align}
\|A_{ij}\|&=|\langle  f_i, Ag_j\rangle|=|\langle  f_i, U\Sigma V^\dagger g_j \rangle| 
= |\langle U^\dagger f_i,  \Sigma V^\dagger g_j\rangle|
\leq \|U^\dagger f_i\| \|\Sigma V^\dagger g_j\|\notag\\
&=\|\ \Sigma V^\dagger g_j\|=\left(\sum_k \left(\Sigma v\right)_k^2\right)^{\frac{1}{2}}
=\left(\sum_k \left(\sum_j \Sigma_{kj} v_j\right)^2\right)^{\frac{1}{2}}
= \left(\sum_k \Sigma_{kk}^2v_k^2\right)^{\frac{1}{2}}
\notag \\
&\leq \left(\sum_k \sigma_{\max}^2 v_k^2\right)^{\frac{1}{2}}
=\sigma_{\max} \|v\|^2=\|A\|.
\end{align}
\end{proof}

\subsection{Normalized error bound}\label{appen.ineqnormalizeerror}

Here, we show some results that are useful when considering the conversion from matrix block encoding to state preparation encoding.
\begin{lemma}\label{ineq.normalizederror}
    For two $d$-dimensional vectors $\psi=(\psi_1,\dots,\psi_d)$ and $\psi'=(\psi'_1,\dots,\psi'_d)$, if $|\psi_j-\psi'_j|\leq \epsilon$ for each $j\in [d]$, we have 
    \begin{align}
        \norm[\Big]{\frac{1}{C}\psi -\frac{1}{C'}\psi'}_2 \leq \frac{2\sqrt{d}\epsilon}{C}+\sqrt{\frac{2\epsilon\sqrt{d}}{C}},
    \end{align}
    where $C=\norm{\psi}_2$ and $C'=\norm{\psi'}_2$.
\end{lemma}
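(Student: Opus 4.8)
The plan is to prove the sharper bound $\left\|\tfrac{1}{C}\psi-\tfrac{1}{C'}\psi'\right\|_2 \le \tfrac{2\sqrt{d}\,\epsilon}{C}$, from which the stated inequality follows at once, since the additional term $\sqrt{2\epsilon\sqrt{d}/C}$ is nonnegative. The entire argument rests on one telescoping decomposition of the difference of the two normalized vectors, combined with two elementary norm estimates, so I expect it to be essentially a routine calculation rather than to require any clever idea.

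First I would extract the two consequences of the hypothesis $|\psi_j-\psi'_j|\le\epsilon$. Summing squares over $j\in[d]$ gives the Euclidean estimate $\|\psi-\psi'\|_2\le\sqrt{d}\,\epsilon$. The reverse triangle inequality for the Euclidean norm then controls the two normalization constants through the same quantity: $|C-C'| = \big|\,\|\psi\|_2-\|\psi'\|_2\,\big| \le \|\psi-\psi'\|_2 \le \sqrt{d}\,\epsilon$. These are the only two facts I will need.

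Next I would insert the intermediate vector $\psi'/C$ and split the target, writing
\begin{align}
\frac{\psi}{C} - \frac{\psi'}{C'} = \frac{\psi-\psi'}{C} + \psi'\left(\frac{1}{C}-\frac{1}{C'}\right).
\end{align}
Applying the triangle inequality, using $\|\psi'\|_2 = C'$, and substituting $\tfrac{1}{C}-\tfrac{1}{C'}=\tfrac{C'-C}{CC'}$, the factor $C'$ cancels the $1/C'$, leaving the second term equal to $|C-C'|/C$. Combining with the estimates from the previous step yields
\begin{align}
\left\| \frac{\psi}{C} - \frac{\psi'}{C'} \right\|_2 \le \frac{\|\psi-\psi'\|_2 + |C-C'|}{C} \le \frac{2\sqrt{d}\,\epsilon}{C},
\end{align}
which is dominated by the claimed right-hand side.

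I do not anticipate a genuine obstacle here; the one point that repays attention is the cancellation of $C'$. A careless bound of the form $\|\psi'\|_2\,|1/C-1/C'|$ appears liable to blow up when the erroneous normalization $C'$ is small, and the clean resolution is precisely that $C'$ sits in the numerator and cancels, so the final estimate depends only on $C$. I would also remark that the square-root term $\sqrt{2\epsilon\sqrt{d}/C}$ in the stated inequality is in fact unnecessary for this $L^2$ bound: it is the signature of a looser route that estimates the overlap $\langle\psi/C,\psi'/C'\rangle$ to first order and then takes $\sqrt{2-2\langle\cdot,\cdot\rangle}$, which degrades an $O(\epsilon)$ distance into an $O(\sqrt{\epsilon})$ bound. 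The telescoping argument above avoids this and gives the sharper $O(\epsilon)$ estimate while still implying the statement as written.
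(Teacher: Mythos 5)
Your proof is correct, and it is in fact sharper than the paper's. Both arguments start from the same telescoping decomposition---the paper writes the difference as $\frac{1}{CC'}\norm{C'(\psi-\psi') + (C'-C)\psi'}_2$, which after distributing is exactly your $\frac{\psi-\psi'}{C} + \psi'\bigl(\frac{1}{C}-\frac{1}{C'}\bigr)$---and both reduce the second term to $|C-C'|/C$ with the first term bounded by $\sqrt{d}\,\epsilon/C$. The divergence is entirely in how $|C-C'|$ is controlled: you invoke the reverse triangle inequality, $|C-C'| \le \norm{\psi-\psi'}_2 \le \sqrt{d}\,\epsilon$, whereas the paper uses $|C-C'|\le\sqrt{|C^2-C'^2|}$, expands $|C^2-C'^2| \le \sum_j |\psi_j-\psi'_j|\,|\psi_j+\psi'_j| \le d\epsilon^2 + 2\epsilon\sum_j|\psi_j|$, and then applies the $L^1$--$L^2$ inequality $\sum_j|\psi_j|\le\sqrt{d}\,C$; it is this element-wise factoring step, not an overlap-to-fidelity computation as you conjectured, that produces the $\sqrt{2\epsilon\sqrt{d}/C}$ term. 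Since $\sqrt{d\epsilon^2 + 2\epsilon\sqrt{d}\,C} \ge \sqrt{d}\,\epsilon$, your bound strictly dominates the paper's, confirming your observation that the $O(\sqrt{\epsilon})$ term is an artifact of the method rather than intrinsic. The same substitution of the reverse triangle inequality would also tighten the companion bound in \cref{lem:inf_norm_bound_for_normalized_vectors} to $(\sqrt{d}+1)\epsilon/C$, which would in turn improve the $\epsilon$-dependence propagated through the paper's encoding error analyses (e.g., in \cref{attention.softmax} and \cref{thmTransformer}), so the sharpening is not merely cosmetic.
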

\begin{proof}
    By direct computation, we have 
\begin{align}
    \norm[\Big]{\frac{1}{C}\sum_{j\in \mathcal S} \psi_j\ket{j} - \frac{1}{C'}\sum_{j\in \mathcal S} \psi_j'\ket{j}}_2
    &= \frac{1}{CC'} \norm[\Big]{C' \sum_{j\in \mathcal S} \psi_j\ket{j} - C \sum_{j\in \mathcal S} \psi'_j\ket{j}}_2 \notag\\
    &= \frac{1}{CC'} \norm[\Big]{C' \Bigl(\sum_{j\in \mathcal S} \psi_j\ket{j} - \sum_{j\in \mathcal S} \psi_j'\ket{j}\Bigr) + (C'-C) \sum_{j\in \mathcal S} \psi'_{j}\ket{j}}_2\notag\\
    &\leq \frac{1}{CC'}\left(\norm[\Big]{C' \Bigl(\sum_{j\in \mathcal S} \psi_{j}\ket{j} - \sum_{j\in \mathcal S} \psi'_{j}\ket{j}\Bigr)}_2 + \norm[\Big]{(C'-C) \sum_{j\in \mathcal S} \psi'_{j}\ket{j}}_2 \right),
\end{align}
where the inequality comes from the triangle inequality.
The first term can be easily bounded by $\sqrt{d}\epsilon/C$ since for each $j\in [d]$, we have $|\psi_j-\psi_j'|\leq \epsilon $.
Note that for nonnegative real numbers $a$ and $b$, we have $|a-b|=|(\sqrt{a}-\sqrt{b})(\sqrt{a}+\sqrt{b})|=|\sqrt{a}-\sqrt{b}||\sqrt{a}+\sqrt{b}|\geq |\sqrt{a}-\sqrt{b}|^2$, hence $|\sqrt{a}-\sqrt{b}|\leq \sqrt{|a-b|}$.
The second term can be bounded with the following computation:
\begin{align}
    \frac{1}{C}\abs{C-C'}
    &\leq \frac{\sqrt{\abs{C^2-C'^2}}}{C}\notag\\
    &\leq \frac{\sqrt{\abs{\sum_{j\in \cS} (\psi_{j}^2-\psi'^2_{j})}}}{C}\notag\\
    &\leq \frac{\sqrt{\sum_{j\in \cS} \abs{(\psi_j-\psi'_{j})(\psi_{j}+\psi'_{j})}}}{C}\notag\\
    &\leq \frac{\sqrt{\epsilon\sum_{j\in \cS} \abs{\psi_{j}+\psi'_{j}}}}{C}\notag\\
    &\leq \frac{\sqrt{\epsilon\sum_{j\in \cS} (2\abs{\psi_{j}}+\epsilon)}}{C}\notag\\
    &\leq \frac{\sqrt{d\epsilon^2 +2\epsilon\sum_{j\in \cS} \abs{\psi_{j}}}}{C}\notag\\
    &\leq \frac{\sqrt{d}\epsilon}{C}+\frac{\sqrt{2\epsilon\sum_{j\in \cS} \abs{\psi_{j}}}}{C}\notag\\
    &\leq \frac{\sqrt{d}\epsilon}{C}+\sqrt{\frac{2\epsilon\sqrt{d}}{C}},
\end{align}
where the last inequality is from the inequality between $L_1$ and $L_2$ norm.
Combining these two terms together, we achieve our final result.
\end{proof}

\begin{lemma}
    For two $d$-dimensional vectors $\psi=(\psi_1,\dots,\psi_d)$ and $\psi'=(\psi'_1,\dots,\psi'_d)$, if $|\psi_j-\psi'_j|\leq \epsilon$ for each $j\in [d]$, we have 
    \begin{align}
        \norm[\Big]{\frac{1}{C}\psi - \frac{1}{C'}\psi'}_\infty \leq \frac{(\sqrt{d}+1)\epsilon}{C} +\sqrt{\frac{2\epsilon\sqrt{d}}{C}},
    \end{align}
    where $C=\norm{\psi}_2$ and $C'=\norm{\psi'}_2$.
\end{lemma}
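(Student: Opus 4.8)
The plan is to reduce the $L_\infty$ statement to the $L_2$ lemma (\cref{ineq.normalizederror}) proved immediately above, since the genuinely delicate quantity—the normalization mismatch $\frac{1}{C}\abs{C-C'}$—has already been controlled there, and the only difference in the $L_\infty$ setting is that the ``numerator'' contribution is cheaper.

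First I would split the error vector entrywise into a part measuring the raw discrepancy $\psi_j-\psi'_j$ and a part measuring the change of normalization:
\[
\frac{1}{C}\psi_j - \frac{1}{C'}\psi'_j = \frac{1}{C}(\psi_j - \psi'_j) + \psi'_j\Bigl(\frac{1}{C}-\frac{1}{C'}\Bigr).
\]
Taking the maximum over $j\in[d]$ and invoking the triangle inequality for $\norm{\cdot}_\infty$ then gives
\[
\norm[\Big]{\tfrac{1}{C}\psi - \tfrac{1}{C'}\psi'}_\infty \leq \frac{1}{C}\norm{\psi-\psi'}_\infty + \abs[\Big]{\tfrac{1}{C}-\tfrac{1}{C'}}\,\norm{\psi'}_\infty.
\]

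Next I would bound the two terms separately. The first term is at most $\epsilon/C$ directly from the hypothesis $\abs{\psi_j-\psi'_j}\le\epsilon$ for all $j$. For the second term I would use the elementary norm inequality $\norm{\psi'}_\infty \le \norm{\psi'}_2 = C'$ together with the identity $\abs{\tfrac{1}{C}-\tfrac{1}{C'}} = \frac{\abs{C-C'}}{CC'}$; the factor $C'$ then cancels and the second term is at most $\frac{\abs{C-C'}}{C}$.

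The only nontrivial input that remains is a bound on $\frac{1}{C}\abs{C-C'}$, and this is precisely the chain of inequalities already established inside the proof of \cref{ineq.normalizederror}, namely $\frac{1}{C}\abs{C-C'} \le \frac{\sqrt{d}\epsilon}{C} + \sqrt{\frac{2\epsilon\sqrt{d}}{C}}$, so I would simply cite it. Substituting the two bounds yields
\[
\frac{\epsilon}{C} + \frac{\sqrt{d}\epsilon}{C} + \sqrt{\frac{2\epsilon\sqrt{d}}{C}} = \frac{(\sqrt{d}+1)\epsilon}{C} + \sqrt{\frac{2\epsilon\sqrt{d}}{C}},
\]
which is the claim. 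I do not expect any real obstacle here: the key estimate on the normalization mismatch is reused verbatim, and the improvement relative to the $L_2$ version is transparent—the $L_\infty$ numerator contribution is only $\epsilon/C$ rather than the $\sqrt{d}\epsilon/C$ that arises from summing $d$ entrywise errors in the $L_2$ norm, which is exactly what lowers the leading coefficient from $2\sqrt{d}$ to $\sqrt{d}+1$.
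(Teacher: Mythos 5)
Your proposal is correct and follows essentially the same route as the paper's own proof: the paper's substitution $\psi_j=\psi'_j+\Delta_j$ into $\frac{1}{CC'}\abs{C'\psi_j-C\psi'_j}$ is exactly your entrywise decomposition, it likewise uses $\abs{\psi'_j}\le C'$ to reduce the bound to $\frac{\epsilon}{C}+\frac{\abs{C-C'}}{C}$, and it likewise closes by citing the estimate $\frac{\abs{C-C'}}{C}\le \frac{\sqrt{d}\epsilon}{C}+\sqrt{\frac{2\epsilon\sqrt{d}}{C}}$ from the proof of \cref{ineq.normalizederror}. Your closing remark about why the leading coefficient improves from $2\sqrt{d}$ to $\sqrt{d}+1$ relative to the $L_2$ version is a correct observation, though not part of the paper's argument.
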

\begin{proof}
    Note that the $L_\infty$ distance can be written as
    \begin{equation}
        \norm[\Big]{\frac{1}{C}\psi -\frac{1}{C'}\psi'}_\infty = \max_{j\in [d]} \abs[\Big]{\frac{\psi_j}{C} - \frac{\psi'_j}{C'}}.
    \end{equation}
    We consider each element individually as
    \begin{equation}
        \abs[\Big]{\frac{\psi_j}{C} - \frac{\psi'_j}{C'}} = \frac{1}{CC'}\abs{C'\psi_j - C\psi'_j}.
    \end{equation}
    Having $\max_{j\in [d]} \abs{\psi_j-\psi'_j} \leq \epsilon$, we can write $\psi_j = \psi'_j + \Delta_j$ where $\abs{\Delta_j}\leq \epsilon$. Substituting $\psi_j$ in $\abs{C'\psi_j - C\psi'_j}$ we have
    \begin{align}
        \abs{C'\psi_j - C\psi'_j} &= \abs{C'\psi'_j + C'\Delta_j - C\psi'_j}\\
        &= \abs{(C'-C)\psi'_j + C'\Delta_j}\\
        &\leq \abs{(C'-C)\psi'_j} + C'\epsilon.
    \end{align}
    Then we can write
    \begin{align}
        \max_{j\in [d]} \abs[\Big]{\frac{\psi_j}{C} - \frac{\psi'_j}{C'}} &= \max_{j\in [d]} \frac{1}{CC'}\abs{C'\psi_j - C\psi'_j}\\
        &\leq \frac{C'\epsilon + \max_{j\in[d]} \abs{(C'-C)\psi'_j}}{CC'} \\
        &\leq \frac{\epsilon}{C} + \frac{\abs{C'-C}C'}{CC'}\\
        &= \frac{\epsilon}{C} + \frac{\abs{C'-C}}{C}\\
        &\leq \frac{(\sqrt{d}+1)\epsilon}{C} +\sqrt{\frac{2\epsilon\sqrt{d}}{C}}.
    \end{align}
    The bound of $\frac{\abs{C'-C}}{C}$ directly follows from the proof of \cref{ineq.normalizederror}.
\end{proof}

\begin{lemma}
    For two $d$-dimensional vectors $\psi=(\psi_1,\dots,\psi_d)$ and $\psi'=(\psi'_1,\dots,\psi'_d)$, if $|\psi_j-\psi'_j|\leq \epsilon$ and $\psi_j, \psi'_j \leq \Gamma \in \cO(1)$ for each $j\in [d]$, we have 
    \begin{align}
        \norm[\Big]{\frac{1}{C}\psi - \frac{1}{C'}\psi'}_\infty \leq \frac{\epsilon}{C} + \frac{\Gamma\sqrt{d}\epsilon}{CC'} + \frac{\Gamma}{C'}\sqrt{\frac{2\epsilon\sqrt{d}}{C}},
    \end{align}
    where $C=\norm{\psi}_2$ and $C'=\norm{\psi'}_2$.
\end{lemma}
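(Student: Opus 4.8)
The plan is to mirror the argument of \cref{ineq.normalizederror} and its $L_\infty$ counterpart almost verbatim, the only change being that the extra hypothesis $\psi_j,\psi'_j\leq\Gamma$ lets me replace the crude entry bound $\abs{\psi'_j}\leq C'$ (all one can say from $\norm{\psi'}_2=C'$) by the sharper $\abs{\psi'_j}\leq\Gamma$. This is precisely what trades a factor of $C'$ for a factor of $\Gamma$ in the cross term of the estimate.

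First I would write the distance entrywise, $\norm{\tfrac1C\psi-\tfrac1{C'}\psi'}_\infty=\max_{j\in[d]}\abs{\tfrac{\psi_j}{C}-\tfrac{\psi'_j}{C'}}$, and put each term over the common denominator $CC'$, reducing the task to bounding $\abs{C'\psi_j-C\psi'_j}$ uniformly in $j$. Writing $\psi_j=\psi'_j+\Delta_j$ with $\abs{\Delta_j}\leq\epsilon$ and regrouping gives $C'\psi_j-C\psi'_j=(C'-C)\psi'_j+C'\Delta_j$, so the triangle inequality yields $\abs{C'\psi_j-C\psi'_j}\leq\abs{C'-C}\abs{\psi'_j}+C'\epsilon\leq\Gamma\abs{C'-C}+C'\epsilon$, where the final step is the single place the hypothesis $\abs{\psi'_j}\leq\Gamma$ is used. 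Dividing by $CC'$ produces $\max_j\abs{\tfrac{\psi_j}{C}-\tfrac{\psi'_j}{C'}}\leq\frac{\epsilon}{C}+\frac{\Gamma}{C'}\cdot\frac{\abs{C-C'}}{C}$.

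It then remains only to control $\abs{C-C'}/C$, and for this I would reuse the chain of inequalities already established inside the proof of \cref{ineq.normalizederror}, which shows $\frac{1}{C}\abs{C-C'}\leq\frac{\sqrt d\,\epsilon}{C}+\sqrt{\tfrac{2\epsilon\sqrt d}{C}}$ purely from $\abs{\psi_j-\psi'_j}\leq\epsilon$ together with the $L_1$--$L_2$ norm inequality; nothing in that derivation invokes the new hypothesis, so it carries over unchanged. Substituting this bound and distributing the factor $\Gamma/C'$ gives exactly $\frac{\epsilon}{C}+\frac{\Gamma\sqrt d\,\epsilon}{CC'}+\frac{\Gamma}{C'}\sqrt{\tfrac{2\epsilon\sqrt d}{C}}$, which is the claim.

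There is no genuine obstacle here: the estimate is elementary, and the only nontrivial ingredient, the two-sided bound on $\abs{C-C'}$ obtained via $\abs{C-C'}\leq\sqrt{\abs{C^2-C'^2}}$, is imported from the earlier lemma. The one point worth flagging is merely the bookkeeping that the improvement over the $L_\infty$ bound stated just above stems entirely from the entrywise cap $\Gamma$, which is typically far smaller than $C'=\norm{\psi'}_2$ for the near-uniform amplitude vectors arising after the nonlinear-amplitude-transformation step, so the cross term is correspondingly tighter.
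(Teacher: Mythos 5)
Your proposal is correct and matches the paper's own proof essentially step for step: the same entrywise decomposition $\psi_j=\psi'_j+\Delta_j$ over the common denominator $CC'$, the same use of the cap $\abs{\psi'_j}\leq\Gamma$ to sharpen the cross term, and the same imported bound $\frac{1}{C}\abs{C-C'}\leq\frac{\sqrt{d}\,\epsilon}{C}+\sqrt{\frac{2\epsilon\sqrt{d}}{C}}$ from the proof of \cref{ineq.normalizederror}. No gaps; your closing remark correctly identifies that the only change from the preceding $L_\infty$ lemma is trading the bound $\abs{\psi'_j}\leq C'$ for $\Gamma$.
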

\begin{proof}
    Note that the $L_\infty$ distance can be written as
    \begin{equation}
        \norm[\Big]{\frac{1}{C}\psi -\frac{1}{C'}\psi'}_\infty = \max_{j\in [d]} \abs[\Big]{\frac{\psi_j}{C} - \frac{\psi'_j}{C'}}.
    \end{equation}
    We consider each element individually as
    \begin{equation}
        \abs[\Big]{\frac{\psi_j}{C} - \frac{\psi'_j}{C'}} = \frac{1}{CC'}\abs{C'\psi_j - C\psi'_j}.
    \end{equation}
    Having $\max_{j\in [d]} \abs{\psi_j-\psi'_j} \leq \epsilon$, we can write $\psi_j = \psi'_j + \Delta_j$ where $\abs{\Delta_j}\leq \epsilon$. Substituting $\psi_j$ in $\abs{C'\psi_j - C\psi'_j}$ we have
    \begin{align}
        \abs{C'\psi_j - C\psi'_j} &= \abs{C'\psi'_j + C'\Delta_j - C\psi'_j}\\
        &= \abs{(C'-C)\psi'_j + C'\Delta_j}\\
        &\leq \abs{(C'-C)\psi'_j} + C'\epsilon.
    \end{align}
    Then we can write
    \begin{align}
        \max_{j\in [d]} \abs[\Big]{\frac{\psi_j}{C} - \frac{\psi'_j}{C'}} &= \max_{j\in [d]} \frac{1}{CC'}\abs{C'\psi_j - C\psi'_j}\\
        &\leq \frac{C'\epsilon + \max_{j\in[d]} \abs{(C'-C)\psi'_j}}{CC'} \\
        &\leq \frac{\epsilon}{C} + \frac{\Gamma\abs{C'-C}}{CC'}\\
        &= \frac{\epsilon}{C} + \frac{\Gamma\sqrt{d}\epsilon}{CC'} + \frac{\Gamma}{C'}\sqrt{\frac{2\epsilon\sqrt{d}}{C}}.
    \end{align}
\end{proof}

\subsection{Polynomial approximation of exponential function}

Here we describe how to approximate the exponential function efficiently by a polynomial for $x\in [-1,1]$.
\begin{lemma}\label{approximation.exp}
    For $x\in [-1,1]$, the function $f(x)\coloneqq  e^{x}$ can be approximated with error bound $\epsilon$ with an $\mathcal{O}(\log(1/\epsilon))$-degree polynomial function.
\end{lemma}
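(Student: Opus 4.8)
The plan is to approximate $e^x$ by its truncated Taylor polynomial and to control the remainder uniformly on $[-1,1]$. Concretely, I would take the degree-$\ell$ polynomial
\begin{equation}
p_\ell(x) \coloneqq \sum_{k=0}^{\ell} \frac{x^k}{k!},
\end{equation}
and bound the pointwise error by the tail of the exponential series. Since the series converges absolutely on any bounded interval, the cleanest route is to estimate termwise using $\abs{x}\leq 1$: for every $x\in[-1,1]$,
\begin{equation}
\abs[\Big]{e^x - p_\ell(x)} = \abs[\Big]{\sum_{k=\ell+1}^{\infty} \frac{x^k}{k!}} \leq \sum_{k=\ell+1}^{\infty} \frac{1}{k!}.
\end{equation}

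Next I would estimate the factorial tail. The ratio of consecutive terms satisfies $\tfrac{1/(k+1)!}{1/k!} = \tfrac{1}{k+1} \leq \tfrac{1}{2}$ for all $k \geq \ell+1$ (with $\ell \geq 0$), so the tail is dominated by a geometric series:
\begin{equation}
\sum_{k=\ell+1}^{\infty} \frac{1}{k!} \leq \frac{1}{(\ell+1)!}\sum_{j=0}^{\infty} 2^{-j} = \frac{2}{(\ell+1)!}.
\end{equation}
Combining this with the elementary bound $(\ell+1)! \geq 2^{\ell}$ (valid for $\ell \geq 1$, since the product $\prod_{k=2}^{\ell+1} k$ has $\ell$ factors each at least $2$), I obtain $\abs{e^x - p_\ell(x)} \leq 2^{1-\ell}$ uniformly on $[-1,1]$.

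Finally, to guarantee error at most $\epsilon$ it suffices to choose $\ell$ so that $2^{1-\ell} \leq \epsilon$, i.e.\ $\ell \geq 1 + \log_2(1/\epsilon)$, giving $\ell = \mathcal{O}(\log(1/\epsilon))$ as claimed. I expect no genuine obstacle here: the argument is elementary, and the only care required is to make the tail estimate uniform in $x$ (handled by the termwise $\abs{x}\leq 1$ bound) and to fix a clean factorial lower bound. I would remark that the super-exponential decay of $1/k!$ in fact yields the sharper degree $\mathcal{O}\bigl(\log(1/\epsilon)/\log\log(1/\epsilon)\bigr)$ via Stirling's formula, but the stated $\mathcal{O}(\log(1/\epsilon))$ bound already suffices for all downstream uses, such as the softmax construction in \cref{attention.softmax}.
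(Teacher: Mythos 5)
Your proof is correct and follows essentially the same route as the paper's: truncate the Taylor series of $e^x$, bound the tail termwise using $\abs{x}\leq 1$, and dominate the factorial tail by a geometric series. The only cosmetic difference is in extracting the final degree bound—the paper stops at $1/k!\leq\epsilon$ and invokes Stirling's approximation, while you use the explicit inequality $(\ell+1)!\geq 2^{\ell}$, which is if anything slightly more self-contained.
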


\begin{proof}
    Consider the Taylor expansion of  $f(x)=\sum_{j=0}^{\infty}\frac{x^j}{j!}$.
    Let $f_k(x)\coloneqq \sum_{j=0}^{k}\frac{x^j}{j!}$.
    To achieve $|f_k(x)-f(x)|\leq \epsilon$ for $|x|\leq 1$,
    \begin{align}
        |f_k(x)-f(x)|&= \left |\sum_{j=k+1}^{\infty} \frac{x^j}{j!}\right |\notag \leq |\sum_{j=k+1}^{\infty} \frac{1}{j!}|\notag = \left |\sum_{j=1}^{\infty} \frac{1}{(j+k)!}\right|\notag\\
        (\text{Assume } k>2)\quad &\leq \frac{1}{k!} \left |\sum_{j=1}^{\infty} \frac{1}{2^j}\right| \notag\leq \frac{1}{k!}\leq \epsilon.
    \end{align}
    It suffices to set $k=\mathcal{O}(\log(\frac{1}{\epsilon}))$, which can be seen by the Stirling's approximation.
\end{proof}

\subsection{Quantum softmax via nonlinear amplitude transformation\label{attention.nonlinear}}

In the following, we provide how to achieve the quantum softmax via the nonlinear amplitude transformation method, introduced in \cite{guo2021nonlinear, rattew2023nonlinear}.
Note that this is possible if we focus on the $j$-th token.

\begin{theorem}[Quantum softmax via nonlinear amplitude transformation\label{softmax.nonlinear}]
    Given an $(\alpha,a,\epsilon)$-encoding $U_A$ of a matrix $A\in \mathbb{R}^{N\times N}$, a positive integer $d\in \NN^+$, and an index $j\in [N]$,
    one can prepare a $\ab\big(1, \mathcal{O}(a+n),\mathcal{O}\left(\sqrt[4]{\frac{N\epsilon}{Z\alpha}} \right))$-state-encoding of the state
    \[
    \ket{A_j}\coloneqq \sum_{k=1}^N \sqrt{\mathrm{softmax}\left(A/\alpha\right)_{jk}} \ket{k}=
    \frac{1}{\sqrt{Z_j}}\sum_{k=1}^N \exp\circ\ab\Big(\frac{A}{2\alpha})_{jk}\ket{k},\] 
    by using $U_A$ for $\cO\ab\big(\sqrt{\frac{N}{Z_j}}\ell)$ times, where $Z_j=\sum_{k=1}^N \exp\circ(A/\alpha)_{jk}$, and $\ell=\mathcal{O}\ab\big(\log(\frac{\alpha}{\epsilon}))$.
\end{theorem}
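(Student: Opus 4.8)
The plan is to implement the softmax square-root state by exponentiating the amplitudes of the $j$-th row \emph{directly}, rather than implementing $\exp$ element-wise on the whole matrix as in \cref{attention.softmax}. Since only one row is needed, this row-wise route works on an $N$-dimensional amplitude vector instead of an $N\times N$ matrix, so it avoids the $N\delta$ Frobenius blow-up that forced the extra factor of $n$ in the degree there, and it keeps the ancilla count at $\mathcal O(a+n)$ instead of $\mathcal O(\ell(a+n))$. This is exactly why the same target state is reached here with $\ell=\mathcal O(\log(\alpha/\epsilon))$.

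First I would extract the $j$-th row into amplitudes. Taking the self-adjoint $U_A^\dagger$ block-encodes $A^\top$, so $U_A^\dagger(I_a\otimes U_j)$ with $U_j:\ket{0}\to\ket{j}$ prepares a state-encoding of $\frac{1}{C}\sum_k A_{jk}\ket{k}$, where $C=\|A_{j\star}\|_2$ and the subnormalization is $\alpha/C$. Feeding this into \cref{block encoding.amplitudes} (robust amplitude encoding) yields a block-encoding of the diagonal matrix $\diag(A_{j1}/C,\dots,A_{jN}/C)$ whose normalized eigenvalues are exactly $A_{jk}/\alpha\in[-1,1]$ (each entry lies in $[-1,1]$ by \cref{lem:element-wise error bound}).

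Next I would apply \cref{theorem.qsvt} with a polynomial approximating $f(x)=\frac{1}{2\sqrt e}\,e^{x/2}$; the prefactor $1/(2\sqrt e)$ keeps $|f|\le\frac{1}{2}$ on $[-1,1]$ as QSVT demands, and \cref{approximation.exp} gives degree $\ell=\mathcal O(\log(\alpha/\epsilon))$. This produces a block-encoding of $\diag(\tfrac{1}{2\sqrt e}e^{A_{jk}/(2\alpha)})$. Applying it to the uniform superposition $H^{\otimes n}\ket{0^n}$ places amplitude $\tfrac{1}{2\sqrt e\sqrt{N}}e^{A_{jk}/(2\alpha)}$ on $\ket{k}$; the success amplitude is $\Theta(\sqrt{Z_j/N})$, so $\mathcal O(\sqrt{N/Z_j})$ rounds of amplitude amplification produce the normalized target $\ket{A_j}$, for a total of $\mathcal O(\sqrt{N/Z_j}\,\ell)$ calls to $U_A$.

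The main obstacle is the error bookkeeping across three lossy stages. The input element error $\epsilon$ becomes a state-encoding error $\mathcal O(\epsilon/C)$, which \cref{block encoding.amplitudes} carries unchanged into the diagonal block-encoding and which \cref{theorem.qsvt} amplifies to $\mathcal O(\ell\sqrt{\epsilon/\alpha})$ per diagonal entry; the polynomial-approximation error must be tuned to the same order, which is what ultimately pins down $\ell=\mathcal O(\log(\alpha/\epsilon))$. Finally the renormalization performed implicitly by amplitude amplification converts this entrywise error on the unnormalized vector $(e^{A_{jk}/(2\alpha)})_k$ (of $\ell_2$-norm $\Theta(\sqrt{Z_j})$) into an $L_\infty$ error on $\ket{A_j}$ of order $\mathcal O\big(\sqrt{\ell\sqrt{\epsilon/\alpha}\cdot\sqrt{N}/\sqrt{Z_j}}\big)=\widetilde{\mathcal O}(\sqrt[4]{N\epsilon/(Z_j\alpha)})$ by \cref{lem:inf_norm_bound_for_normalized_vectors}. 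Composing these three steps so they land exactly on the stated fourth-root bound, together with checking that the indefinite parity of $f$ is admissible in the eigenvalue-transformation form of QSVT, is the delicate part.
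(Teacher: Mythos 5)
Your proposal is correct and matches the paper's own proof essentially step for step: extracting the $j$-th row via $U_A^\dagger(I\otimes U_j)$, converting to a diagonal block encoding with robust amplitude encoding (\cref{block encoding.amplitudes}), applying QSVT to a rescaled $e^{x/2}$ with degree $\ell=\mathcal{O}(\log(\alpha/\epsilon))$, acting on the uniform superposition, amplifying $\mathcal{O}(\sqrt{N/Z_j})$ times, and invoking \cref{lem:inf_norm_bound_for_normalized_vectors} for the fourth-root error bound. The only cosmetic difference is your prefactor $\frac{1}{2\sqrt{e}}$ in place of the paper's $\frac{1}{e}$, which in fact satisfies the $|f|\leq\frac{1}{2}$ requirement of \cref{theorem.qsvt} more cleanly.
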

\begin{proof}
Note that the block encoding of a matrix can be considered as a state encoding of its columns.
 We have 
 \begin{align}
     U_A^\dagger(I\otimes U_j)\ket{0}\ket{0}\approx\frac{1}{\alpha}\sum_{k} A_{jk}\ket{0}\ket{k}+\sqrt{1-\frac{1}{\alpha^2}\sum_{k} A_{jk}^2}\ket{1}\ket{\perp},
 \end{align}
 where $U_j:\ket{0}\rightarrow \ket{j}$ and $\ket{\perp}$ is some arbitrary state.
 By using \cref{block encoding.amplitudes}, one can construct a $(\alpha, 2a+n+2,\epsilon)$-encoding of matrix $\mathrm{diag}(A_{j1},\dots, A_{jN})$ by using $\mathcal{O}(1)$ times of $U_A^\dagger (I\otimes U_j)$.
 With \cref{theorem.qsvt}, one can prepare a $(1, 2a+n+4, 4\log(1/\delta)\sqrt{\epsilon/\alpha}+2\delta)$-encoding of $\frac{1}{e}\mathrm{diag}(\exp(A_{j1}/2\alpha),\dots, \exp(A_{jN}/2\alpha))$, where $\delta$ is error bound for both approximating $\frac{1}{e}e^{x/2}$ and computing circuit description.
 Here, we take $\delta=\mathcal{O}(\sqrt{\epsilon/\alpha})$ such that block encoding error can be bounded by $\mathcal{O}(\sqrt{\epsilon/\alpha})$.
 This implies that we take $\ell=\mathcal{O}(\log(\alpha/\epsilon))$-degree polynomial to approximate the function.
 Let this constructed circuit be $U_{\exp(A)}$.
 We have 
 \begin{align}
     U_{\exp(A)}(I\otimes H^{\otimes n})\ket{0}\ket{0}\approx\frac{1}{e\sqrt{N}}\ket{0}\sum_{k} \exp\left(\frac{A_{jk}}{2\alpha}\right)\ket{k}+ \ket{\widetilde{\perp}},
 \end{align}
where $\ket{\widetilde{\perp}}$ is a arbitrary unnormalized state.
One can see that it is a $(\mathcal{O}(\sqrt{N/Z}),\mathcal{O}(a+n), err )$-state encoding of the final state,
where by \cref{lem:inf_norm_bound_for_normalized_vectors} $err=\mathcal{O}\left(\sqrt[4]{\frac{N\epsilon}{Z\alpha}} \right)$.
One can further use amplitude amplitude $\mathcal{O}(\sqrt{N/Z})$ times to achieve a $(1,\mathcal{O}(a+n), err )$-state-encoding.
\end{proof}

To achieve the masked self-attention with the nonlinear amplitude transformation method follows similarly to the element-wise function case.

Here we make a comparison between \cref{attention.softmax} and \cref{softmax.nonlinear}.
Note that for a $N\times N$ matrix, in most cases the block encoding factor $\alpha$ is bounded by $\mathcal{O}({\rm poly} (N))$. This means that $\mathcal{O}(\log(\frac{\alpha}{\epsilon}))=\mathcal{O}(n\log(\frac{1}{\epsilon}))$.
One can see that the element-wise function method has the same query complexity with the nonlinear amplitude transformation method and has a better dependency for the initial error, yet it requires more ancilla qubits.
Regardless of the ancilla qubits, the element-wise function is a stronger method than the nonlinear amplitude transformation, since it can implement functions onto each element of a matrix, while the nonlinear amplitude transformation can only implement functions onto each element of a state.

\subsection{General case of quantum residual connection\label{appen.residual}}

We first provide the theorem for only quantum residual connection, which might be an additional interest.

\begin{prob}[Quantum residual connection\label{prob.resnet}]
Let $c>0$ and $g(x)$ be a real $k$-degree polynomial function.
Given an $(\alpha,a,\epsilon)$-state-encoding $U$ of a quantum state $\sum_{j=1}^{d} x_j\ket{j}$, where $\{x_j\}$ are real and $\norm{x}_2=1$, prepare a state-encoding of the state
\begin{align}
\frac{1}{\sqrt{\sum_{j=1}^d (c\cdot g(x)_j+x_j)^2}}\sum_{j=1}^{d} (c\cdot g(x)_j+x_j) \ket{j}.\label{eq.residual.target}
\end{align}
\end{prob}

\begin{theorem}[Quantum residual connection]

Consider the setting of \cref{prob.resnet}.
For the polynomial $g(x)$, let $g_{\max} \coloneqq \max_{x\in [-1,1]} \ab|g(\alpha x)|$, one can prepare an $\bigl(\mathcal{O}\ab\big(\sqrt{N}(\alpha+2cg_{\max})/C), a+n+4, \mathcal{O}\ab\big((cg_{\max}(4\ell\sqrt{\epsilon}+\delta)+\alpha \epsilon)/C)\bigr)$-state-encoding of the state $\frac{1}{C}\sum_{k=1}^N(c\cdot g(x_k)+x_k)\ket{k}$, where $C^2\coloneqq \sum_{k=1}^N(c\cdot g(x_k)+x_k)^2$.
Further, if $g(x)/x$ is bounded with $\eta\coloneqq \max_{x\in [-1,1]}\ab|g(\alpha x)/x|$, one can prepare an $\bigl(\mathcal{O}\ab(\alpha(1+2c\eta)/C), a+n+4, \mathcal{O}\ab(c\eta(4\ell\sqrt{\epsilon}+\delta)/C)\bigr)$-state-encoding instead. The preparation uses $\mathcal{O}(\ell)$ times of $U_x$ and $U_x^\dagger$.
\end{theorem}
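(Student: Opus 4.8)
The plan is to move the computation onto the \emph{amplitudes} of $\ket{\psi}=\sum_k x_k\ket{k}$ by the standard three-stage recipe used throughout this work: promote the state-encoding to a diagonal block encoding, apply the polynomial $g$ through \cref{theorem.qsvt}, recombine with the residual branch by a linear combination of block encodings, and finally read the result back out as a state-preparation encoding by acting on a flat state.

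Concretely, I would first apply \cref{block encoding.amplitudes} to $U_x$ to obtain an $(\alpha,\cdot,\epsilon)$-encoding of the diagonal matrix $D_x\coloneqq\diag(x_1,\dots,x_N)$ using $\mathcal{O}(1)$ controlled calls to $U_x$ and $U_x^\dagger$. Because $\norm{x}_2=1$ forces $|x_k|\le 1$, the eigenvalues of $D_x/\alpha$ lie in $[-1,1]$ and QSVT applies. I would then run \cref{theorem.qsvt} with the rescaled degree-$\ell$ polynomial $f(y)\coloneqq g(\alpha y)/(2g_{\max})$, which satisfies $|f(y)|\le\tfrac12$ on $[-1,1]$ by the very definition of $g_{\max}$; this returns a block encoding of $f(D_x/\alpha)=\diag(g(x_1),\dots,g(x_N))/(2g_{\max})$, i.e.\ a $(2g_{\max},\cdot,\cdot)$-encoding of $D_g\coloneqq\diag(g(x_k))$, at the cost of $\mathcal{O}(\ell)$ further queries and QSVT error $4\ell\sqrt{\epsilon/\alpha}+\delta$.

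Next I would build the combined diagonal $D_{\mathrm{res}}\coloneqq\diag\bigl(c\,g(x_k)+x_k\bigr)$ by \cref{LCU.blockencoding}. Since the two inputs carry different subnormalizations, I would view $U_{D_g}$ and $U_{D_x}$ as factor-$1$ encodings of $D_g/(2g_{\max})$ and $D_x/\alpha$ and absorb the scale difference into a two-element state-preparation pair encoding coefficients proportional to $(2cg_{\max},\alpha)$; this gives an encoding of $D_{\mathrm{res}}$ with factor $\alpha+2cg_{\max}$. To extract the vector I would multiply $U_{D_{\mathrm{res}}}$ by $H^{\otimes n}$ via \cref{product.blockencoding} and select the first column, realizing $\tfrac{1}{\sqrt N}\sum_k(c\,g(x_k)+x_k)\ket{k}$ in the flagged subspace. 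Transferring the operator-norm error to the entrywise error by \cref{lem:element-wise error bound} and reading off the state-encoding factor $\beta_{\mathrm{state}}=\sqrt N(\alpha+2cg_{\max})/C$ then yields the claimed $\bigl(\mathcal{O}(\sqrt N(\alpha+2cg_{\max})/C),\,a+n+4,\,\mathcal{O}((cg_{\max}(4\ell\sqrt\epsilon+\delta)+\alpha\epsilon)/C)\bigr)$-state-encoding, where the $L_\infty$ error scales as the block-encoding error divided by $C$ with no further square-root loss.

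For the sharpened bound I would use that $g(\alpha x)/x$ can be bounded on $[-1,1]$ only when $g$ has no constant term, so that $h(x)\coloneqq g(x)/x$ is itself a degree-$(\ell-1)$ polynomial and $c\,g(x_k)+x_k=x_k\bigl(c\,h(x_k)+1\bigr)$. I would then run QSVT with $h(\alpha y)/(2\eta)$ to get a $(2\eta,\cdot,\cdot)$-encoding of $\diag(h(x_k))$, combine it with the identity by \cref{LCU.blockencoding} into a $(1+2c\eta)$-encoding of $\diag(c\,h(x_k)+1)$, and---crucially---apply this diagonal to the state $\ket{\psi}$ itself (via $U_x$, \cref{product.blockencoding}) rather than to the flat state. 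The product is exactly $\sum_k x_k\bigl(c\,h(x_k)+1\bigr)\ket{k}=\sum_k(c\,g(x_k)+x_k)\ket{k}$, so no $1/\sqrt N$ subnormalization is incurred and the factor improves to $\mathcal{O}(\alpha(1+2c\eta)/C)$. The main obstacle is precisely this bookkeeping: reconciling the mismatched subnormalizations $2g_{\max}$ and $\alpha$ inside the state-preparation pair, and tracking how the QSVT error $\propto\sqrt{\epsilon/\alpha}$ and the input error $\epsilon$ propagate through the LCU and the final $1/C$ rescaling; the genuinely new ingredient, rather than an obstacle, is the factorization $g=x\cdot h$, which trades the generic $\sqrt N$ overhead of applying a diagonal to the uniform superposition for a direct action on $\ket{\psi}$.
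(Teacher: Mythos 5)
Your proposal follows essentially the same route as the paper's proof: robust amplitude encoding of $\diag(x_1,\dots,x_N)$ via \cref{block encoding.amplitudes}, QSVT with the rescaled polynomial $g(\alpha y)/(2g_{\max})$, an LCU with coefficients proportional to $(\alpha, 2cg_{\max})$, and extraction by applying the result to $H^{\otimes n}\ket{0}$ in the general case versus to $U_x\ket{0}$ after the factorization $g(x)=x\cdot h(x)$ in the no-constant-term case, which is exactly the paper's $g'(x)=g(\alpha x)/x$ trick. The only cosmetic difference is in the final error bookkeeping, where you read off the $L_\infty$ error directly against the exact normalization $C$ (avoiding the square-root term of \cref{lem:inf_norm_bound_for_normalized_vectors} that the paper invokes and then absorbs into the big-O), which yields the same stated bound.
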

\begin{proof}
We first discuss the general case.
Given the state-encoding $U_x$, by \cref{block encoding.amplitudes}, one can construct an $(\alpha, a+n+2,\epsilon)$-encoding of $A=\diag(x_1,\dots,x_N)$.
Let $g_{\max}\coloneqq \max_{x\in [-1,1]} \ab|g(\alpha x)|$, then by \cref{theorem.qsvt} with function $g(x)/(2g_{\max})$, one can construct a $(2g_{\max}, a+n+4, 2g_{\max}(4 \ell\sqrt{\epsilon}+\delta))$-encoding of the matrix $\diag(g(x_1),\dots, g(x_N))$. Note that the normalization factor $2g_{\max}$ is to satisfy the requirements of \cref{theorem.qsvt}.

By using the linear combination of block-encoded matrices as \cref{LCU.blockencoding} with state preparation pair $(P,P)$, where $P:\ket{0}\to 1/\sqrt{\alpha+2cg_{\max}}(\sqrt{\alpha}\ket{0}+\sqrt{2cg_{\max}}\ket{1})$,
one can construct an $(\alpha+2cg_{\max}, a+n+5, 2cg_{\max}(4\ell\sqrt{\epsilon}+\delta)+\alpha \epsilon)$-encoding $U_g$ of the matrix $\diag(c\cdot g(x_1)+x_1,\dots , c\cdot g(x_N)+x_N)$. One can easily verify that $U_g(I\otimes H_n)$ is a state-encoding of the target state $\frac{1}{C}\sum_{k=1}^N(c\cdot g(x_k)+x_k)\ket{k}$.
We have 
\begin{align}
U_g(I\otimes H_n)\ket{0}\ket{0}&=\frac{1}{\sqrt{N}(\alpha+2cg_{\max})}\ket{0}\sum_{k=1}^N \psi_k \ket{k}+\ket{\widetilde{\perp}}\notag\\
&= \frac{C'}{\sqrt{N}(\alpha+2cg_{\max})}\ket{0}\frac{1}{C'}\sum_{k=1}^N \psi_k \ket{k}+\ket{\widetilde{\perp}},
\end{align}
where $C'=\norm{\psi}_{2}$, $\norm{\psi-(c\cdot g(x)+x)}_{\infty}\leq 2cg_{\max}(4\ell\sqrt{\epsilon}+\delta)+\alpha \epsilon$, and $\ket{\widetilde{\perp}}$ is a unnormalized orthogonal state.
For simplicity, let $\epsilon_g\coloneqq 2cg_{\max}(4\ell\sqrt{\epsilon}+\delta)+\alpha \epsilon$.
By \cref{lem:inf_norm_bound_for_normalized_vectors}, the final error bound is 
\[\frac{\epsilon_g}{C}+\frac{(cg_{\max}+1)}{C'} \ab\Bigg(\frac{\sqrt{N}\epsilon_g}{C}+\sqrt{\frac{2\sqrt{N}\epsilon_g}{C}})=\mathcal{O}\ab((cg_{\max}(4\ell\sqrt{\epsilon}+\delta)+\alpha \epsilon)/C).\]

Now we consider the specific case, i.e., when the polynomial $g(x)$ has no constant term.
Note that for a polynomial $g(x)$, if $g(x)/x$ is bounded on the interval across $x=0$, it cannot have the constant term.
Instead of implementing function $g(x)/(2g_{\max})$ with quantum singular value transformation, here we implement $g'(A)/2\eta$ instead, where $g'(x)\coloneqq g(\alpha x)/x$ and $\eta\coloneqq \max_{x\in [-1,1]}|g'(x)|$.
By \cref{LCU.blockencoding} with state preparation pair $(P',P')$, where $P':\ket{0}\rightarrow 1/(\sqrt{1+2c\eta})(\ket{0}+\sqrt{2c\eta}\ket{1})$ to construct a $(1+2c\eta,a+n+4, 2c\eta(4\ell\sqrt{\epsilon}+\delta))$-encoding of diagonal matrix $I+c\cdot g'(A)$.
Let this block-encoding unitary be $U_{g'}$ and $\epsilon_{g'}\coloneqq 2c\eta(4\ell\sqrt{\epsilon}+\delta)$.
We have $U_{g'}(I\otimes U_x)$ is the $\left(\frac{\alpha(1+2c\eta)}{C''}, a+n+4, \frac{\epsilon_{g'}}{C}+\frac{(c\eta+1)}{C''}\left(\frac{\sqrt{N}\epsilon_{g'}}{C}+\sqrt{\frac{2\sqrt{N}\epsilon_{g'}}{C}}\right)\right)$-state-encoding of the target state, where $C''$ is the $L_2$ norm for the exact prepared state.
\end{proof}

For the quantum residual connection and layer normalization, in the main paper, we only mention a specific case, i.e., when $\gamma=1/\sqrt{d}$ and $\beta=0$.
If we consider the general layer normalization, the quantum state mentioned in \cref{prob.residualwithlayer} should be 
\begin{align}
    \frac{1}{C}\sum_{k=1}^d \mathrm{LN}_{\gamma,\beta}(G^{\mathrm{soft}}_{j},S_{j})_k\ket{k}, 
\end{align}
where $C$ is the normalization factor.
Since vector $\beta$ can be implemented on quantum computers via \cref{lemma.statepreparation}, and taking sum via the linear combination of unitaries, here we omit $\beta$.
Then the representation of the quantum state can be simplified as 
\begin{align}
    \frac{\gamma}{\sqrt{d}}\sum_{k=1}^d \mathrm{LN}_{\gamma,0}(G^{\mathrm{soft}}_{j},S_{j})_k\ket{k}.
\end{align}
Note that compared to the case which we consider in the main paper, there is an additional factor $\sqrt{\gamma}/\sqrt{d}\eqqcolon \gamma'$, since now the $L^2$-norm is $\gamma'$.
Now we describe how this factor will affect our analysis.
If we continue to implement the feedforward network, we need to implement the function $\mathrm{GELU}(\frac{1}{\gamma'}\cdot)$ instead of $\mathrm{GELU}(\cdot)$.
By \cref{thm.gelu}, the degree of the polynomial for approximating the $\mathrm{GELU}$ function will increase $\mathcal{O}(\frac{1}{\gamma'})$.
For the second residual connection and layer normalization which is after the feedforward network, this factor does not affect the scaling for implementing this block, but the output state will become
\begin{align}
    \gamma'\sum_{k=1}^d \mathrm{Transformer}(S,j)\ket{k}.
\end{align}
If one wants to obtain the information via quantum state tomography using \cref{thm.tomography} with final precision $\mathcal{O}(\epsilon)$, one needs to set $\delta=\mathcal{O}(\epsilon\gamma')$ in  \cref{thm.tomography}.
An specific case is when $\gamma'=1/\sqrt{d}$, i.e., $\gamma=1$.
Under such case, our results in \cref{thmTransformer} will have another factor $\sqrt{d}$.
Note that this does not affect our result as $N$ is the dominant factor rather than $d$.

\end{document}